\definecolor{winered}{rgb}{0.5, 0.0, 0.0}
\definecolor{navyblue}{rgb}{0.0, 0.0, 0.5}
\newcommand{\email}[1]{\href{mailto:#1}{\textcolor{black}{\nolinkurl{#1}}}}
\Crefname{figure}{Fig.}{Figs.}
\colorlet{MutedCyan}{cyan!50!black!20}
\definecolor{SkyBlue}{RGB}{100, 180, 220}
\definecolor{TextHighlightBlue}{RGB}{72,120,208}
\newcommand{\dquote}[1]{``#1''}
\newcommand{\code}{\texttt}
\newcommand{\note}[1]{{#1}}
\newcommand{\ZY}[1]{{\color{purple}[ZY: #1]}}
\NewDocumentCommand{\delbyzy}{m O{ZY}}{
    \sout{#1}
    \textsuperscript{\color{purple}[#2]}
}
\newcommand{\canopus}{\textsc{Canopus}}
\newcommand{\fullNameOfCanopus}{\textbf{Can}onical-\textbf{O}ptimized \textbf{P}lacement \textbf{U}tility \textbf{S}uite}
\newcommand{\sabre}{\textsc{Sabre}}
\newcommand{\toqm}{\textsc{TOQM}}
\newcommand{\bqskit}{\textsc{BQSKit}}
\newcommand{\qiskit}{\textsc{Qiskit}}
\newcommand{\tket}{\textsc{TKet}}
\newcommand{\Can}{\mathrm{Can}}
\newcommand{\Uthree}{\mathrm{U3}}
\newcommand{\XX}{\mathrm{XX}}
\newcommand{\YY}{\mathrm{YY}}
\newcommand{\ZZ}{\mathrm{ZZ}}
\newcommand{\CX}{\mathrm{CX}}
\newcommand{\CZ}{\mathrm{CZ}}
\newcommand{\iSWAP}{\mathrm{iSWAP}}
\newcommand{\SQiSW}{\mathrm{\sqrt{iSWAP}}}
\newcommand{\ECP}{\mathrm{ECP}}
\newcommand{\SWAP}{\mathrm{SWAP}}
\newcommand{\CPhase}{\mathrm{CPhase}}
\newcommand{\pSWAP}{\mathrm{pSWAP}}
\newcommand{\countCost}{C_{\mathrm{count}}}
\newcommand{\depthCost}{C_{\mathrm{depth}}}
\newcommand{\CXISA}{\code{CX}}
\newcommand{\HetISA}{\code{Het}}
\newcommand{\ZZPhaseISA}{\code{ZZPhase}}
\newcommand{\ZZPhaseWithMirrorISA}{\code{ZZPhase\_}}
\newcommand{\SQiSWISA}{\code{SQiSW}}
\newcommand{\iSWAPISA}{\code{iSWAP}}
\newcommand{\SQiSWWithMirrorISA}{\code{SQiSW\_}}
\newcommand{\StabISA}{\code{CX}-\code{iSWAP}}
\newtheorem{theorem}{Theorem}\Crefname{theorem}{Theorem}{Theorems}  % 定理编号形如 "1.1", "1.2"
\newtheorem{definition}{Definition}\Crefname{definition}{Definition}{Definitions}  % 定义编号形如 "1.1", "1.2"
\newtheorem{lemma}{Lemma}\Crefname{lemma}{Lemma}{Lemmas}  % 引理编号形如 "1.1", "1.2"
\Crefname{corollary}{Corollary}{Corollaries}  % 推论编号形如 "1.1", "1.2"
\def\BibTeX{{\rm B\kern-.05em{\sc i\kern-.025em b}\kern-.08em
    T\kern-.1667em\lower.7ex\hbox{E}\kern-.125emX}}
\begin{document}

% Ensure letter paper
\pdfpagewidth=8.5in
\pdfpageheight=11in

%%%%%%%%%%%---SETME-----%%%%%%%%%%%%%
\newcommand{\iscasubmissionnumber}{320}
%%%%%%%%%%%%%%%%%%%%%%%%%%%%%%%%%%%%

\pagenumbering{arabic}

\iffalse

\twocolumn
\input{cover_letter.tex}
\clearpage

\pagenumbering{arabic}
\setcounter{page}{1}
\setcounter{section}{0}

\fi

%%%%%%%%%%%---SETME-----%%%%%%%%%%%%%
\title{Unifying Qubit Routing Across Diverse Quantum ISAs via Canonical Representation
\thanks{{\faEnvelope[regular]} Corresponding author: \email{chenjianxin@tsinghua.edu.cn}.}
}
% \author{\normalsize{ISCA 2026 Submission
    % \textbf{\#\iscasubmissionnumber} -- Confidential Draft -- Do NOT Distribute!!}}
%%%%%%%%%%%%%%%%%%%%%%%%%%%%%%%%

\author{
    \IEEEauthorblockN{
        Zhaohui Yang\IEEEauthorrefmark{1},
        Kai Zhang\IEEEauthorrefmark{2}\IEEEauthorrefmark{3},
        Xinyang Tian\IEEEauthorrefmark{4},
        Xiangyu Ren\IEEEauthorrefmark{5},
        Yingjian Liu\IEEEauthorrefmark{6},
        Yunfeng Li\IEEEauthorrefmark{7},\\
        Dawei Ding\IEEEauthorrefmark{8}\IEEEauthorrefmark{9},
        Jianxin Chen\textsuperscript{\faEnvelope[regular]}\IEEEauthorrefmark{2},
        Yuan Xie\IEEEauthorrefmark{1}
    }

    \IEEEauthorblockA{\textit{\IEEEauthorrefmark{1}Department of Electronic and Computer Engineering, \href{https://ror.org/00q4vv597}{The Hong Kong University of Science and Technology}, Hong Kong}}
    \IEEEauthorblockA{\textit{\IEEEauthorrefmark{2}Department of Computer Science and Technology, \href{https://ror.org/03cve4549}{Tsinghua University}, Beijing 100084, China}}
    \IEEEauthorblockA{\textit{\IEEEauthorrefmark{3}Department of Intelligent Computing, \href{https://www.pcl.ac.cn}{Pengcheng Laboratory}, Guangdong 518066, China}}
    \IEEEauthorblockA{\textit{\IEEEauthorrefmark{4}Institute for Interdisciplinary Information Sciences, \href{https://ror.org/03cve4549}{Tsinghua University}, Beijing 100084, China}}
    \IEEEauthorblockA{\textit{\IEEEauthorrefmark{5}Institute for Computer System Architecture, \href{https://ror.org/01nrxwf90}{The University of Edinburgh}, Edinburgh EH8 9AB, UK}}
    \IEEEauthorblockA{\textit{\IEEEauthorrefmark{6}Instituut-Lorentz for Theoretical Physics, \href{https://ror.org/027bh9e22}{Leiden University}, 2300 RA Leiden, The Netherlands}}
    \IEEEauthorblockA{\textit{\IEEEauthorrefmark{7}Department of Integrated Circuit, \href{https://ror.org/00d2w9g53}{Shenzhen Polytechnic University}, Guangdong 518055, China}}
    \IEEEauthorblockA{\textit{\IEEEauthorrefmark{8}Center for Mathematics and Interdisciplinary Sciences, \href{https://ror.org/013q1eq08}{Fudan University}, Shanghai 200433, China}}
    \IEEEauthorblockA{\textit{\IEEEauthorrefmark{9}\href{https://www.simis.cn}{Shanghai Institute for Mathematics and Interdisciplinary Sciences}, Shanghai 200433, China}}
}

% \author{
% \IEEEauthorblockN{Zhaohui Yang}
% \IEEEauthorblockA{
% \textit{HKUST} \\
% Hong Kong}
% \and
% \IEEEauthorblockN{Kai Zhang}
% \IEEEauthorblockA{
% \textit{Tsinghua University} \\
% Beijing, China}
% \and
% \IEEEauthorblockN{Xinyang Tian}
% \IEEEauthorblockA{
% \textit{Tsinghua University} \\
% Beijing, China}
% \and
% \IEEEauthorblockN{Xiangyu Ren}
% \IEEEauthorblockA{
% \textit{The University of Edinburgh} \\
% Edinburgh, UK}
% \and
% \IEEEauthorblockN{Yingjian Liu}
% \IEEEauthorblockA{
% \textit{Leiden University} \\
% Leiden, The Netherlands}
% \and
% \IEEEauthorblockN{Yunfeng Li}
% \IEEEauthorblockA{
% \textit{The University of Hong Kong} \\
% Hong Kong SAR, China}
% \and
% \IEEEauthorblockN{Dawei Ding}
% \IEEEauthorblockA{
% \textit{Fudan University} \\
% Beijing, China}
% \and
% \IEEEauthorblockN{Jianxin Chen}
% \IEEEauthorblockA{
% \textit{Tsinghua University} \\
% Beijing, China}
% \and
% \IEEEauthorblockN{Yuan Xie}
% \IEEEauthorblockA{
% \textit{HKUST} \\
% Hong Kong}
% }

\maketitle
\thispagestyle{plain}
\pagestyle{plain}

%%%%%% -- PAPER CONTENT STARTS-- %%%%%%%%

\begin{abstract}

    Qubit mapping/routing is a critical stage in compilation for both near-term and fault-tolerant quantum computers, yet existing scalable methods typically impose several times the routing overhead in terms of circuit depth or duration. This inefficiency stems from a fundamental disconnect: compilers rely on an abstract routing model (e.g., three-$\mathrm{CX}$-unrolled $\mathrm{SWAP}$ insertion) that completely ignores the idiosyncrasies of native gates supported by physical devices.

    Recent hardware breakthroughs have enabled high-precision implementations of diverse instruction set architectures (ISAs) beyond standard $\mathrm{CX}$-based gates. Advanced ISAs involving gates such as $\mathrm{\sqrt{iSWAP}}$ and $\mathrm{ZZ}(\theta)$ gates offer superior circuit synthesis capabilities and can be realized with higher fidelities. However, systematic compiler optimization strategies tailored to these advanced ISAs are lacking.

    To address this, we propose \canopus, a unified qubit mapping/routing framework applicable to diverse quantum ISAs. Built upon the canonical representation of two-qubit gates, \canopus\ centers on qubit routing to perform deep co-optimization in an ISA-aware approach. \canopus\ leverages the two-qubit canonical representation and the monodromy polytope theory to model the synthesis cost for more intelligent $\mathrm{SWAP}$ insertion during qubit routing. We also formalize the commutation relations between two-qubit gates through the canonical form, providing a generalized approach to commutativity-based optimization. Experiments show that \canopus\ consistently reduces routing overhead by 15\%-35\% compared to state-of-the-art methods across various backend ISAs and device topologies. More broadly, this work establishes a coherent method for co-exploration of program patterns, quantum ISAs, and hardware topologies, yielding concrete guidelines for hardware-software co-design. This is the first practical demonstration of how to efficiently utilize advanced quantum ISAs, opening the door to designing more powerful and synergistic quantum systems.

    % We have for the first time demonstrated that advanced quantum ISAs can be efficiently utilized within a unified routing framework, paving the way for more effective co-design of quantum software and hardware.
    %  Our work also presents a coherent method to simultaneously take into consideration the quantum program structure, the ISA, and the hardware topology. For the first time, we have demonstrated that efficient routing is possible for more advanced quantum ISAs, paving the way for more effective co-design of quantum software and hardware.

\end{abstract}

\begin{IEEEkeywords}
Quantum Computing, Qubit Routing, Compiler, Instruction Set Architecture, Co-Design.
\end{IEEEkeywords}

\section{Introduction}\label{sec:introduction}

Quantum computing is a revolutionary computational paradigm leveraging quantum mechanical principles such as superposition and entanglement of qubit states~\cite{nielsen2010quantum}. It has grown rapidly in recent decades due to the potential speedup in tasks such as integer factorization~\cite{shor1994algorithms}, solving linear equations~\cite{harrow2009quantum}, and simulation of quantum systems~\cite{lloyd1996universal}. 

\begin{figure}[tbp]
    \centering
    \includegraphics[width=\columnwidth]{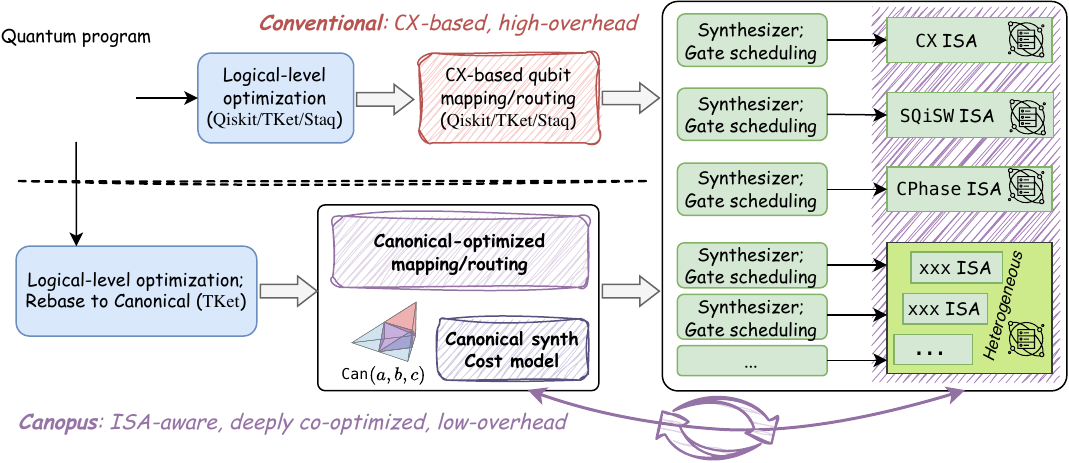}
    \caption{Compilation workflows by means of conventional approaches (top) and \canopus\ (bottom) targeting diverse quantum ISAs. \canopus\ integrates the synthesis cost model (monodromy polytopes within the Weyl chamber) to consider backend ISA properties during the routing stage, enabling deeply co-optimized, ISA-aware compilation across heterogeneous hardware backends. \canopus\ routing operates in the 2Q canonical representation while the specific synthesis is completed by the backend synthesizer.}
    %  \canopus\ exhibits a ISA-aware and deep co-optimized to achieve lower routing overhead.
    \label{fig:motivation}
\end{figure}

Holistic benchmarks of quantum computers such as quantum volume~\cite{cross2019validating} are predicated on concurrent advancements in both hardware and software. Recently, numerous systematic techniques regarding compiler optimization and architecture design have been presented to push the limit of hardware performance. Quantum compilers play a pivotal role in this process, translating high-level programs into executable instructions, usually the native single-qubit (1Q) and two-qubit (2Q) gates on realistic quantum hardware. This typically involves several stages: (1) compiling programs into basic quantum gates, (2) performing hardware-agnostic (logical-level) circuit optimization, (3) resolving backend topology constraints via qubit placement and routing, and (4) converting circuits to native gates for further optimization and scheduling. The primary goal of compiler optimization is to lower the 2Q gate count and circuit depth while resolving backend constraints, with a particular emphasis on 2Q gates due to their significantly higher error rates compared to 1Q gates.

For mainstream quantum platforms such as superconducting qubits~\cite{krantz2019quantum}, 2Q gates can only operate between nearest-neighbor physical qubit pairs (e.g., Google's devices with 2D square topology~\cite{arute2019quantum}, IBM's devices with 2D heavy-hex topology~\cite{chamberland2020topological}). Consequently, qubit placement and $ \SWAP $-based routing are crucial for resolving this connectivity constraint by dynamically remapping logical qubits to physical ones by inserting $ \SWAP $ gates acting on adjacent physical qubit pairs. This introduces a routing overhead that typically increases the gate count and circuit depth by a factor of 2--5$\times$ relative to the pre-mapped circuits when using state-of-the-art (SOTA) scalable routing methods~\cite{li2019tackling,zulehner2018efficient,zhang2021time,liu2023tackling}. Therefore, mitigating this routing overhead remains a central and long-standing challenge in compiler optimization. 
% \note{Importantly, the core principle we advocate---evaluating routing permutations contextually via ISA-aware synthesis instead of assigning a fixed $ \SWAP $ decomposition cost---can potentially generalize to non-$\SWAP$ movement primitives in future work, provided corresponding cost models are established.}

% Within the scope of NISQ and low-level fault-tolerant compilation, we target static, nearest-neighbor superconducting topologies, as opposed to dynamically reconfigurable architectures (e.g., neutral-atom shuttling),

% In the scope of NISQ-era and low-level fault-tolerant compiler optimization, especially targeting static coupling graphs with nearest-neighbor connectivity in superconducting systems that differ from architectures with dynamically reconfigurable connectivity (e.g., neutral-atom shuttling), 

\note{Within the scope of NISQ and low-level fault-tolerant compilation especially for static, nearest-neighbor superconducting topologies---as opposed to dynamically reconfigurable systems like neutral atoms---}most studies on qubit routing rely on a simplified routing model, where circuit cost is quantified by the $ \CX $-based gate count and circuit depth while each $ \SWAP $ gate is unrolled into three $ \CX $ gates according to the textbook pattern $ \SWAP_{q_0,q_1}=\CX_{q_0,q_1}\CX_{q_1,q_0}\CX_{q_0,q_1} $. However, this $ \CX $-centric view is misaligned with the physical reality of modern quantum devices. Although quantum algorithms are typically expressed in terms of $ \CX $ gates, the underlying hardware may not execute native $ \CX $-equivalent gates, nor does this gate cost or circuit cost quantification method accurately reflect the true operational cost. Indeed, beyond the native support for $ \CX $-equivalent gates (e.g., $ \CZ $~\cite{krantz2019quantum}, Cross-Resonance~\cite{rigetti2010fully}, Mølmer-Sørensen~\cite{bruzewicz2019trapped}), modern quantum hardware increasingly features diverse native 2Q basis gates in recent years. These alternative basis gates, or the abstracted instruction set architectures (ISAs) in a narrow sense, can be more powerful than $ \CX $-equivalent gates in terms of synthesis capabilities \note{(i.e., the efficiency of decomposing arbitrary two-qubit unitaries into hardware-native basis gates)} and fidelity of realization, such as $ \SQiSW $~\cite{huang2023quantum}, the $ \iSWAP $-family and $ \CX $-family fractional gates~\cite{mckinney2024mirage,qiskitXXDecomposer}, and heterogeneous basis gates~\cite{peterson2022optimal,mckinney2024mirage}. With such ISAs, $ \SWAP $ can be implemented with a lower cost than three $ \CX $ gates or even be natively realized with high fidelity~\cite{wei2024native,chen2025efficient,nguyen2024programmable}. Therefore, the simplified routing model completely ignores the backend ISA properties, severely limiting the potential of compiler optimization. Furthermore, the absence of systematic compiler optimization methods across these diverse (even complex, heterogeneous) ISAs has prevented the community from fully exploiting their power and exploring the rich software-hardware co-design space.

In our work, we propose a unified qubit mapping/routing framework \canopus\ (\fullNameOfCanopus) tailored to diverse quantum ISAs. Unlike conventional $ \CX $-based routing approaches, \canopus\ is fundamentally ISA-aware. As illustrated in \Cref{fig:motivation}, it considers the properties of the target ISA by formulating an appropriate cost model to facilitate deep co-optimization of routing and synthesis. By means of the canonical 2Q gate representation~\cite{zhang2003geometric}, \canopus\ fully exploits the synthesis capabilities of the given ISA \note{}. This approach demonstrates that advanced ISAs can achieve significantly lower routing overhead than conventional models suggest.

The main ideas of \canopus\ are as follows: \ding{172} Significant optimization opportunities emerge when native gate synthesis costs are directly incorporated into the qubit routing process. For instance, synthesizing a 2Q block and a subsequent $\SWAP$ with the same qubit pair acted on as a single composite operation is often more efficient than synthesizing them individually. \ding{173} Expanding the quantum ISA is crucial for boosting the performance of real-world quantum applications. For example, the fractional $ \ZZ(\theta) $ gate set widely adopted by hardware vendors (e.g., IBM~\cite{ibmFractionalGates}, Quantinuum~\cite{quantinuumArbitraryAngleGates}, IonQ~\cite{ionqPartialGates}) enables more efficient execution of chemistry simulation kernels within which many 2-local Pauli rotations are involved. The combination of $ \CX $ and $ \iSWAP $ gates have been demonstrated to benefit stabilizer circuits to protect error-corrected qubit information~\cite{zhou2024halma}. \ding{174} The monodromy polytope theory~\cite{peterson2020fixed} based on the canonical representation of 2Q gates~\cite{zhang2003geometric} provides a formal, universal, and quantitative description of the 2Q synthesis cost for arbitrary quantum ISAs, establishing a foundation for unified compiler optimization. Guided by these insights, \canopus\ performs intelligent $ \SWAP $ insertion during qubit routing to holistically minimize post-mapping circuit cost (in terms of both gate count and depth) given any quantum ISA, thus performing deep routing-synthesis co-optimization with significantly lower routing overhead induced. Importantly, while \canopus\ is ISA-aware, it always operates on the canonical-form circuits, and the gate/circuit cost quantification via monodromy polytope is independent of backend's specific ISA rebase implementation. In this sense, \canopus\ offers LLVM-style compiler optimization.

% Our framework can be extended to integrate more fine-grain hardware information such as qubit-specific basis gate fidelities.

Experimental results demonstrate that \canopus\ consistently provides 15\%-35\% reduction (in terms of both gate count and depth) of routing overhead compared to other SOTA methods across representative quantum ISAs, \emph{including the conventional \CXISA\ ISA}. This cross-ISA comparison also reveals some consistent or program-specific and topology-specific guidelines for hardware-software co-design. 
% Furthermore, our case studies of real-machine experiments for QFT kernel execution \DD{What's that?} on 1D chain topology and the end-to-end QEC circuit simulation demonstrate the practical superiority of \canopus\ in both NISQ and fault-tolerant applications. 
Source code and data are available on \href{https://github.com/Youngcius/canopus}{GitHub}~\cite{canopusGitHub}.
% Source code and data are available via the \canopusGitHub\footnote{\url{https://anonymous.4open.science/r/canopus-isca2026-FC02/}}.
Our work makes the following key contributions:

% \note{Our work addresses the \dquote{Babel Tower dilemma} in quantum compilation by establishing a canonical language for diverse two-qubit gates, enabling unified optimization across heterogeneous quantum ISAs.} 

% \ding{182} We propose \canopus, the first unified qubit routing framework applicable to diverse quantum ISAs. Unlike conventional methods fixed on a $\CX$-based routing paradigm, \canopus\ is fundamentally ISA-aware and operates on the canonical-form circuits, able to exploit the unique capabilities of any given quantum ISA.
% designed to operate across diverse quantum ISAs. Unlike conventional methods fixed on a $\CX$-based model, \canopus\ is fundamentally ISA-aware, enabling it to exploit the unique synthesis capabilities of any given hardware.
  
\ding{182} We utilize the canonical 2Q gate representation and the monodromy polytope theory to quantify costs of 2Q gates and the overall circuit. This formal approach accurately guides synthesis-routing co-optimization and cross-ISA evaluation. 

\ding{183} We formalize the analysis of commutation relations between arbitrary 2Q canonical gates that share one qubit. This offers a generalized commutativity-based optimization mechanism, moving beyond those tailored only for $\CX$ gates~\cite{liu2022not}.
% \item We propose sophisticated heuristic algorithms and data structures that balance the gate count related circuit cost and circuit depth related circuit cost.
  
\ding{184} We conduct comprehensive experiments across a wide range of real-world benchmarks, hardware topologies, and representative ISAs, showing that \canopus\ consistently reduces routing overhead by 15\%-35\% compared to scalable SOTA methods. Our results also yield holistic guidelines for the co-design of quantum programs, ISAs, and hardware.

\ding{185} We confirm that theoretically expressive ISAs exhibit superior performance to the conventional \CXISA\ ISA, challenging the conclusions of prior works~\cite{kalloor2024quantum}. We demonstrate some co-design guidelines for ISA-program-topology co-exploration. %Furthermore, \canopus\ confirms that the routing-synthesis co-optimization is a unified, highly-effective compiler optimization paradigm to utilize advanced quantum ISAs.

Our case studies, including the real-machine QFT kernel execution and the end-to-end QEC circuit simulation, unequivocally showcase \canopus' superiority in both near-term and fault-tolerant applications. For example, on the task of mapping QFT on 1D chain topology, \canopus\ finds the provably optimal routing scheme, surpassing the results previously reported as optimal in prior work~\cite{zhang2021time}; and experiments on IBM's QPUs demonstrate that, compared to \qiskit, \canopus\ reduces errors by an average of 26.89\% and 34.98\% for the $\CZ$ and $\ZZ(\theta)$ gate sets, respectively.

\section{Background}\label{sec:background}

% \begin{figure}[tbp]
%     \centering
%     \begin{minipage}[t]{0.48\columnwidth}
%         \centering
%         \includegraphics[height=0.5\columnwidth,trim={0.2cm 0 0 0.2cm},clip]{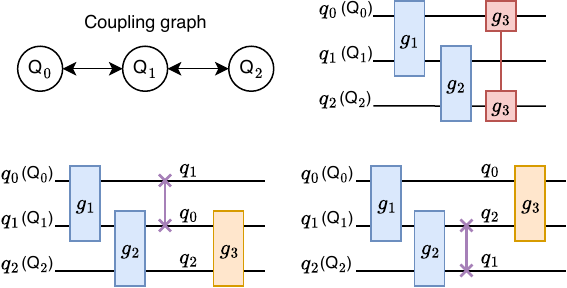}
%         \caption{\small Mapping/routing to resolve physical-qubit topology constraints via $ \SWAP $ insertion.}
%         \label{fig:qubit_mapping}
%     \end{minipage}
%     \hfill
%     \begin{minipage}[t]{0.48\columnwidth}
%         \centering
%         \includegraphics[height=0.5\columnwidth,trim={0.8cm 0.8cm 0.8cm 0.8cm}]{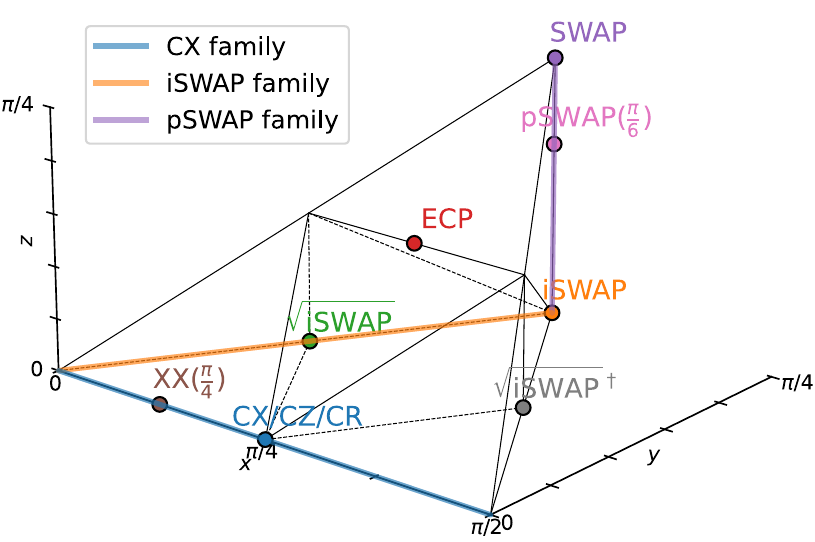}
%         \caption{\small Geometric illustration of canonical gates confined to the Weyl chamber.}
%         \label{fig:weyl_chamber}
%     \end{minipage}
% \end{figure}

% execute logical, long-range 2Q gates on physical qubits, through finding a good initial logical-to-physical layout  and dynamically remapping qubits by inserting $ \SWAP $ gate to route qubit states to near-neighbor positions step by step.

\subsection{Qubit mapping/routing}

\begin{figure}[tbp]
    \centering
    \includegraphics[width=0.9\columnwidth]{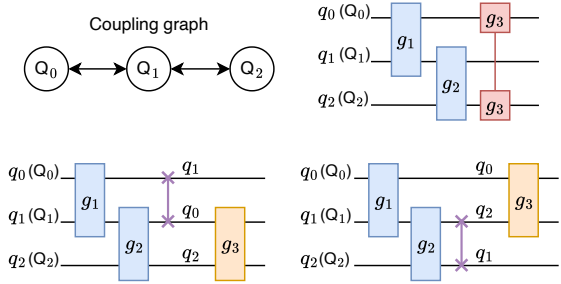}
    \caption{Mapping/routing to resolve topology constraints via $ \SWAP $ insertion. With the initial mapping $ \{q_i$: $Q_i\} $ (upper right), $ g_3 $ is not hardware compliant. Both $\SWAP_{q_0, q_1}$ and $ \SWAP_{q_1,q_2} $ are sufficient to make $ g_3 $ executable.}
    \label{fig:qubit_mapping}
\end{figure}

Real quantum hardware typically has connectivity constraints, whereas algorithms often assume arbitrary interactions. To execute quantum circuits on topology-constrained hardware, logical qubits must first be mapped to physical qubit positions. This is called the initial mapping. In most cases, even an optimal initial mapping cannot guarantee all logical 2Q gates are mapped on physically connected qubit pairs. The common solution is to dynamically change logical-to-physical qubit mappings by inserting $ \SWAP $ gates, as a $ \SWAP $ gate exchanges state subspaces of two operand qubits, such that non-adjacent logical qubit states can be moved next to each other. Therefore, the qubit placement and routing compilation stage takes a logical circuit and hardware coupling graph as the input and outputs a transformed circuit within which each 2Q gate, with respect to a qubit mapping, is hardware compliant. An example is depicted in \Cref{fig:qubit_mapping}.

\subsection{Canonical description of 2Q gates}

\begin{figure}[tbp]
    \centering
    \includegraphics[width=0.8\columnwidth]{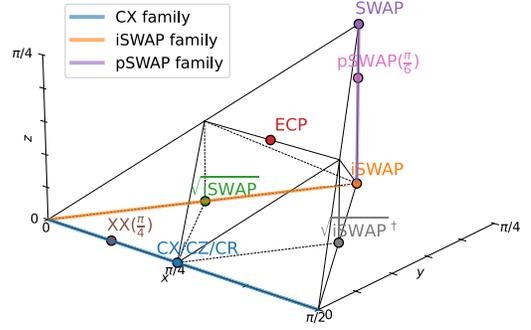}
    \caption{Geometric illustration of canonical gates confined to the Weyl chamber. For visualization convenience, herein the Weyl chamber is confined to $\left\{\frac{\pi}{4}\geq x \geq y \geq z\geq 0\right\} \cup \left\{\frac{\pi}{4} \geq \frac{\pi}{2}-x \geq y \geq z \geq 0\right\}$, equivalent to the canonical coefficient convention $\left\{(a,b,c)\,|\, \frac{1}{2}\geq a\geq b\geq |c| \right\}$.}
    % \caption{Geometric illustration of canonical gates confined to the Weyl chamber. E.g., $ \CX/\mathrm{CZ}/\mathrm{CR} \sim \Can(1/2,0,0)$; $ \SQiSW \sim \Can(1/4, 1/4,0)$; $ \pSWAP(\pi/6,0,0)\sim\Can(1/2,1/2,1/3) $, $ \ECP\sim\Can(1/2,1/4,/1/4) $. $ \CX $ family: $ \Can(a,0,0) $; $ \iSWAP $ family: $ \Can(a,a,0) $, $ \pSWAP $ family: $ \Can(1/2, 1/2, c) $.}
    \label{fig:weyl_chamber}
\end{figure}

Any 2Q gate can be represented by a $4\times 4$ matrix in $ \mathbf{SU}(4) $, up to a global phase, with its canonical form defined as:
\begin{definition}[Canonical gate]
    Any 2Q gate $ U \in \mathbf{SU}(4)$ can be expressed by the composition of its unique \emph{canonical} form
    \begin{align*}
        \Can(a,b,c) := e^{-i \frac{\pi}{2}(a\, XX + b\, YY + c\, ZZ)},\, \frac{1}{2} \geq a \geq b \geq \lvert c \rvert
        % U = (A_0\otimes A_1) \Can(a,b,c) (B_0\otimes B_1) = (A_0\otimes A_1) e^{-i \frac{\pi}{2}(a\, XX + b\, YY + c\, ZZ)} (B_0\otimes B_1)
    \end{align*}
    sandwiched by 1Q gates such that we say $ U $ is locally equivalent to ($\sim$) the canonical form $ \Can(a,b,c) $. %  $ U = (A_0\otimes A_1) \Can(a,b,c) (B_0\otimes B_1) $.
\end{definition}
The canonical coefficients $ (a,b,c) $ are confined to a tetrahedron known as the \emph{Weyl chamber}, which provides a geometric representation of all local equivalence classes of 2Q gates~\cite{zhang2003geometric}. \Cref{fig:weyl_chamber} visualizes some common 2Q gates. E.g.,
\begin{itemize}
    % \item The $\CX$, $\CZ$, and $\mathrm{CR}$ gates are all locally equivalent to $\Can(1/2,0,0)$.
    \item $\CX$, $\CZ$, and $\mathrm{CR}$ are all equivalent to $\Can(\frac{1}{2},0,0)$.
    \item $ \CX $ family: $ \XX(\theta) \sim \YY(\theta) \sim \ZZ(\theta) \sim \Can(\frac{\theta}{\pi}, 0, 0) $.
    \item Param-$\SWAP$ family: $\pSWAP(\theta)\sim\Can(\frac{1}{2}, \frac{1}{2}, \frac{1}{2}-\frac{\theta}{\pi})$.
\end{itemize}
% this canonical form is known as the KAK decomposition~\cite{tucci2005introduction} and has been ubiquitously used in quantum computing~\cite{zhang2003geometric,bullock2003arbitrary,zulehner2019compiling,chen2024one}.
In practice, the canonical form is acquired by KAK decomposition~\cite{tucci2005introduction} and has been widely used~\cite{bullock2003arbitrary,chen2024one}. Appendix \ref{appendix:canonical_form} and Appendix \ref{appendix:isa_analysis} provides a more detailed introduction to the canonical form and its properties.

% Although there are other conventions .... 
% This definition aligns with the \code{TK2} operation definition in \tket, ...

\subsection{Gate realization cost on hardware}\label{sec:background_gate_cost}

The transformed circuits via qubit routing will be ultimately converted into basis gates for execution on hardware. Basis gates refer to those natively implemented and calibrated on physical platforms. 
Typical native gates in superconducting platforms are $\mathrm{CR}$~\cite{rigetti2010fully}, $\CZ$, and $\iSWAP$ gates~\cite{krantz2019quantum}. 
% gate~\cite{rigetti2010fully} and $\iSWAP$ on
% It could be $ \CX $-equivalent ones like IBM's Cross-Resonance gate~\cite{rigetti2010fully} or $ \iSWAP $-family gates like $ \SQiSW $ and $ \iSWAP $ on flux-tunable transmons~\cite{huang2023quantum,arute2019quantum}.
The realization cost of basis gates involves multiple aspects, including the benchmarked fidelity, gate duration, calibration efficiency, etc. For example, gates with shorter duration are more likely to achieve high fidelity, as qubit decoherence dominates the noise source; although some gate schemes can now implement more basis gates~\cite{chen2025efficient,nguyen2024programmable}, those with simpler pulse control are more likely to be calibrated with high precision, such as the $ \iSWAP $-family gates on flux-tunable transmons.

% ---specifically, the canonical coordinate is proportional to the coupling Hamiltonian coupling coefficients---are more likely to 

% as the half evolution of $ \iSWAP $, $ \SQiSW $ gate is more easily to be implemented in higher fidelity 

2Q gates are not natively implemented and must be synthesized by native gates. Their realization cost is determined by the basis gates used for synthesis. For example, any 2Q gate can be minimally synthesized by 3 $ \CX $ gates, except for $ \Can(a,b,0) $ for which the required $ \CX $ count is 2. Conventionally, $ \SWAP $ is regarded as 3 times that of $ \CX $ realization cost, while it can also be synthesized by \dquote{1 $ \CX $ + 1 $ \iSWAP $} or \dquote{3 $ \SQiSW $} gates. The monodromy polytope theory was recently proposed to determine the optimal synthesis cost for any 2Q gate given a specific set of basis gates through analysis of local invariants of canonical gates~\cite{peterson2020fixed}. By this method, the set of gates realizable by a specified number of 2Q gates from the basis set, with arbitrary 1Q gates, corresponds to a polytope within the Weyl chamber. For instance, the polytope reachable by 2 $ \SQiSW $ gates with arbitrary 1Q gates is a tetrahedron confined to $ \left\{1/2\geq a \geq b + \lvert c \rvert \right\} $~\cite{huang2023quantum}.

% Conventional iSWAP better than CZ ... however now CZ is dedicatedly implemented with better fidelity than iSWAP

\section{Motivation}\label{sec:motivation}

\paragraph{Limitations of conventional qubit routing models}
Conventional qubit routing models are ill-equipped to exploit the versatility of modern quantum hardware. First, whether optimizing for gate count or circuit depth, they typically assume that a $ \SWAP $ costs three $ \CX $ gates according to the textbook decomposition. This assumption is divorced from hardware reality. For example, a combination of $ \CX $ and $ \iSWAP $ is sufficient to realize a $ \SWAP $ while both $ \CZ $ (locally equivalent to $ \CX $) and $ \iSWAP $ are natively supported on mainstream superconducting platforms like Google's Sycamore~\cite{arute2019quantum}. Such platforms can even directly implement a high-fidelity $ \SWAP $ gate, with the pulse duration only $1.5$ times that of $ \CZ $~\cite{chen2025efficient}. Thus, $ \SWAP $ is not as costly as assumed in previous qubit routing frameworks. Second, while prior works~\cite{tan2021optimal,liu2022not} do assume the cost of a $\SWAP$ is context-dependent, their analysis remains strictly confined to the $\CX$-centric routing model. By relying on this overly simplistic model, conventional routers cannot accurately predict circuit execution costs and remain blind to the substantial optimization opportunities offered by richer, more diverse ISAs.

%  Second, although prior works pointed out that not all $ \SWAP $ gates cost the same when they are inserted on different positions within the circuit, they are still limited to $ \CX $-based representation. For example, \citet{liu2022not} leverages the feature than gate commutativity and selected $ \SWAP $ insertion can lead to lower insertion cost, however, the \dquote{not-all-the-same} cost they quantified is still based on three-$ \CX $-unrolled pattern. Using $ \CX $ gate as the basis is not accurate to predicate the actual circuit cost and thus constrains the co-optimization space for qubit routing.

% For example, the depth-driven $ \SWAP $ insertion strategy in TOQM~\cite{zhang2021time} simply and the commutativity-based

% Although there is attempts~\cite{liu2022not,mckinney2024mirage} under the consensus that not all $ \SWAP $ insertions cost the same, they are either limited to $ \CX $-based 

% \emph{Routing-synthesis co-optimization as the largest optimization space to unlock the superiority of advanced ISAs}

% 为什么协同设计关键。。。单独的2Q sythesis是native。。。而且quantum algorithms are natively constructed into CX gates ... approximate synthesis做无差别的compilation 并不能达到很好的效果，而且扩展性受限（exponential computational computation)

\paragraph{Co-optimization as the key to unlocking the superiority of advanced ISAs}
% Recently, some sophisticated quantum processor architectures have been developed by exploring advanced ISAs beyond the conventional $ \CX $-based paradigm. However, they are mostly still in the proof-of-concept stage, and there is an absence of a systematic approach to exploiting their superiority in real-world applications. 

In response to the limitations of the $\CX$-only paradigm, a new generation of sophisticated quantum processors has emerged, featuring advanced ISAs with more powerful basis gates. Notable examples include the $ \SQiSW $ gate proposed by Huang et al.~\cite{huang2023quantum}, the continuous $ \ZZ(\theta) $ (equivalent to $ \XX(\theta) $, $ \mathrm{ZX}(\theta) $, $ \mathrm{MS}(0,0,\theta/2) $) jointly adopted by major vendors~\cite{ibmFractionalGates,quantinuumArbitraryAngleGates,ionqPartialGates}, and selected fractional or heterogeneous basis gates~\cite{mckinney2024mirage}. Despite their theoretical promise for greater synthesis power and noise resilience, these advanced ISAs have largely remained in the proof-of-concept stage, with no systematic framework to harness their full potential in real-world quantum applications.

% Although designing a good quantum ISA is a complicated task, those basis gates with greater synthesis capability, shorter gate duration, and higher calibration efficiency are capable of providing superior performance for real-world quantum applications. 

% $ \iSWAP $-family or $ \CX $-family gates~\cite{mckinney2024mirage}, and even heterogeneous/combinatorial basis gates. 

% to exploit these diverse basis gates are mostly limited to 2Q unitary synthesis~\cite{huang2023quantum,tan2020optimal} and efficient synthesis for dedicated multi-qubit gates~\cite{tang2024quantum}. Such rebase passes are developed given a specific quantum ISA and cannot solely yield significant benefits for real-world circuit patterns. The brute-force numerical optimization based synthesis method~\cite{davis2019heuristics,wu2020qgo,kukliansky2023qfactor,younis2021qfast} could help explore cross-ISA performance, while the quantum hardware roofline evaluation proposed by \citet{kalloor2024quantum} does not validated apparent advantages of advanced ISAs for generic real-world quantum applications. 

Prior efforts have been narrowly focused on local 2Q or multi-qubit synthesis tasks~\cite{huang2023quantum,tang2024quantum} or brute-force numerical optimizations~\cite{davis2019heuristics,younis2021qfast}. Such rebase passes are tailored to a specific quantum ISA and fail to deliver clear benefits when applied to advanced ISAs in realistic workloads~\cite{kalloor2024quantum}. This has led to a critical question lingering in the community: \dquote{Are these more expressive, noise-resilient ISAs actually better?}
Recently there have been attempts to harness the properties of advanced ISAs, although through manual, ad-hoc heuristics, such as the $ \SQiSW $-based routing-synthesis optimization~\cite{mckinney2024mirage} and the $ \CX $-$ \iSWAP $ based routing for defect effect mitigation~\cite{zhou2024halma}. In our work, we highlight that collaborative compiler optimization, especially at the stage following logical-level circuit optimization and followed by the final ISA rebase pass, is a key to fully exploiting the capabilities of those powerful ISAs: First, high-level algorithms are predominantly expressed in the $\CX$ representation, which then undergo template-based and peephole optimizations that are highly sophisticated and tailored for $\CX$-based circuit patterns (e.g., commutativity, Clifford equivalence); Second, the disconnect between na\"ive qubit routing models and backend ISA properties apparently leaves a large untapped co-optimization space. Thus we aim to validate this point through a systematic ISA-aware routing framework.

\paragraph{The \dquote{Tower of Babel dilemma} for utilizing diverse ISAs}
The proliferation of diverse quantum ISAs---from monolithic to complex, heterogeneous basis gate sets supported by various physical platforms---has created a \dquote{Tower of Babel dilemma} in the architecture and systems community. Developing bespoke compiler optimizations for each unique hardware backend is unsustainable, leading to the same software fragmentation that we have encountered in classical computing. Consequently, it is important to seek a unified approach that can effectively handle various platform-specific abstractions resembling the LLVM compiler~\cite{Lattner2004}.
% Unlike classical computing, quantum computing are based more on formal algebraic principles. 
The recently proposed monodromy polytope theory~\cite{peterson2020fixed}, for example, provides a method for optimal analysis of ISA synthesis capabilities. Specifically regarding the circuit-level compiler optimization, the monodromy polytope with canonical 2Q gate representations offers a unified approach to evaluating circuit cost and modeling routing-synthesis co-optimization. Building on this, \canopus\ proves to be an elegant and unified solution to the Tower of Babel dilemma at the compiler level.

% If compiler optimization fails to respect the unique characteristics of each quantum ISA, the resulting hardware execution will be inefficient and ... 
\paragraph{Coherent cross-ISA, topology, and program pattern co-exploration}
Ultimately, the goal of quantum computing systems is not just to optimize software for existing hardware, but to co-design the entire stack---from algorithms to architecture---to build the most efficient system possible. This requires a holistic exploration of a vast and complex design space, asking critical questions like: Which ISA is best suited for a given class of applications (e.g., quantum error correction vs. quantum simulation)? How does the choice of qubit topology interact with the ISA to affect performance? Answering these questions is currently an ad-hoc, labor-intensive process, hindering systematic progress. 
Therefore, our work aims to provide the missing piece: a unified and automated framework for this co-exploration. By integrating qubit routing with a formal, ISA-aware synthesis cost model, \canopus\ can systematically evaluate the performance of various program patterns across heterogeneous ISAs and diverse hardware topologies.
% In contrast to prior cross-ISA evaluation based on brute-force numerical optimization~\cite{kalloor2024quantum}, our work provides a more coherent evaluation method. \canopus\ is designed to be extended to integrate more fine-grained hardware information such as qubit-specific gate fidelities. It transforms the compiler from a mere optimization tool into a powerful scientific instrument for architectural exploration. 
% This enables researchers and hardware designers to make informed, data-driven decisions, accelerating the systematic discovery of optimal co-design points and paving a systematic path toward robust, fault-tolerant quantum computer systems.
This empowers researchers with informed, data-driven insights to identify optimal co-design points, accelerating the development of robust, fault-tolerant quantum systems.

% Our work aims to provide the missing piece: a unified and automated framework for this essential co-exploration. By leveraging qubit routing with a unified, ISA-aware synthesis cost model, \canopus\ can systematically evaluate the performance of different program patterns (e.g., QFT kernel, QEC stabilizer circuit) across heterogeneous ISAs (e.g., combinations of basis gates and mirror gates, fractional basis gates) and diverse hardware topologies.

% Our work aims to provide the missing piece: a unified and automated framework for this co-exploration. By integrating qubit routing with a formal, ISA-aware synthesis cost model, \canopus\ can systematically evaluate the performance of different program patterns across heterogeneous ISAs and diverse hardware topologies.
% In contrast to prior cross-ISA evaluation based on brute-force numerical optimization~\cite{kalloor2024quantum}, our work provides a more coherent evaluation method. \canopus\ is designed to be extended to integrate more fine-grained hardware information such as qubit-specific gate fidelities. It transforms the compiler from a mere optimization tool into a powerful scientific instrument for architectural exploration. This enables researchers and hardware designers to make informed, data-driven decisions, accelerating the discovery of optimal co-design points and paving a systematic path toward robust, fault-tolerant quantum computer systems.

\section{\canopus\ framework}\label{sec:canopus}

\subsection{Overview}

\begin{figure*}[t]
    \centering
    \includegraphics[width=\textwidth]{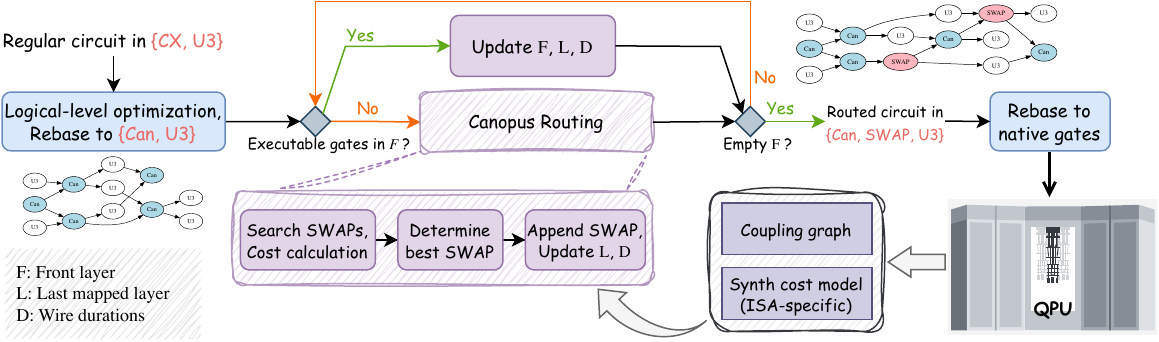}
    \caption{Overview of the \canopus\ framework.}
    %  Input circuit is rebased to $\left\{\Can,\,\Uthree\right\}$.  and output are both circuits in the 2Q canonical form. \canopus\ performs deep co-optimization of routing and synthesis, with the synthesis cost model tailored to the target ISA.}
    \label{fig:overview}
\end{figure*}

The overall qubit routing procedure of \canopus\ is illustrated in \Cref{fig:overview}. Prior to routing, the input circuit is rebased to $\left\{\Can,\,\Uthree\right\}$ gate set. All subsequent processes operate on the directed acyclic graph (DAG) representation of the circuit. During the routing pass, \canopus\ integrates the ISA-specific synthesis cost model into its $ \SWAP $ search process, dynamically determining the most appropriate $ \SWAP $ at each route step. The routing cost is efficiently computed via formal analysis of 2Q canonical forms, without explicitly performing any ISA rebase process. Thus, the output is still a circuit DAG represented in $ \left\{\Can,\,\Uthree\right\} $ with inserted $ \SWAP $ gates.

Notably, \canopus\ inherits the basic concepts and data structures introduced in \sabre~\cite{li2019tackling}, which is one of the industrial-standard qubit mapping/routing algorithms. Given the input circuit DAG, \canopus\ first attempts to map 2Q gates layer by layer via extracting the front layer $ F $, peeling executable gates and searching $ \SWAP $ gates to minimize the routing cost according to a unified heuristic cost function. On the backbone of \sabre, we further introduce several key data structures such as the last mapped layer $ L $ and the wire duration record $ D $, to support the efficient implementation of ISA-aware routing and reducing both gate count and depth related routing overhead.

% Mirroring-SABRE inherits the basic thought and data structures introduced in SABRE~\cite{li2019tackling}. SABRE attempts to map 2Q gates layer by layer via extracting the \dquote{front layer} $F$, peeling executable gates and searching $\SWAP$ gates to minimize the heuristic cost $H$ of the unresolved front layer. $H$ involving the current front layer and a $\SWAP$ candidate is designed to minimize the topological distance between upcoming qubits and promote parallelism. \note{In our mirroring-SABRE algorithm, we additionally define the \dquote{last mapped layer} $L$ as the set of 2Q gates that has no succeeding ones within the DAG constructed by already mapped 2Q gates, such that the $\SWAP$ search process prioritizes $\SWAP$s that the last mapped layer can absorb, as illustrated in \Cref{fig:mirroring_sabre}. We also define and initial heuristic cost function $H_0(F, \mathrm{DAG}, \pi, D)$ before each $\SWAP$ search epoch,
% \begin{align*}
%     H_0 \coloneqq \frac{1}{\left|{F}\right|} \sum\nolimits_{g \in F} D[\pi(g.q_1), \pi(g.q_2)] + \frac{W}{\left|{E}\right|} \sum\nolimits_{g \in E} D[\pi(g.{q_1}),\pi(g.{q_2})],
% \end{align*}
% apart from the original cost function $H(F, \mathrm{DAG}, \pi, D, \SWAP)$ from SABRE. If there is $\SWAP$ candidate that can be absorbed by $L$ and also leads to a lower $H$ than $H_0$, it will be selected as a $\SWAP$ mirroring effect that induces no \#2Q overhead. Otherwise, the $\SWAP$ search is proceeded similar to SABRE.}

\subsection{2Q synthesis cost modeling}\label{sec:2q_synthesis_cost_modeling}

\begin{figure}[tbp]
    \centering
    \begin{tikzpicture}
        % 插入整张 PDF 图片作为基础节点
        \node[anchor=south west, inner sep=0] (image) at (0,0) {\includegraphics[width=\linewidth]{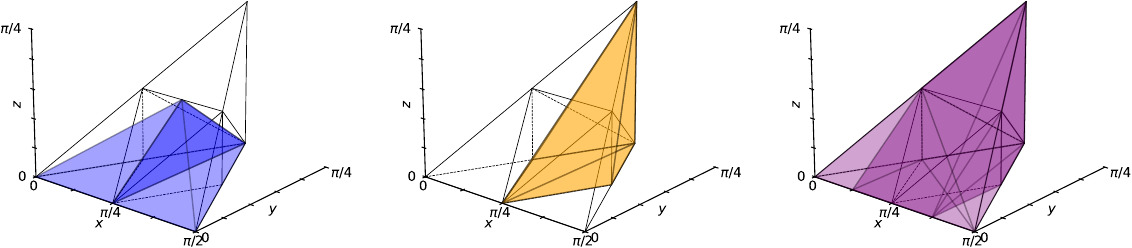}};
        
        % 建立相对坐标系，(0,0) 是左下角，(1,1) 是右上角
        \begin{scope}[x={(image.south east)},y={(image.north west)}]
            % 根据实际视觉效果微调 x 和 y 的坐标数值
            \node at (0.16, -0.12) {\footnotesize(a)};
            \node at (0.50, -0.12) {\footnotesize(b)};
            \node at (0.84, -0.12) {\footnotesize(c)};
        \end{scope}
    \end{tikzpicture}

    \caption{Synthesis coverage for $ \bigl\{ \SQiSW, \ECP \bigr\} $ gate set. The trivial points ($ \SQiSW $ and $ \ECP $ themselves) are not shown in this figure. 2Q overage regions correspond to those that require (a) 2 $ \SQiSW $ gates or 2 $ \ECP $ gates; (b) 1 $ \SQiSW $ + 1 $ \ECP $; (c) 3 gates (3 $ \SQiSW $, 3 $ \ECP $, 2 $ \SQiSW $ + 1 $ \ECP $, etc.) from this gate set for synthesis, respectively.}
    \label{fig:sqisw_ecp_coverage}
\end{figure}

As introduced in \Cref{sec:background_gate_cost}, given any basis gate set, the synthesis cost of a target 2Q gate can be exactly computed through monodromy polytope~\cite{peterson2020fixed}. This cost (which basis gates are sufficient for synthesis) only depends on the canonical coefficients of the target gate. For example, \Cref{fig:sqisw_ecp_coverage} illustrates various polytopes for the gate set $ \bigl\{\SQiSW,\,\ECP\bigr\} $. In practice, the costs of each basis gate are pre-defined, thus the whole set of polytopes helps decide the optimal synthesis scheme with the minimal circuit cost we should prioritize. For example, if $ \SQiSW $ and $ \ECP $ have the same unit cost, the $ \SWAP\sim \Can\bigl(\frac{1}{2},\frac{1}{2},\frac{1}{2}\bigr) $ gate realization will prioritize the \dquote{1 $\SQiSW$ + 1 $ \ECP $} combination; if $ \ECP $ cost is set to be more than twice that of $ \SQiSW $, the $ \SWAP $ realization prioritizes the \dquote{3 $ \SQiSW $} pattern.

\subsection{Routing in canonical form}\label{sec:routing_in_canonical_form}

\begin{figure}[tbp]
    \centering
    % \subfigure[ISA-aware $ \SWAP $ insertion cost. \ZY{Redraw this figure into a 3-6 qubit demonstrative example. E.g., show a local window where routing would insert a SWAP, what 2Q unitaries are around it, and how the re-synthesis step “absorbs” the SWAP (or makes it cheaper). Even a toy 3-6 gate snippet would help a lot.}]{\includegraphics[width=\columnwidth]{figures/canonical_routing.pdf}\label{fig:canonical_routing_a}}
    \subfigure[\note{ISA-aware $\SWAP$ insertion in a local circuit window.}]{\includegraphics[width=\columnwidth]{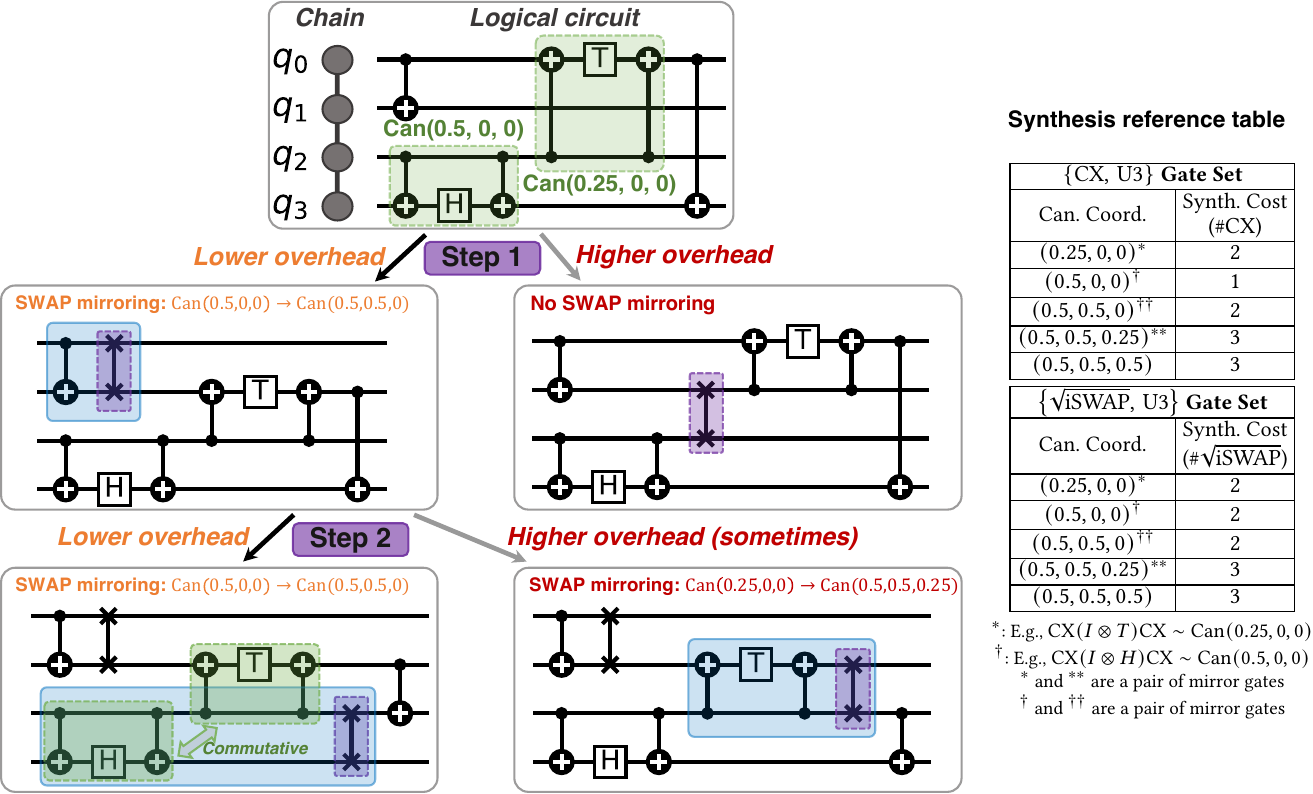}\label{fig:canonical_routing_a}}
    \subfigure[$ \SWAP $ insertion patterns with different gate count and depth costs.]{\includegraphics[width=\columnwidth]{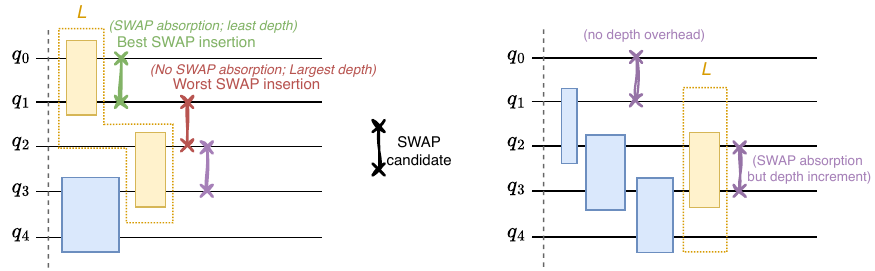}\label{fig:canonical_routing_b}}
    \caption{\note{Qubit routing with the canonical 2Q gate representation.}}
    \label{fig:canonical_routing}
\end{figure}

Our ISA-aware routing primarily leverages the mechanism that some inserted $ \SWAP  $ gates can \dquote{piggyback} a preceding 2Q gate with the same qubit pair acted on and thus result in lower (even negative) routing overhead than what na\"ive $ \SWAP $ synthesis cost may imply. Based on the ISA-specific synthesis cost model, \canopus\ utilizes a holistic heuristic cost function that considers various requirements of qubit routing for simultaneous reduction of both gate count and circuit depth overhead in a unified, quantitative approach.

% incorporating such minimal synthesis cost of $ \SWAP $ insertion, overall circuit depth cost, and quantitative cost  to guide co-optimization.

% \begin{table}[tbp]
%     \centering
%     \caption{Key technical notations used in the \canopus\ framework.}
%     \label{tab:notations}
%     \begin{tabular}{|c|l|}
%         \hline
%         \textbf{Notation} & \textbf{Description} \\
%         \hline
%         $F$, $E$ & Front layer (executable 2Q gates) and extended set \\
%         \hline
%         $L$ & Last mapped layer of 2Q canonical gates without successors \\
%         \hline
%         $D$ & Wire duration record tracking depth on each physical qubit \\
%         \hline
%         $C$ & Record of commutative canonical 2Q gate pairs within $L$ \\
%         \hline
%         $H$ & Unified heuristic cost function for evaluating $\SWAP$ options \\
%         \hline
%         $c_g$ & Minimal synthesis gate count increment of a $\SWAP$ insertion \\
%         \hline
%         $c_{\mathrm{swap}}$ & Na\"ive synthesis cost of an independent $\SWAP$ under the ISA \\
%         \hline
%         $\Delta_{\mathrm{depth}}$ & Estimated circuit depth cost increment incurred by a $\SWAP$ \\
%         \hline
%         $w_g$, $w_d$ & Weighting factors prioritizing gate count and depth metrics \\
%         \hline
%     \end{tabular}
% \end{table}

% exact, minimal synthesis cost quantification and holistic heuristic cost function to guide the co-optimization.

\note{

Instead of treating $\SWAP$ as an independent, fixed-cost insertion, we evaluate its cost based on how it interacts with the \dquote{last mapped layer} $L$, defined as the set of 2Q gates in the current DAG that have no succeeding interactions. When a candidate $\SWAP$ acts on the same physical qubit pair as a gate $U \in L$, it can be \dquote{absorbed} by consolidating them into a single composite unitary $U' = \SWAP \cdot U$, dubbed \dquote{$ \SWAP $ mirroring}, as detailed in Appendix \ref{appendix:swap_mirroring}. This $ \SWAP $ insertion cost is then defined as the marginal synthesis cost increment: $c_g = \textsc{cost}(\SWAP \cdot U) - \textsc{cost}(U)$. The cost component $c_g $ is typically lower than the na\"ive cost $ c_{\mathrm{swap}} $ of an independent $\SWAP$ gate, and it can even be negative when the composite unitary is cheaper to synthesize than $ U $. For instance, under $ \CX $ basis, if the absorption location is an $ \iSWAP $-equivalent gate, the composite $ \SWAP\cdot\iSWAP \sim \mathrm{Can}\bigl(\frac{1}{2},0,0\bigr)$ requires only one $ \CX $ gate to synthesize, leading to a negative gate count increment ($c_g = c_{\mathrm{cx}} - 2\,c_{\mathrm{cx}} = - c_{\mathrm{cx}}$); similarly, with $ \SQiSW $ basis, the resulting $ c_g $ is zero.

As illustrated in \Cref{fig:canonical_routing_a}, \canopus\ evaluates $ \SWAP $ insertions by regarding all 2Q gates/blocks as canonical gates and quantifying their synthesis costs based on the target ISA. Without loss of generality, this example considers only the overhead of $ \SWAP $ insertion, omitting topological distance and circuit depth heuristics. According to the synthesis cost reference table in \Cref{fig:canonical_routing_a}, an independent $\SWAP$ gate normally costs three 2Q gates under both the $\CX$ and $\SQiSW$ gate set. However, in the first $ \SWAP $ search step, absorbing a $\SWAP$ candidate into a preceding $\Can(0.5, 0, 0)$ gate (left selection) forms the mirror gate $\Can(0.5, 0.5, 0)$, merely yielding a marginal synthesis cost increment of $ c_g=1\times c_{\mathrm{cx}} $ or $c_g = 0\times c_{\mathrm{\sqrt{iswap}}}$. In the second step, both selections offer absorbable $\SWAP$ candidates with identical $c_g$ costs under the $\CX$ basis. Yet, targeting the $\SQiSW$ basis prioritizes the left selection ($c_g=0\times c_{\mathrm{\sqrt{iswap}}}$ vs. $1\times c_{\mathrm{\sqrt{iswap}}}$). This example demonstrates how ISA-aware cost evaluation steers routing to effectively exploit the specific synthesis capabilities of the underlying hardware.

% this cost increment is simply the na\"ive synthesis cost of an independent $ \SWAP $.

}

To optimize circuit execution time, we also evaluate the \dquote{circuit depth} cost increment ($\Delta_{\mathrm{depth}}$) by tracking the accumulated duration on each physical qubit wire via a data structure $D$. As \Cref{fig:canonical_routing_b} illustrates, different $\SWAP$ insertion choices yield varying trade-offs between gate count and circuit depth, necessitating a comprehensive consideration. Notably, we quantify circuit depth based on the predefined costs of the underlying basis gates which reflect their physical durations, through tracking the length of the weighted critical path on the mapped DAG. By integrating both gate count and depth costs into a unified heuristic, \canopus\ can make informed decisions that balance these two critical 
metrics. The detailed heuristic cost function is defined as:
\begin{align}
    H  =\,\,  & w_g\, c_{g}  +  w_d\, \Delta_{\mathrm{depth}} \notag\\
    & + (\Delta_{\mathrm{Avg}\{\mathrm{dist}[i,j]\}_F} + k_E \, \Delta_{\mathrm{Avg}\{\mathrm{dist}[i,j]\}_E})\, c_{\mathrm{swap}},\label{eq:heuristic_cost}
\end{align}
where $ w_g $ and $w_d$ weight the count and depth cost components. The final term adapts \sabre's topological heuristic ($H_{\sabre} = \mathrm{Avg}\{\mathrm{dist}[i,j]\}_F + k_E\,\mathrm{Avg}\{\mathrm{dist}[i,j]\}_E$), which relies on the average shortest-path distance between physical qubits mapped to demanded logical interactions in the front layer $F$ and the lookahead extended set $E$. Instead of using absolute topological distances, \canopus\ computes the \dquote{differential} average distance ($\Delta_{\mathrm{Avg}\{\mathrm{dist}\}}$) resulting from a candidate $\SWAP$, scaled by the ISA-specific $\SWAP$ cost ($c_{\mathrm{swap}}$). This securely translates topological distance reduction into a concrete basis-gate cost metric. Because $c_g$ and $\Delta_{\mathrm{depth}}$ provide highly accurate, hardware-aware feedback for count-depth co-optimization, the empirical decay factor originally required in \sabre\ is no longer needed. Ultimately, every term in \Cref{eq:heuristic_cost} represents a marginal cost increment, allowing the heuristic to holistically minimize routing overhead.

% Furthermore, the decay factor in \sabre\ is no longer needed, as the $ c_g $ and $\Delta_{\mathrm{depth}}$ guide more accurate count-depth co-optimization. Therefore, each cost component in \Cref{eq:heuristic_cost} refers to \dquote{cost increment}, composing a unified heuristic cost function to reconcile multifaced routing costs. 

% the routing process is proceeded similar to \sabre, but with the ISA-aware cost components.

% In contrast to the regular heuristic cost function used in \sabre:
% \begin{align}
%     H_{\sabre} &= \frac{1}{|F|} \sum\nolimits_{(i,j)\in F} \mathrm{dist}[i,j] + \frac{k_E}{|E|}\sum\nolimits_{(i, j)\in E}\mathrm{dist}[i,j]\notag\\
%     &= \mathrm{Avg}\{\mathrm{dist}[i,j]\}_F + k_E\,\mathrm{Avg}\{\mathrm{dist}[i,j]\}_E 
%     % &= \widebar{\mathrm{dist}(F)} + k_E\,\widebar{\mathrm{dist}(E)} \notag
% \end{align}
% which involves the basic (left term) and lookahead (right term) components.
% In practice, there is a $ w_{\mathrm{decay}} $ decay factor applied to $H$, which is not shown as it does not affect the composition of $H$.

% - Unified and highly-effective qubit routing approach in canonical form, with properties of quantum ISAs tailored to the routing process

\subsection{Enhanced optimization via commutation}\label{sec:gate_commutation_guided_optimization}

\begin{figure}[tbp]
    \centering

    \subfigure[\note{Efficient $\SWAP$ absorption via canonical commutation relations.}]{\includegraphics[width=\columnwidth]{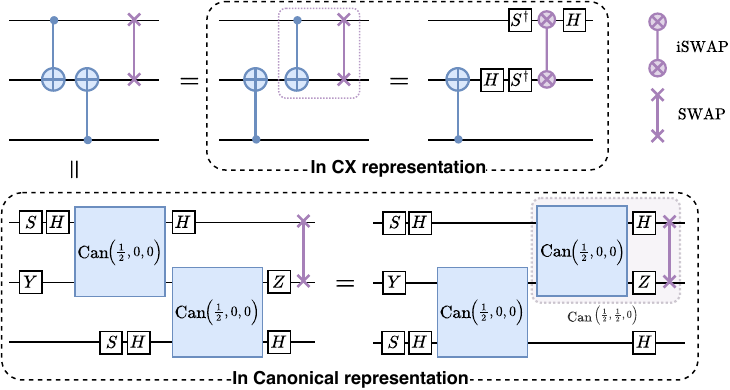}\label{fig:commutation_a}}
    \subfigure[\note{More commutation pattern examples captured by the canonical form.}]{\includegraphics[width=\columnwidth]{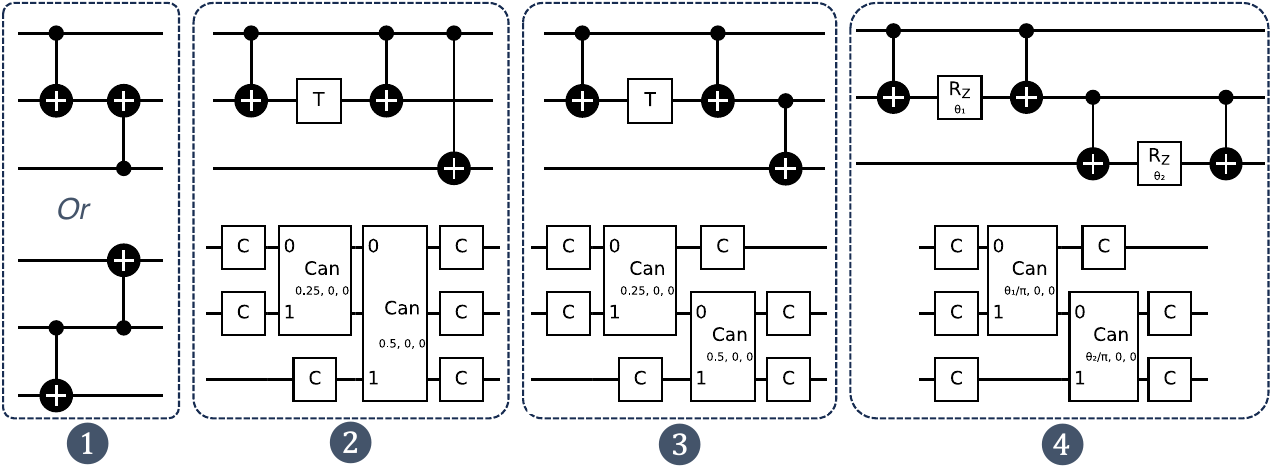}\label{fig:commutation_b}}
    \caption{\note{Canonical representation efficiently captures commutative relations in real-world quantum circuits. (a) The canonical commutation relation enhances $ \SWAP $ absorption opportunities in a formal and efficient manner. Herein commutativity within $\CX$ chain can be identified without tracking control and target qubit positions. (b) Additional commutation patterns captured in the canonical form. The first pattern is intuitive in the standard $\CX$ basis, while the subsequent three highlight complex equivalences obscured in the $\CX$ basis but clearly exposed in canonical form ($C$ denotes 1Q Clifford).}}
    \label{fig:commutation}
\end{figure}

Previous works have observed that employing the commutativity between $ \CX $ gates exposes more optimization opportunities for $ \SWAP $ insertion~\cite{liu2022not}. However, the commutation pattern they exploit is limited to a pair of $ \CX $ gates, where they either act on the same control qubit or target qubit. In our findings, the general 2Q gate commutativity can be captured through the canonical form:
\begin{theorem}[Canonical gate commutation]\label{thm:commutation}
    Let $\,\Can(a,b,c)_{q_0,q_1}$ and $\,\Can(a',b',c')_{q_1,q_2}$ denote canonical gates acting on qubits ($q_0, q_1$) and ($q_1, q_2$) respectively, with an overlapping qubit $ q_1 $. They are commutative if and only if
    \begin{align}
        b=b'=c=c'=0,
    \end{align}
    that is, when both consist solely of $\,\XX$ rotations.
\end{theorem}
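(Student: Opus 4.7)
The plan is to decompose each canonical gate in its local Pauli basis and analyze the three-qubit commutator term-by-term, splitting the biconditional into its two directions. The sufficient direction is immediate: when $b=b'=c=c'=0$, the two gates reduce to $e^{-i(\pi a/2) X_{q_0} X_{q_1}}$ and $e^{-i(\pi a'/2) X_{q_1} X_{q_2}}$; their generators overlap only in $X_{q_1}$ and $[X_{q_1}, X_{q_1}]=0$, so the generators commute and hence so do the exponentials.

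For the necessary direction, I would exploit the pairwise commutativity of $XX$, $YY$, $ZZ$ on a single two-qubit subsystem to factor
\begin{align}
\Can(a,b,c) = e^{-i(\pi a/2)\,XX}\, e^{-i(\pi b/2)\,YY}\, e^{-i(\pi c/2)\,ZZ},
\end{align}
expand each factor as $\cos\theta\,I - i\sin\theta\,PP$, and collapse products via $(XX)(YY)=-ZZ$ and its cyclic analogs. This produces a closed Pauli expansion $A = A_I\, I + A_X\, X_{q_0}X_{q_1} + A_Y\, Y_{q_0}Y_{q_1} + A_Z\, Z_{q_0}Z_{q_1}$ with explicit complex trigonometric coefficients, and symmetrically for $B$ on $(q_1,q_2)$. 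Distributing $[A,B]$ across the sixteen cross-terms, commutators of matched Paulis vanish while the six cross-Pauli commutators use $[P_{q_1},Q_{q_1}]=2i\epsilon_{PQR}R_{q_1}$ to produce six distinct three-qubit Pauli strings (for instance $X_{q_0}Z_{q_1}Y_{q_2}$ and $Y_{q_0}X_{q_1}Z_{q_2}$). Linear independence of these strings forces every product coefficient $A_X B_Y,\,A_X B_Z,\,A_Y B_X,\,A_Y B_Z,\,A_Z B_Y,\,A_Z B_X$ to vanish simultaneously.

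The last step decodes these vanishing conditions via the Weyl chamber constraint $\tfrac{1}{2}\geq a\geq b\geq|c|$, which makes $\cos(\pi a/2)$, $\cos(\pi b/2)$, $\cos(\pi c/2)$ all strictly positive. A careful real/imaginary analysis of the trigonometric coefficients shows that, at any non-trivial chamber point, $A_X\neq 0$ (it vanishes only when the gate is the identity) while $A_Y=0\Leftrightarrow A_Z=0\Leftrightarrow b=0$. Combined with $A_X B_Y = A_X B_Z = 0$ this gives $b'=0$, and hence $c'=0$ via $b'\geq|c'|$; by symmetry $b=c=0$. I expect the hard part to be precisely this coefficient-vanishing step: the real/imaginary decomposition of each trigonometric coefficient must be handled carefully, the chamber ordering must be invoked to rule out spurious solutions, and the trivial-identity degenerate case must be set aside separately, since the identity commutes universally without forcing the $XX$-only structure on its partner.
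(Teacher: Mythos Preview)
Your approach is correct and reaches the same conclusion, but it is structurally different from the paper's proof. The paper factors the argument through two lemmas: first it shows that if $e^{-i\frac{\pi}{2}A}$ and $e^{-i\frac{\pi}{2}B}$ commute for Hermitian $A,B$ with eigenvalues in $[-2,2)$, then $A$ and $B$ themselves commute (via simultaneous diagonalizability and injectivity of $t\mapsto e^{-i\pi t/2}$ on that interval); second, it computes the commutator of the \emph{Hamiltonians} $P_1=a_1XX+b_1YY+c_1ZZ$ and $P_2=a_2XX+b_2YY+c_2ZZ$ directly, where the six independent Pauli strings carry the bare products $a_ib_j$ etc., so the vanishing conditions are read off immediately. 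You instead stay at the unitary level, expand each canonical gate in the Pauli basis with complex trigonometric coefficients $A_I,A_X,A_Y,A_Z$, and analyze when the products $A_PB_Q$ vanish. This bypasses the exponential-to-Hamiltonian reduction lemma entirely, at the price of the trigonometric bookkeeping you correctly flag as the delicate step. Both routes exploit the same six-Pauli-string linear-independence structure, and both need the nontriviality caveat (the paper's second lemma explicitly assumes $P_1,P_2\neq0$; you note the identity degeneracy). Your route is more elementary in that it needs no spectral argument; the paper's is algebraically cleaner since the Hamiltonian coefficients are the canonical parameters themselves rather than trigonometric functions of them.
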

% Detailed proof is in the \canopusGitHub.
Detailed proof is in Appendix \ref{appendix:commutation_proof}.
Through this formalized commutativity determination, the ordinary $\CX$ commutation pattern can be captured without tracking the control and target qubit positions, as shown in \Cref{fig:commutation_a}.
\note{Moreover, \Cref{fig:commutation_b} showcases additional commutation patterns that are captured in the canonical form but remain difficult to handle for $\CX$-based compilers. These patterns are commonly observed in real-world circuits (e.g., arithmetic, QFT, chemistry simulation) and the transformation to commutative canonical gates can be readily obtained using \tket.}

% \ZY{In practice, while the pure XX condition is restrictive, it occurs frequently in specific algorithms like QAOA or transverse-field Ising model simulations, where it provides an additional $\sim 3-5\%$ reduction in routing overhead.}

% - Capture optimization opportunities exposed by gate commutation; while commutation relations can be uniformly described in canonical form

% canonical gate representation also offers a generalized approach to capturing commutativity-based optimization opportunities.

% \subsection{Qubit dependencies guided optimization}\label{sec:qubit_dependencies_guided_optimization}
% - Capture optimization opportunities exposed by qubit dependencies, which implies optimization in a more global scope

% \documentclass{article}
% \usepackage[ruled,vlined,linesnumbered]{algorithm2e}
% \usepackage{amsmath}

% % Optional: Define a command for code-like text if you prefer a specific style
% \newcommand{\code}[1]{\texttt{#1}}

% \begin{document}

\begin{algorithm}[tbp]
    \caption{Update $L$ when adding a new 2Q gate}
    \label{alg:update_L_D_C}

    \SetKwInOut{KwIn}{Input}
    \SetKwInOut{KwOut}{Output}

    \KwIn{
        $G'$ (Routed DAG), 
        $\pi$ (current logic-to-physical mapping), 
        $L$ (last mapped layer), 
        $D$ (wire durations for each qubit), 
        $C$ (commutative pairs within $L$)
    }
    \KwOut{Updated $G'$, $L$, $D$, $C$}

    \BlankLine

    % \tcc{$g$: resolved logical gate;\, $g'$: routed gate}
    \tcc{\footnotesize $g$: resolved logical gate;\, $g'$: routed gate}
    $g' \leftarrow G'.\textsc{pushBack}(g, \pi[g.q_0], \pi[g.q_1])$;\, \tcp{\footnotesize $g'.q_i = \pi[g.q_i]$}

    $d \leftarrow \textsc{max}(D[g'.q_0],\, D[g'.q_1]) + \textsc{synthCost}(g)$\;
    $D[g'.q_0] \leftarrow d$; $D[g'.q_1] \leftarrow d$\;

    \For{$\mathrm{pred} \in G'.\textsc{predecessors}(g')$}{
        \If{$\textsc{is2QGate}(\mathrm{pred})$}{
            % \tcp{Distinguish if they are a pair of commutative canonical gates (Theorem 1)}
            \If{$\textsc{isCommutativeCanonicalPair}(g',\, \text{pred})$}{
                $C[(\mathrm{pred}.q_0,\, \text{pred}.q_1)] \leftarrow (g'.q_0,\, g'.q_1)$\;
            }
            \Else{
                $L.\textsc{pop}((\mathrm{pred}.q_0,\, \mathrm{pred}.q_1),\, \textsc{None})$\;
                $C.\textsc{pop}((\mathrm{pred}.q_0,\, \mathrm{pred}.q_1),\, \textsc{None})$\;
            }
        }
        \Else{ 
            \tcc{\footnotesize $ \mathrm{pred\_pred} $ must be None or a 2Q gate}
            $\mathrm{pred\_pred} \leftarrow \textsc{next}(G'.\textsc{predecessors}(\mathrm{pred}))$\;
            \If{$\mathrm{pred\_pred} \neq \textsc{None}$}{
                $L.\textsc{pop}((\mathrm{pred\_pred}.q_0,\, \mathrm{pred\_pred}.q_1),\, \textsc{None})$\;
                $C.\textsc{pop}((\mathrm{pred\_pred}.q_0,\, \mathrm{pred\_pred}.q_1),\, \textsc{None})$\;
            }
        }
    }
    $L[(g'.q_0,\, g'.q_1)] \leftarrow g'$\;

\end{algorithm}

% \end{document}

% \documentclass{article}
% \usepackage[ruled,vlined,linesnumbered]{algorithm2e}
% \usepackage{amsmath} % For \text
% \usepackage{amssymb} % For \Delta

% % Optional: Define a command for code-like text
% \newcommand{\code}[1]{\texttt{#1}}

% \begin{document}

    \begin{algorithm}[tbp]
    \caption{Update $D$ when adding a $ \SWAP $ gate}
    \label{alg:update_durations_swap}

    \SetKwInOut{KwIn}{Input}
    \SetKwInOut{KwOut}{Output}

    \KwIn{
        \code{swap} (encountered SWAP gate), 
        \code{can} (canonical gate within $ L $ on the same qubits as \code{swap}),
        % $D$ (wire durations), 
        % $C$ (commutative pairs within $ L $)
        $ D $, $ C $
    }
    \KwOut{Updated $D$}

    \BlankLine

    % \If{$(\code{swap}.q_0, \code{swap}.q_1) \in C$}{
    %     \tcc{Adjust $ D $ for commutative pair}
    %     $q'_0,\, q'_1 \leftarrow C[(\code{swap}.q_0,\, \code{swap}.q_1)]$\;
    %     \uIf{$\code{swap}.q_0 = q'_0$}{
    %         $D[\code{swap}.q_0] \leftarrow D[q'_0] + \textsc{synthCost}(\text{can})$\;
    %         $D[\code{swap}.q_1] \leftarrow D[\code{swap}.q_0]$\;
    %     }
    %     \ElseIf{$\code{swap}.q_0 = q'_1$}{
    %         $D[\code{swap}.q_0] \leftarrow D[q'_1] + \textsc{synthCost}(\text{can})$\;
    %         $D[\code{swap}.q_1] \leftarrow D[\code{swap}.q_0]$\;
    %     }
    %     \ElseIf{$\code{swap}.q_1 = q'_0$}{
    %         $D[\code{swap}.q_1] \leftarrow D[q'_0] + \textsc{synthCost}(\text{can})$\;
    %         $D[\code{swap}.q_0] \leftarrow D[\code{swap}.q_1]$\;
    %     }
    %     \ElseIf{$\code{swap}.q_1 = q'_1$}{
    %         $D[\code{swap}.q_1] \leftarrow D[q'_1] + \textsc{synthCost}(\text{can})$\;
    %         $D[\code{swap}.q_0] \leftarrow D[\code{swap}.q_1]$\;
    %     }
    % }

    \If{$(\code{swap}.q_0, \code{swap}.q_1) \in C$}{
        % \tcc{Adjust $ D $ for commutative pair}
        $q'_0,\, q'_1 \leftarrow C[(\code{swap}.q_0,\, \code{swap}.q_1)]$\;
        \tcc{\footnotesize Adjust $ D $ by finding matched qubits $q_i \in \{\code{swap}.q_0, \code{swap}.q_1\}$ and $q'_j \in \{q'_0, q'_1\}$}
        $D[q_i] \leftarrow D[q'_j] + \textsc{synthCost}(\text{can})$\;
        $D[\text{the other \code{swap} qubit}] \leftarrow D[q_i]$\;
    }
    % \BlankLine

    % \tcc{Calculate circuit depth increment}
    % $\Delta_{\text{depth}} \leftarrow \textsc{synthCost}(\code{can}.\textsc{mirror}()) - \textsc{synthCost}(\text{can})$\;
    $d \leftarrow \textsc{max}(D[\code{swap}.q_0],\, D[\code{swap}.q_1]) + \textsc{synthCost}(\code{can}.\textsc{mirror}()) - \textsc{synthCost}(\text{can})$\;
    $D[\code{swap}.q_0] \leftarrow d$; $D[\code{swap}.q_1] \leftarrow d$\;

\end{algorithm}

% \end{document}

\subsection{Scalability and implementation}\label{sec:canopus_implementation}

The overall algorithm framework to implement \canopus\ resembles \sabre. To efficiently implement the sophisticated $\SWAP$ insertion mechanism in \canopus, we develop specific core algorithms. \Cref{alg:update_L_D_C} specifies how the essential data structures---the last mapped layer $L$, commutative canonical gate pairs $C$ within $L$, wire duration record $D$---will be updated when adding an executable 2Q gate to the routed circuit DAG. \Cref{alg:update_durations_swap} shows how the wire durations $D$ should be correctly updated when encountering a $\SWAP$ insertion that can exploit the canonical gate commutativity optimization opportunity. That is also crucial to evaluate the total circuit cost after mapping. Notably, all the computation processes within these algorithms are based on conditional control and operations on hashed data structures, achieving $\mathcal{O}(1)$ time complexity. The synthesis cost of a target 2Q gate is quantified by identifying the convex polytope containing its canonical coordinate, for which the computation process is highly efficient with linear time complexity. \canopus\ also caches canonical gate costs it has computed to avoid repetitive computation. Consequently, the overall scalability of \canopus\ is on par with that of \sabre, ensuring its practical applicability to large-scale circuits.

\note{For the specific hyperparameter values, we set $k_E$ to $0.5$, consistent with \sabre. Both $w_g$ and $w_d$ are also set to $0.5$. This configuration ensures that the synthesis-aware optimization significantly influences routing decisions without overshadowing the primary objective of minimizing topological distance. The depth weight $w_d$ is further scaled by a topology-adaptive factor $\bar{d}/(2+\bar{d})$, where $\bar{d}$ is the average degree of the device coupling graph, reflecting that depth optimization is more impactful on denser topologies.
% \delbyzy{Specifically, in sparse topologies, prioritizing topological distance or gate count optimization within the current search window is more effective. Conversely, denser topologies inherently offer shorter routing distances, which diminishes the discriminative power of the distance penalty and provides greater opportunities for depth optimization.}
% This configuration effectively balances gate count and depth optimizations within the synthesis-routing co-optimization framework. 
The sensitivity of these choices is evaluated in \Cref{sec:sensitivity_analysis}.}

\note{
    The implementation of \canopus\ which is accessible on \href{https://github.com/Youngcius/canopus}{GitHub}~\cite{canopusGitHub} builds on \code{qiskit}, \code{monodromy}, and \code{pytket}, along with additional self-implemented utilities. The core routing algorithm is realized as a native \qiskit\ \code{TransformationPass}, allowing seamless integration into existing \qiskit\ transpilation pipelines without any refactoring. Extending \canopus\ to a new ISA requires only a simple configuration step---specifying the unit costs for the target basis gates---without requiring any algorithmic modification.
}

% \NOTE{
% \begin{itemize}
%     \item Architectural fit — it's a native Qiskit TransformationPass; no refactoring of existing pipelines needed.
%     \item ISA extensibility — adding a new gate set is a configuration step (monodromy polytope pre-computation), not algorithm modification.
%     \item Demonstrated breadth — 7 ISAs already supported with the same code.
%     \item Multi-SDK interop — converter utilities for pytket and BQSKit are already in the codebase.
%     \item Deployment simplicity — single pip install ..
% \end{itemize}
% }

% \section{Implementation}\label{sec:implementation}

% 这一章节暂时扔到附录中

% \input{algorithms/update_L_D_C.tex}

% \input{algorithms/depth_trans.tex}

% \subsection{Core algorithms}

% With key heuristic

% \subsection{Scalability}\label{sec:scalability}

\section{Case Studies}\label{sec:studies}

We validate the practical advantages of \canopus\ through two realistic case studies: the real-machine execution of quantum Fourier transform (QFT) circuits on IBM's QPU \code{ibm\_marrakesh}, and the end-to-end simulation of quantum low-density parity-check (qLDPC) stabilizer measurement circuits to assess its impact on the logical error rate.

\subsection{QFT kernel}\label{sec:qft_study}

\begin{figure}[tbp]
    \centering
    \subfigure[Mapping \code{qft\_6} by \toqm.]{
        \includegraphics[width=0.47\columnwidth]{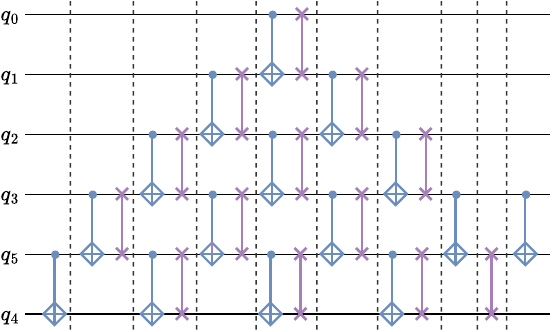}
        \label{fig:qft_mapping_a}
    }\subfigure[Mapping \code{qft\_6} by \canopus.]{
        \includegraphics[width=0.47\columnwidth]{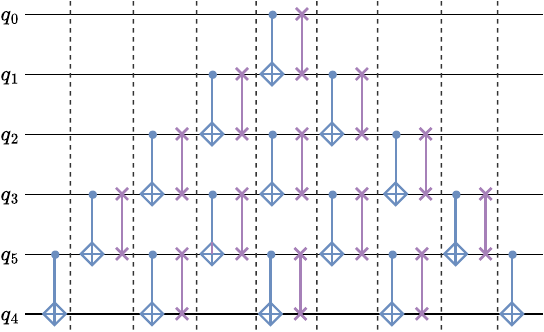} % ,trim={0.5cm 0.25cm 0 0},clip
        \label{fig:qft_mapping_b}
    }
    \caption{Mapping/routing comparison for the QFT kernel. For convenient visualization, only $ \CPhase $ and $ \SWAP $ gates are shown. (a) \toqm\ generates a sub-optimal mapping scheme, with 2Q depth of 10. (b) \canopus\ generates the optimal scheme in a perfect butterfly structure, with 2Q depth of 9.}
    \label{fig:qft_mapping}
\end{figure}

\begin{table}[tbp]
    \centering
    \caption{Qubit routing comparison for the QFT kernel.}
    \label{tab:qft_example}
    \begin{scriptsize}
        \begin{tabular}{|c|l|c|c|c|c|}
    \hline
    \multicolumn{2}{|c|}{\textbf{Benchmark}} & \multicolumn{2}{c|}{\textbf{\code{qft\_6}}} & \multicolumn{2}{c|}{\textbf{\code{qft\_12}}} \\
    \hline
    Topology & Method & \#Can & Depth2Q & \#Can & Depth2Q \\
    \hline
    \multirow{3}{*}{1D Chain} & \emph{Optimal} & \emph{15}\cellcolor{purple!30} & \emph{9}\cellcolor{purple!30} & \emph{66}\cellcolor{purple!30} & \emph{21}\cellcolor{purple!30} \\
    \cline{2-6}
     & \toqm & 16 & 10 & 67 & 22 \\
    \cline{2-6}
     & \canopus & 15\cellcolor{purple!30} & 9\cellcolor{purple!30} & 66\cellcolor{purple!30} & 21\cellcolor{purple!30} \\
    \hline
    \multirow{2}{*}{2D Square} & \toqm & 21 & 13 & 100 & 39 \\
    \cline{2-6}
     & \canopus & 15\cellcolor{purple!30} & 9\cellcolor{purple!30} & 75 \scriptsize{(±10\%)} & 33 \scriptsize{(±10\%)}\\
    \hline
\end{tabular}
    
    \end{scriptsize}
\end{table}

QFT is a fundamental subroutine in many promising quantum algorithms like Shor's algorithm~\cite{shor1994algorithms} and quantum phase estimation~\cite{kitaev1995quantum}. Amid extensive research on dedicated QFT compilers~\cite{zhang2021time,jin2024optimizing,maslov2007linear}, we select the specialized SOTA \toqm~\cite{zhang2021time} as our primary baseline.

A key finding is that \canopus\ always achieves the optimal QFT routing scheme on the 1D chain topology, while \toqm\ does not. It can be proven that the minimal number of $ \SWAP $ insertions to route an $ n $-qubit QFT is $ \frac{n(n-1)}{2} - 2 $, that is, 2 fewer than the original $ \CPhase $ count. This results in a perfect, symmetric butterfly circuit structure, as exemplified in \Cref{fig:qft_mapping_b}, with minimal \#$\Can$ and 2Q circuit depth. Notably, this result is indeed optimal, surpassing the manually designed scheme previously reported as optimal by Maslov~\cite{maslov2007linear} where 2 more $ \SWAP $ gates are required. This optimal scheme is irrespective of the target ISA. In contrast, our experiments show that \toqm\, despite claiming to realize the scheme from \cite{maslov2007linear}, fails to reproduce it and consistently yields inferior results to \canopus, as illustrated in \Cref{fig:qft_mapping}.

We compare compilation performance for both $ 6 $- and $ 12 $-qubit QFT kernels on both 1D chain and 2D square topologies, with results summarized in \Cref{tab:qft_example}. On the 1D chain, \canopus\ always produces the theoretically optimal routing result, while \toqm\ does not. For the small-scale \code{qft\_6} kernel on the 2D square, \canopus\ also achieves the optimal routing, superior to \toqm\ in both \#$\Can$ and 2Q depth. For the large-scale \code{qft\_12} kernel, \canopus\ consistently outperforms \toqm\ in both metrics.

\begin{figure}[tbp]
    \centering
    \includegraphics[width=\columnwidth]{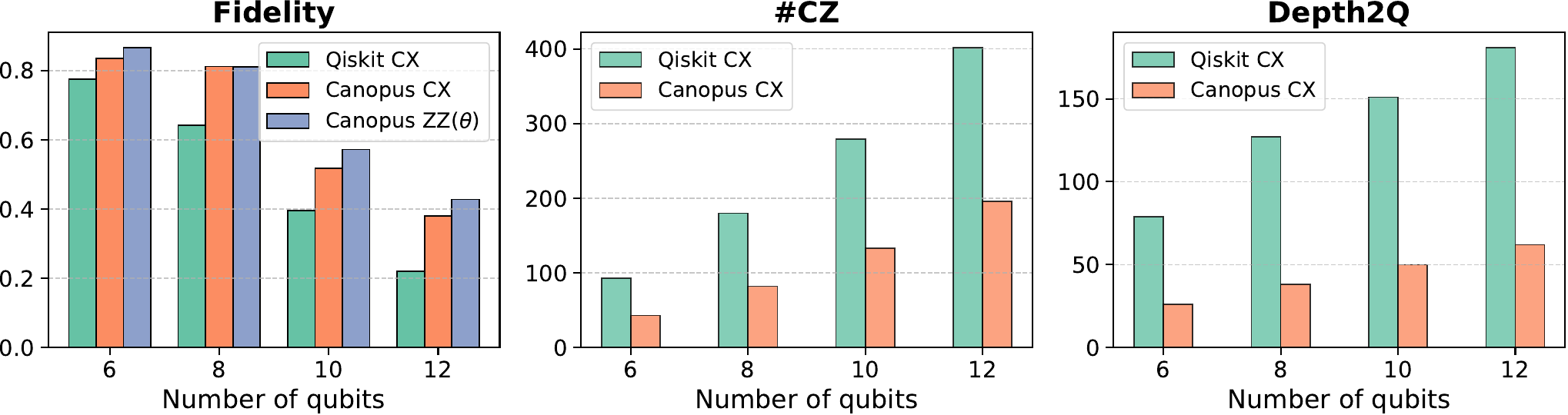}
    \caption{QFT kernel fidelity comparison benchmarked on IBM\textsuperscript{®} Quantum Platform (\code{ibm\_marrakesh}). \code{ibm\_marrakesh} is a Heron-R2 QPU with native gate set $ \bigl\{\mathrm{CZ},\, \sqrt{\mathrm{X}},\, \mathrm{Z}(\theta), \mathrm{ZZ}(\theta)\bigr\} $.}
    \label{fig:qft_cloud}
\end{figure}

To further validate these results, we performed real-machine experiments on IBM's \code{ibm\_marrakesh} QPU. We compiled QFT circuits of sizes $n \in \{6, 8, 10, 12\}$ for a 1D chain topology using both \canopus\ and the default \qiskit\ compiler. Although \code{ibm\_marrakesh} has a heavy-hex topology, it contains linear chains of sufficient size for these benchmarks. Fidelity was measured using the Hellinger fidelity between the experimental and ideal output distributions, with the number of shots set to $\textsc{max}\{4096, 2^n \times 10\}$. A layer of Hadamard gates is appended to each circuit execution so that the ideal final state will be $\lvert 0\rangle^{\otimes n}$. In \Cref{fig:qft_cloud}, circuits compiled with \canopus\ achieve, on average, a 52.9\% reduction in $\CZ$ gate count, a 66.4\% reduction in 2Q-gate depth, and a 26.89\% error reduction for the $\CZ/\CX$ and 34.98\% for the $\ZZ(\theta)$ gate set, respectively, compared to \qiskit\ with default settings. These results unequivocally demonstrate the practical advantages of \canopus\ for QFT kernel compilation.

\subsection{qLDPC stabilizer circuit}

% \begin{figure}[tbp]
%     \centering
%     \includegraphics[width=\linewidth]{figures/better_check.pdf}
%     \caption{Stabilizer check circuit with less routing overhead. \ZY{delete this figure}}
%     \label{fig:stabilizer}
% \end{figure}

% \begin{figure}[tbp]
%     \centering
%     \includegraphics[width=\linewidth]{figures/ler_topo_square.pdf}
%     \includegraphics[width=\linewidth]{figures/ler_topo_hhex.pdf}
%     \caption{Logical error rate of QLDPC stabilizer circuits compiled for square (top) and heavy-hex (bottom) topologies. The y-axis shows the relative logical error rate normalized by the error rate of an ideal baseline that assumes all-to-all connectivity. \ZY{This figure is now confusing}}
%     \label{fig:stabilizer_result}
% \end{figure}

\iffalse
\begin{figure}[tbp]
    \centering
    \includegraphics[width=\linewidth]{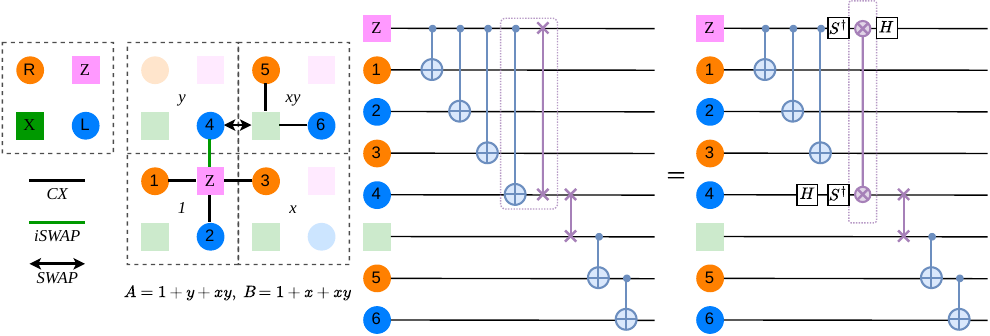}
    \caption{Conceptual schematic of the compilation benefits of qLDPC code stabilizer measurement circuit in Canopus. Here we illustrate the $[[18, 4, 4]]$ BB code~\cite{wang2026demonstration} with generating polynomial as $A = 1 + y + xy$ and $B = 1 + x + xy$ on the grid topology, showing a representative Z-stabilizer check as an example. The orange and blue circles denote the right and left data qubits, while the pink and green squares denote the Z- and X-stabilizer ancilla qubits, respectively. The illustrated Z-ancilla qubit needs to interact with data qubits 1-6 to perform the stabilizer measurement.}
    \label{fig:conceptual_schematic_qldpc}
\end{figure}
\fi

\begin{figure}[tbp]
    \centering
    \includegraphics[width=0.49\linewidth]{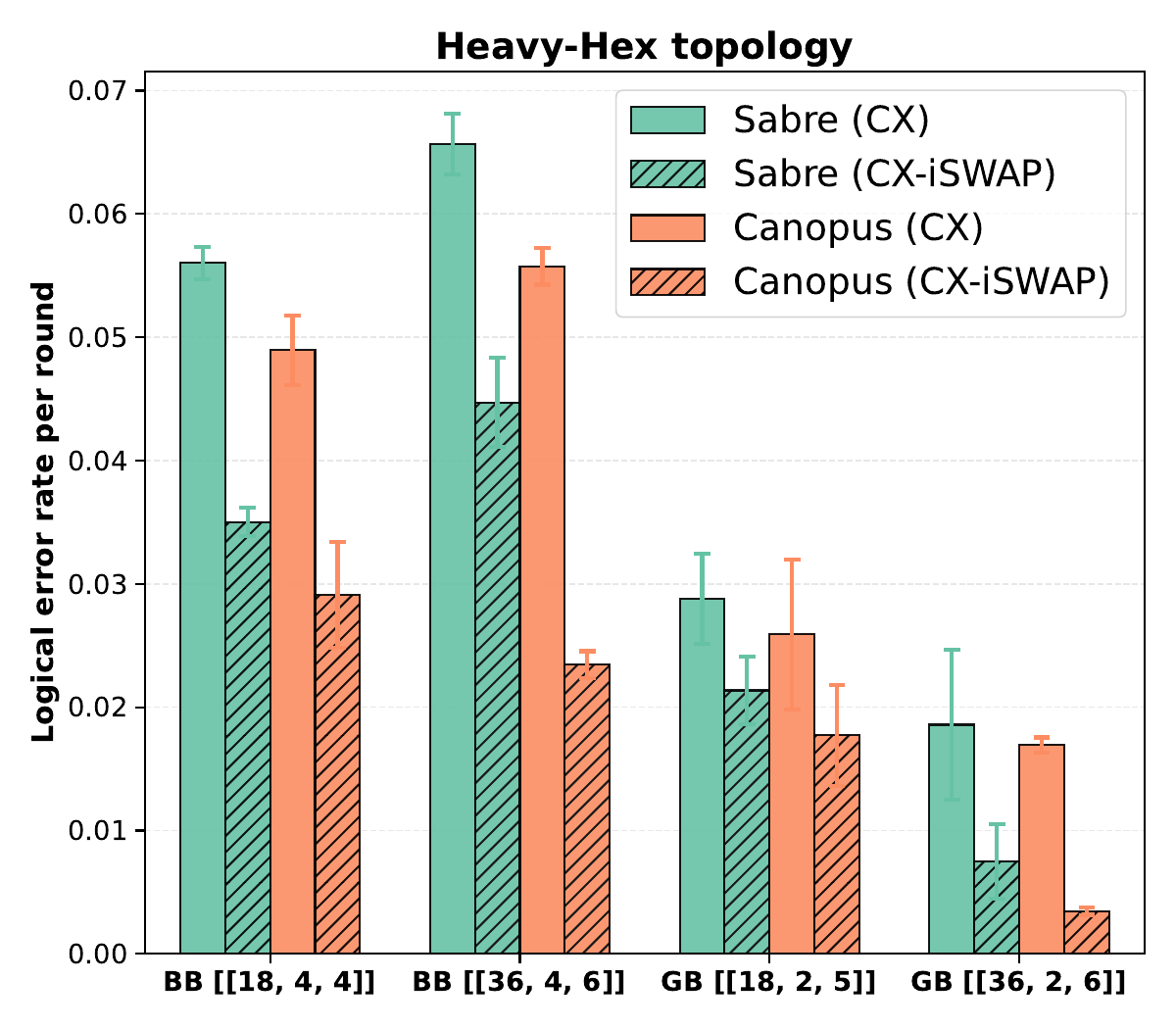}
    \includegraphics[width=0.49\linewidth]{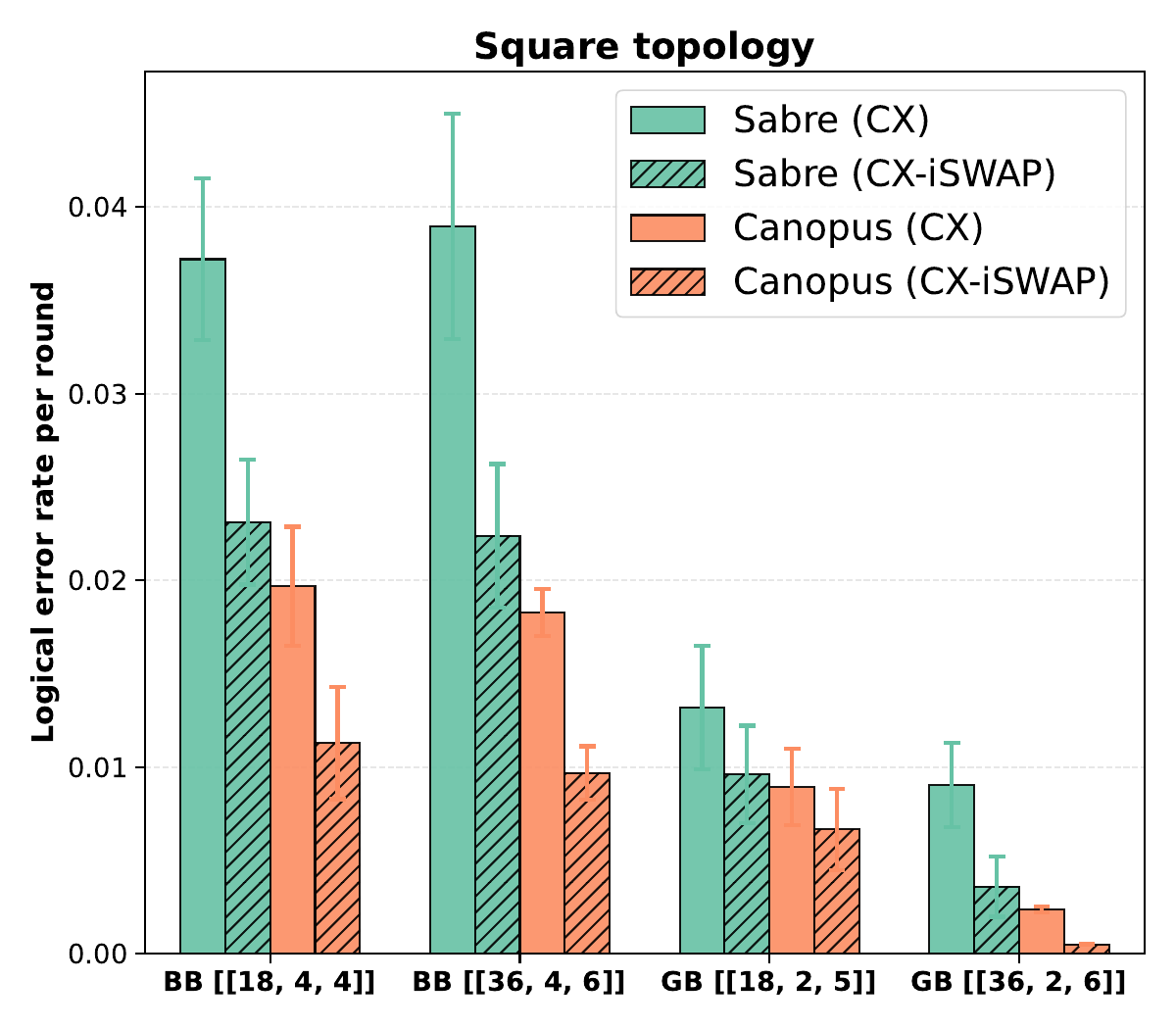}
    \caption{Logical error rates with error correction via qLDPC stabilizer circuits compiled for 2D heavy-hex (left) and square (right) topologies.}
    \label{fig:stabilizer_result}
\end{figure}

For our second case study, we shift to the fault-tolerant quantum computing (FTQC) context by looking at an important class of quantum error correction circuit---the stabilizer measurement circuit for qLDPC codes. qLDPC codes are rapidly moving from a topic of theoretical interest to a cornerstone of experimental FTQC research, mainly because of their superior encoding efficiency~\cite{bravyi2024high, breuckmann2021quantum}. However, due to their frequent long-range interactions for stabilizer measurement~\cite{breuckmann2021quantum, panteleev2021degenerate}, realizing qLDPC codes on superconducting processors with fixed, local connectivity is still hampered by significant routing overhead~\cite{wang2026demonstration}.

% Furthermore, we highlight that the $ \CX $-$ \iSWAP $ combinatorial native gate set particularly offers significant superiority than the only $ \CX $ basis gate for QEC. 

% We observe that stabilizer circuits for QLDPC code implementations are composed largely of $ \CX $ (or $ \CZ $) gates~\cite{peham2025automated, litinski2019game}, while $ \CX $ $ \iSWAP $ is a well-known pair of mirror~\cite{cross2019validating} gates differing by a $ \SWAP $ gate. Consequently, an ISA incorporating both $ \iSWAP $ and $ \CX $ leads to significant opportunities to \dquote{piggyback} a $ \SWAP $ insertion on a $ \CX $ without incurring extra 2Q gate count. A similar observation was also employed in \citet{zhou2024halma} to handle defects in the surface code, but it relied heavily on manual design and experience.

We demonstrate that the ISA-aware optimization mechanism of \canopus\ is crucial to mitigating the routing overhead across a diverse set of qLDPC codes. Here we attempt to compile the stabilizer measurement circuits with two ISAs: (1) \CXISA\ ISA with $ \CX $ as the 2Q basis gate; (2) \StabISA\ ISA with both $ \CX $ and $ \iSWAP $ as basis gates, assumed to have an identical cost.
Particularly, the \StabISA\ ISA aligns with practical hardware realities, e.g., both $ \CZ $ and $ \iSWAP $ can be natively supported by mainstream superconducting platforms~\cite{krantz2019quantum,wei2024native,arute2019quantum}. In addition, an ISA incorporating both $ \iSWAP $ and $ \CX $ leads to significant opportunities to \dquote{piggyback} a $ \SWAP $ insertion on a $ \CX $ without incurring extra 2Q gate count, as the composite block is equivalent to an $\iSWAP$, enabling the possibility of optimizing qubit routing overhead during the execution of stabilizer measurements. 
% \ZK{For example, as schematically illustrated in Fig.~\ref{fig:conceptual_schematic_qldpc}, merging the $ \CX $ and $ \SWAP $ gates between the Z-stabilizer ancilla and data qubit 4 into a single $\iSWAP$ gate can reduce the two-qubit gate overhead during the routing process.}
% which was also employed in recent works~\cite{zhou2024halma, yoshida2025low} for specific qLDPC codes.

We further build an end-to-end evaluation pipeline with qLDPC code examples from~\cite{wang2026demonstration, panteleev2021degenerate}, including the generalized bicycle (GB) and bivariate bicycle (BB) codes. We simulate the standard memory experiments using \code{stim}~\cite{Gidney2021StimAF} to evaluate the fault-tolerant performance of our compiled stabilizer measurement circuits, under the same circuit-level noise model as described in~\cite{bravyi2024high}. Finally, all syndromes are decoded using the BP-OSD decoder~\cite{panteleev2021degenerate, hillmann2024localized} to determine the logical qubit error rate.

As shown in \Cref{fig:stabilizer_result}, \canopus\ consistently achieves lower logical error rates than \sabre, as the ISA-aware approach of \canopus\ results in compiled circuits with less $ \CX/\iSWAP $ gate count and circuit depth. Under the \CXISA\ ISA, \canopus\ yields an average logical error suppression of 49.4\% on the square topology and 11.4\% on the heavy-hex topology compared to \sabre. The advantage becomes even more pronounced with the \StabISA\ combinatorial ISA, where \canopus\ achieves a 52.6\% (square) and 29.3\% (heavy-hex) error suppression, resulting from that there are many opportunities for $ \SWAP $ insertions piggybacked on $ \CX $ gates without incurring extra 2Q gate count. These results highlight two key findings: first, the ISA-aware mechanism in \canopus\ is highly effective for compiling QEC circuits, and second, the dedicated use of a hybrid $\CX$-$\iSWAP$ gate set offers a significant practical advantage for qLDPC code demonstrations on superconducting hardware.

\section{Evaluation}\label{sec:evaluation}

We further holistically evaluate \canopus\ compared to other leading methods, across representative ISAs and hardware topologies. The evaluation provides both cross-compiler and cross-ISA comparisons under the coherent settings for basis gate cost and routing overhead metric.

\begin{table}[tbp]
    \centering
    \caption{Selected quantum ISAs.}
    \setlength{\tabcolsep}{2pt}
    \begin{footnotesize}
        \begin{tabular}{|l|l|m{3cm}|} %l  l
    \hline
    \textbf{ISA} & \textbf{2Q basis gates} & \textbf{Description} \\
    \hline
    \CXISA &
    $\left\{ \text{CX} \right\}$ &
    Conventional CX gate \\
    \hline % Adds a bit of space between rows for readability
    \ZZPhaseISA &
    $\bigl\{ \ZZ_{\frac{\pi}{6}}, \ZZ_{\frac{\pi}{4}}, \ZZ_{\frac{\pi}{2}} \bigr\}$ &
    Discrete CX-family gates, i.e., $\bigl\{\sqrt[3]{\CX}, \sqrt{\CX}, \CX \bigr\}$~\cite{peterson2022optimal} \\
    \hline
    \SQiSWISA &
    $\bigl\{ \SQiSW, \iSWAP \bigr\}$ &
    Half evolution of $\iSWAP$ and $\iSWAP$~\cite{huang2023quantum} \\
    \hline
    \ZZPhaseWithMirrorISA &
    \ZZPhaseISA\, $+$ 
    $\bigl\{\pSWAP_{\frac{\pi}{6},\, \frac{\pi}{4},\, \frac{\pi}{2}}\bigr\}$ &
    \ZZPhaseISA\ ISA with the mirror gates \\
    \hline
    \SQiSWWithMirrorISA &
    \SQiSWISA\, $+$ $\left\{ \ECP, \CX \right\}$ &
    \SQiSWISA\ ISA with the mirror gates~\cite{mckinney2024mirage} \\
    \hline
    \HetISA &
    \ZZPhaseISA\, $+$ \SQiSWISA &
    Heterogeneous CX-family and iSWAP-family gates \\
    \hline
\end{tabular}

% \begin{tblr}{
%   width = \linewidth, % Set the table to the full text width
%   colspec = {|Q[l,m]|Q[c,m,mode=math]|X[l,m]|}, % Define column specifications
%   hlines, % Add horizontal lines for all rows
%   row{1} = {font=\bfseries, c}, % Style the header row (bold and centered)
% }
%     % Header
%     ISA & \text{2Q basis gates} & \text{Description} \\
    
%     % Content
%     \CXISA &
%     \left\{ \mathrm{CX} \right\} &
%     Conventional CX \\

%     \ZZPhaseISA &
%     \bigl\{ \ZZ_{\frac{\pi}{6}}, \ZZ_{\frac{\pi}{4}}, \ZZ_{\frac{\pi}{2}} \bigr\} &
%     Discrete CX-family gates, i.e., $\bigl\{\sqrt[3]{\CX}, \sqrt{\CX}, \CX \bigr\}$\\

%     \SQiSWISA &
%     \bigl\{ \SQiSW, \iSWAP \bigr\} &
%     $\iSWAP$ and its half evolution\\

%     \ZZPhaseWithMirrorISA &
%     \ZZPhaseISA + 
%     \bigl\{\pSWAP_{\frac{\pi}{6},\, \frac{\pi}{4},\, \frac{\pi}{2}}\bigr\} &
%     \ZZPhaseISA\ with the mirror gates \\
    
%     \SQiSWWithMirrorISA &
%     \SQiSWISA + \left\{ \ECP, \CX \right\} &
%     \SQiSWISA\ with the mirror gates \\

%     \HetISA &
%     \ZZPhaseISA + \SQiSWISA &
%     Heterogeneous CX-family and iSWAP-family \\
% \end{tblr}

    \end{footnotesize}
    \label{tab:isa_setting}
\end{table}

\begin{table}[tbp]
    \centering
    \caption{Benchmarks information. These metrics are collected from \tket-optimized logical circuits with only $\Can$ and $\Uthree$ gates. Circuit cost ($ \countCost $ and $ \depthCost $) is calculated in \CXISA\ ISA.}
    \label{tab:benchmark}
    \begin{footnotesize}
        \begin{tabular}{|l|r|r|r|r|r|}
\hline
\textbf{Program} & \textbf{\#Qubit} & \textbf{\#Can} & \textbf{Depth2Q} & $ C_{\mathrm{count}} $ & $ C_{\mathrm{depth}} $ \\
\hline
bigadder~\cite{li2023qasmbench} & 18 & 114 & 79 & 130.0 & 88.0 \\
\hline
bv~\cite{li2023qasmbench} & 19 & 18 & 18 & 18.0 & 18.0 \\
\hline
ising~\cite{li2023qasmbench} & 26 & 25 & 2 & 50.0 & 4.0 \\
\hline
knn~\cite{li2023qasmbench} & 25 & 72 & 50 & 84.0 & 62.0 \\
\hline
multiplier~\cite{li2023qasmbench} & 15 & 198 & 122 & 222.0 & 133.0 \\
\hline
qec9xz~\cite{li2023qasmbench} & 17 & 32 & 12 & 32.0 & 12.0 \\
\hline
qft~\cite{quetschlich2023mqt} & 18 & 153 & 33 & 306.0 & 66.0 \\
\hline
qpeexact~\cite{quetschlich2023mqt} & 16 & 127 & 43 & 260.0 & 86.0 \\
\hline
qram~\cite{li2023qasmbench} & 20 & 110 & 70 & 130.0 & 78.0 \\
\hline
sat~\cite{li2023qasmbench} & 11 & 210 & 182 & 252.0 & 204.0 \\
\hline
swap\_test~\cite{li2023qasmbench} & 25 & 72 & 50 & 84.0 & 62.0 \\
\hline
wstate~\cite{li2023qasmbench} & 27 & 52 & 28 & 52.0 & 28.0 \\
\hline
\end{tabular}

    \end{footnotesize}
\end{table}

% \begin{table}[tbp]
%     \centering
%     \caption{Routing overhead in terms of $ \countCost $ for different compilers across different topologies and quantum ISAs.}
%     \label{tab:count_routing_overhead_avg}
%     % \setlength{\tabcolsep}{1.8pt}
%     % \fontsize{4.6}{5.0}\selectfont
%     \begin{small}
%         \input{tables/routing_overhead_count_avg.tex}
%     \end{small}
% \end{table}

% \begin{table}[tbp]
%     \centering
%     \caption{Routing overhead in terms of $ \depthCost $ for different compilers across different topologies and quantum ISAs.}
%     \label{tab:depth_routing_overhead_avg}
%     % \setlength{\tabcolsep}{1.8pt}
%     % \fontsize{4.6}{5.0}\selectfont
%     \begin{small}
%         \input{tables/routing_overhead_depth_avg.tex}
%     \end{small}
% \end{table}

\subsection{Experimental settings}

\subsubsection{ISAs and basis gate costs}
We consider six different ISAs (including the conventional \CXISA\ ISA) listed in \Cref{tab:isa_setting}. These cover a wide range of basis gates from individual $ \CX $-family or $ \iSWAP $-family gates to combinatorial ones. Particularly, \SQiSWISA~\cite{huang2023quantum} proves to be a powerful ISA option and has been adopted by recent software projects~\cite{mckinney2024mirage,cirqSQiSWDecomposer}. \ZZPhaseISA\ ISA containing three fractional $ \ZZ(\theta) $ rotation gates (equivalently, $ \bigl\{\sqrt[3]{\CX}, \sqrt{\CX}, \CX \bigr\} $) is adopted by \qiskit's latest synthesis functionalities~\cite{peterson2022optimal,qiskitXXDecomposer}. For \ZZPhaseISA\ and \SQiSWISA, we also consider the mirror-enhanced version by incorporating the mirrored basis gates~\cite{cross2019validating,mckinney2024mirage} into the ISAs. We also include the \HetISA\ ISA that is the composition of \ZZPhaseISA\ and \SQiSWISA.
Their synthesis capabilities are visualized as coverage sets within Weyl chamber, respectively, as demonstrated in \Cref{fig:coverage_cx,fig:coverage_sqisw,fig:coverage_sqisw_with_mirror,fig:coverage_zzphase,fig:coverage_zzphase_with_mirror,fig:coverage_het} in Appendix.

To conduct a coherent cross-ISA performance comparison, we use a consistent basis gate cost setting:
\begin{align}
    \left\{
    \begin{array}{c}
        \CX: 1,\, \ZZ(\frac{\pi}{t}): \frac{2}{t},\, \SQiSW: 0.75,\\
        \iSWAP: 1.5,\, \ECP: 1.25,\, \pSWAP(\frac{\pi}{t}): 2-\frac{1}{t}
    \end{array}
    \right\},\label{eq:cost}
\end{align}
where $ \CX $ gate is the unit cost. Such a setting ensures the continuity of gate costs along the critical edges in the Weyl chamber. For example, $\pSWAP(\pi/2)$ is equivalent to $\iSWAP$ and they have the same cost of $ 1.5 $. With a specific gate family, basis gates with larger canonical coefficients usually requires proportionally longer interaction time on physical devices, which was reflected in the cost setting. Note that this setting is a comprehensive consideration for current gate schemes and hardware-implemented gate fidelities in superconducting~\cite{chen2025efficient,arute2019quantum,wei2024native,nguyen2024programmable,acharya2024quantum}. It is neither limited to a specific gate scheme nor a specific hardware platform.

% For example, $ \SQiSW $ can be implemented with $ 2\sqrt{2} $x faster than $ \CZ $ on flux-tunable transmons~\cite{krantz2019quantum}, the current hardware data does not report a fidelity improvement in such a high level~\cite{huang2023quantum,arute2019quantum,chen2025efficient}.

\subsubsection{Metrics}
With the consistent basis gate cost settings above, we can evaluate cross-ISA circuit cost comparison, in terms of both gate count ($ C_{\mathrm{count}} $) and circuit depth ($ C_{\mathrm{depth}} $). Specifically, $ C_{\mathrm{count}} $ refers to the sum of all 2Q gate costs according to the basis gate setting in \Cref{eq:cost}. $ C_{\mathrm{depth}} $ refers to the length of the cost-weighted critical path within the circuit DAG. $ C_{\mathrm{count}} $ and $ C_{\mathrm{depth}} $ are naturally the generalized metrics for 2Q gate count and circuit depth. To quantify the routing effects across ISAs and topologies, we define the routing overhead as the ratio of routed circuit cost to the pre-routed circuit cost, for which the pre-routed logical-level circuit cost is uniformly computed in the \CXISA\ ISA.

\subsubsection{Benchmarks}
We select medium-size benchmarks from QASMBench~\cite{li2023qasmbench} and MQTBench~\cite{quetschlich2023mqt} spanning various categories of quantum programs. These benchmarks first go through logical-level optimization by \tket\ and are rebased to $ \left\{ \Can,\, \Uthree \right\} $ as the input of the evaluated compilers, with their detailed characteristics summarized in \Cref{tab:benchmark}.
% where $ \countCost $ and $ \depthCost $ denote costs of the total gate count and circuit duration, respectively, assuming each canonical gate will be finally rebased to $ \CX $ ISA and the cost (duration) of each $ \CX $ is set to $ 1 $.

\subsubsection{Baselines}
The leading methods \sabre, \toqm, and \bqskit\ are selected as our baselines, as they represent the most practical, scalable qubit routing approaches currently available. We implement \sabre\ and \canopus\ in the Python-based \qiskit\ framework, that is, we do not use the Rust-accelerated \sabre\ in the latest \qiskit\ version, for fair runtime comparison. \toqm\ is the SOTA circuit depth driven qubit routing method~\cite{zhang2021time}. We also select \bqskit\ as a baseline as it represents another different cross-ISA compilation paradigm~\cite{bqskit}. Given a target gate set and coupling graph, \bqskit\ performs end-to-end compilation via numerical optimization, that is, finally the rebased circuit is generated.

% by \bqskit~\cite{kalloor2024quantum}. 

%  (\code{max\_iterations} is 5, both \code{trials} and \code{layout\_trials} are 10). 

Hyperparameters for \sabre\ and \canopus\ are of the same settings. Each performs 10 times layout procedure, within which 8-round bidirectional passes are proceeded and each pass performs 10 trials. The best result across all attempts is selected. \toqm\ can obtain the deterministic routing result in one go. Compiled circuits by \bqskit, although in terms of only the 2Q gate arrangement, is also random. Thus we perform 3 trials for each input case and report the best result. 

% Specifically, each algorithm first explores 10 different initial layouts (layout_trials=10). For the best layout found, the main routing procedure is executed 10 independent times (trials=10), with each trial consisting of up to 5 internal optimization iterations (max_iterations=5). The best result across all trials is selected as the final output.

\subsection{Suppression of routing overhead}

\begin{table}[tbp]
    \centering
    \caption{Average (geometric-mean) routing overhead.}% across different topologies and backend ISAs.}
    \label{tab:routing_overhead_avg}
    \setlength{\tabcolsep}{2pt}
    \begin{footnotesize}
        \begin{tabular}{|l|l||r|r|r|r||r|r|r|r|}
\hline
\multicolumn{2}{|c||}{Routing overhead} & \multicolumn{4}{c||}{In terms of $ C_{\mathrm{count}} $} & \multicolumn{4}{c|}{In terms of $ C_{\mathrm{depth}} $} \\
\hline
{Topo} & {ISA Type} & \emph{sabre} & \emph{toqm} & \emph{bqskit} & \emph{canop} & \emph{sabre} & \emph{toqm} & \emph{bqskit} & \emph{canop} \\
\hline
\multirow{6}{*}{{Chain}} 
& {\scriptsize\CXISA} & 2.26\cellcolor{SkyBlue!30} & 3.07\cellcolor{SkyBlue!60} & 2.27\cellcolor{SkyBlue!45} & 1.88\cellcolor{SkyBlue!15} & 2.57\cellcolor{SkyBlue!60} & 2.38\cellcolor{SkyBlue!45} & 2.18\cellcolor{SkyBlue!30} & 1.81\cellcolor{SkyBlue!15} \\
& {\scriptsize\ZZPhaseISA} & 1.97\cellcolor{SkyBlue!45} & 2.75\cellcolor{SkyBlue!60} & 1.92\cellcolor{SkyBlue!30} & 1.7\cellcolor{SkyBlue!15} & 2.22\cellcolor{SkyBlue!60} & 2.15\cellcolor{SkyBlue!45} & 1.91\cellcolor{SkyBlue!30} & 1.63\cellcolor{SkyBlue!15} \\
& {\scriptsize\SQiSWISA} & 2.06\cellcolor{SkyBlue!45} & 2.63\cellcolor{SkyBlue!60} & 1.85\cellcolor{SkyBlue!30} & 1.73\cellcolor{SkyBlue!15} & 2.32\cellcolor{SkyBlue!60} & 2.08\cellcolor{SkyBlue!45} & 1.84\cellcolor{SkyBlue!30} & 1.68\cellcolor{SkyBlue!15} \\
& {\scriptsize\ZZPhaseWithMirrorISA} & 1.61\cellcolor{SkyBlue!30} & 2.18\cellcolor{SkyBlue!60} & 1.69\cellcolor{SkyBlue!45} & 1.39\cellcolor{SkyBlue!15} & 1.82\cellcolor{SkyBlue!60} & 1.72\cellcolor{SkyBlue!45} & 1.66\cellcolor{SkyBlue!30} & 1.35\cellcolor{SkyBlue!15} \\
& {\scriptsize\SQiSWWithMirrorISA} & 1.72\cellcolor{SkyBlue!45} & 2.25\cellcolor{SkyBlue!60} & 1.68\cellcolor{SkyBlue!30} & 1.45\cellcolor{SkyBlue!15} & 1.95\cellcolor{SkyBlue!60} & 1.76\cellcolor{SkyBlue!45} & 1.66\cellcolor{SkyBlue!30} & 1.4\cellcolor{SkyBlue!15} \\
& {\scriptsize\HetISA} & 1.65\cellcolor{SkyBlue!45} & 2.23\cellcolor{SkyBlue!60} & 1.58\cellcolor{SkyBlue!30} & 1.43\cellcolor{SkyBlue!15} & 1.86\cellcolor{SkyBlue!60} & 1.76\cellcolor{SkyBlue!45} & 1.56\cellcolor{SkyBlue!30} & 1.36\cellcolor{SkyBlue!15} \\
\hline
\multirow{6}{*}{{HHex}} 
& {\scriptsize\CXISA} & 2.37\cellcolor{SkyBlue!30} & 2.82\cellcolor{SkyBlue!60} & 2.59\cellcolor{SkyBlue!45} & 1.93\cellcolor{SkyBlue!15} & 3.05\cellcolor{SkyBlue!60} & 2.68\cellcolor{SkyBlue!45} & 2.66\cellcolor{SkyBlue!30} & 2.08\cellcolor{SkyBlue!15} \\
& {\scriptsize\ZZPhaseISA} & 2.12\cellcolor{SkyBlue!30} & 2.65\cellcolor{SkyBlue!60} & 2.25\cellcolor{SkyBlue!45} & 1.74\cellcolor{SkyBlue!15} & 2.77\cellcolor{SkyBlue!60} & 2.52\cellcolor{SkyBlue!45} & 2.26\cellcolor{SkyBlue!30} & 1.91\cellcolor{SkyBlue!15} \\
& {\scriptsize\SQiSWISA} & 2.14\cellcolor{SkyBlue!30} & 2.48\cellcolor{SkyBlue!60} & 2.17\cellcolor{SkyBlue!45} & 1.72\cellcolor{SkyBlue!15} & 2.71\cellcolor{SkyBlue!60} & 2.43\cellcolor{SkyBlue!45} & 2.28\cellcolor{SkyBlue!30} & 1.96\cellcolor{SkyBlue!15} \\
& {\scriptsize\ZZPhaseWithMirrorISA} & 1.7\cellcolor{SkyBlue!30} & 2.08\cellcolor{SkyBlue!60} & 1.88\cellcolor{SkyBlue!45} & 1.4\cellcolor{SkyBlue!15} & 2.2\cellcolor{SkyBlue!60} & 2.0\cellcolor{SkyBlue!45} & 1.96\cellcolor{SkyBlue!30} & 1.56\cellcolor{SkyBlue!15} \\
& {\scriptsize\SQiSWWithMirrorISA} & 1.78\cellcolor{SkyBlue!30} & 2.09\cellcolor{SkyBlue!60} & 1.98\cellcolor{SkyBlue!45} & 1.46\cellcolor{SkyBlue!15} & 2.27\cellcolor{SkyBlue!60} & 2.02\cellcolor{SkyBlue!30} & 2.1\cellcolor{SkyBlue!45} & 1.66\cellcolor{SkyBlue!15} \\
& {\scriptsize\HetISA} & 1.74\cellcolor{SkyBlue!30} & 2.13\cellcolor{SkyBlue!60} & 1.86\cellcolor{SkyBlue!45} & 1.43\cellcolor{SkyBlue!15} & 2.25\cellcolor{SkyBlue!60} & 2.05\cellcolor{SkyBlue!45} & 1.98\cellcolor{SkyBlue!30} & 1.58\cellcolor{SkyBlue!15} \\
\hline
\multirow{6}{*}{{Square}} 
& {\scriptsize\CXISA} & 1.64\cellcolor{SkyBlue!30} & 2.18\cellcolor{SkyBlue!60} & 2.06\cellcolor{SkyBlue!45} & 1.38\cellcolor{SkyBlue!15} & 1.94\cellcolor{SkyBlue!45} & 1.87\cellcolor{SkyBlue!30} & 2.47\cellcolor{SkyBlue!60} & 1.49\cellcolor{SkyBlue!15} \\
& {\scriptsize\ZZPhaseISA} & 1.35\cellcolor{SkyBlue!30} & 1.87\cellcolor{SkyBlue!60} & 1.61\cellcolor{SkyBlue!45} & 1.16\cellcolor{SkyBlue!15} & 1.63\cellcolor{SkyBlue!45} & 1.61\cellcolor{SkyBlue!30} & 1.94\cellcolor{SkyBlue!60} & 1.24\cellcolor{SkyBlue!15} \\
& {\scriptsize\SQiSWISA} & 1.63\cellcolor{SkyBlue!30} & 2.05\cellcolor{SkyBlue!60} & 1.74\cellcolor{SkyBlue!45} & 1.34\cellcolor{SkyBlue!15} & 1.89\cellcolor{SkyBlue!45} & 1.81\cellcolor{SkyBlue!30} & 2.02\cellcolor{SkyBlue!60} & 1.42\cellcolor{SkyBlue!15} \\
& {\scriptsize\ZZPhaseWithMirrorISA} & 1.16\cellcolor{SkyBlue!30} & 1.55\cellcolor{SkyBlue!60} & 1.43\cellcolor{SkyBlue!45} & 0.99\cellcolor{SkyBlue!15} & 1.39\cellcolor{SkyBlue!45} & 1.36\cellcolor{SkyBlue!30} & 1.65\cellcolor{SkyBlue!60} & 1.09\cellcolor{SkyBlue!15} \\
& {\scriptsize\SQiSWWithMirrorISA} & 1.31\cellcolor{SkyBlue!30} & 1.69\cellcolor{SkyBlue!60} & 1.56\cellcolor{SkyBlue!45} & 1.11\cellcolor{SkyBlue!15} & 1.54\cellcolor{SkyBlue!45} & 1.47\cellcolor{SkyBlue!30} & 1.83\cellcolor{SkyBlue!60} & 1.2\cellcolor{SkyBlue!15} \\
& {\scriptsize\HetISA} & 1.18\cellcolor{SkyBlue!30} & 1.58\cellcolor{SkyBlue!60} & 1.36\cellcolor{SkyBlue!45} & 1.0\cellcolor{SkyBlue!15} & 1.41\cellcolor{SkyBlue!45} & 1.38\cellcolor{SkyBlue!30} & 1.56\cellcolor{SkyBlue!60} & 1.09\cellcolor{SkyBlue!15} \\
\hline
\end{tabular}

    \end{footnotesize}
\end{table}

\Cref{tab:routing_overhead_avg} lists the geometric-mean routing overhead across all 216 cases (3 topologies $\times$ 6 ISAs $\times$ 12 programs) for each compiler, with per-benchmark details shown in \Cref{fig:routing_overhead}. \canopus\ achieves the lowest routing overhead for every ISA-topology combination. Specifically, \canopus\ reduces average routing overhead by 16.06\% in $ C_{\mathrm{count}} $ and 26.44\% in $ C_{\mathrm{depth}} $ compared to \sabre, by 34.70\% and 21.25\% compared to \toqm, and by 19.89\% and 20.72\% compared to \bqskit.

\note{Notably, \canopus\ uniquely leverages the synthesis capabilities of more expressive ISAs. With \canopus, transitioning from \CXISA\ to more powerful ISAs yields substantial routing overhead reductions---e.g., from $1.88\times$ to $1.39\times$ ($-26\%$) on 1D chain, and from $1.38\times$ to $0.99\times$ ($-28\%$) on 2D square for $ C_{\mathrm{count}} $ when equipped with \ZZPhaseWithMirrorISA\ ISA---while baseline methods exhibit much less pronounced improvements. This confirms that \canopus\ does not merely benefit from ISA rebase but actively exploits ISA expressiveness during routing.

The advantage of \canopus\ also lies in its unified optimization of both gate count and circuit depth. In contrast, \sabre\ and \bqskit\ are primarily gate-count-driven, while \toqm\ specializes in optimizing depth. This bias manifests in measurable weaknesses. \toqm\ incurs the worst count overhead across nearly all configurations. For instance, on 1D chain with \CXISA, \toqm's reaches $3.07\times$ routing in terms of $ C_{\mathrm{count}} $, more than 63\% above that of \canopus\ ($1.88\times$). Conversely, \bqskit\ suffers severe depth overhead on 2D square topology, where its $ C_{\mathrm{depth}} $-related routing overhead consistently exceeds those of all other compilers (e.g., $2.47\times$ for \CXISA\ versus $1.49\times$ for \canopus).}

Additionally, \canopus\ maintains consistently low overhead across all benchmarks, whereas every baseline fails on specific circuits. For instance, \toqm\ and \bqskit\ cannot effectively manage the routing overhead for some structurally challenging circuits like \code{qec9} and \code{qram}; \bqskit\ struggles with \code{bv} even under expressive ISAs.

% Notably, \sabre\ and \bqskit\ focus on gate count driven optimization, \toqm\ specializes in lowering circuit depth, while our \canopus\ involves both count and depth related optimization. Among them, \toqm\ leads to the worst count-related routing overhead. Even for the routing overhead in terms of depth, \toqm\ consistently underperforms \canopus, and it only outperforms \bqskit\ for 2D square topology. In addition, \canopus\ maintains routing overhead---in terms of both count and depth costs---at a consistently low level; while every baseline fails to optimize some specific circuits. For example, \toqm/\bqskit\ cannot manage the routing overhead for \code{qec9} circuit; \bqskit\ cannot for \code{bv} circuit, even though with more expressive ISAs.
% with more connected topologies and more expressive ISAs. 

% Count-cost reduction v.s. Sabre: 16.06%
% Count-cost reduction v.s. Toqm: 34.70%
% Count-cost reduction v.s. Bqskit: 19.89%

% Depth-cost reduction v.s. Sabre: 26.44%
% Depth-cost reduction v.s. Toqm: 21.25%
% Depth-cost reduction v.s. Bqskit: 20.72%

\subsection{Program-ISA-Topology co-exploration}

\begin{figure*}[tbp]
    \centering
    \subfigure[Routing overhead in terms of $ C_{\mathrm{count}} $.]{\includegraphics[width=\linewidth]{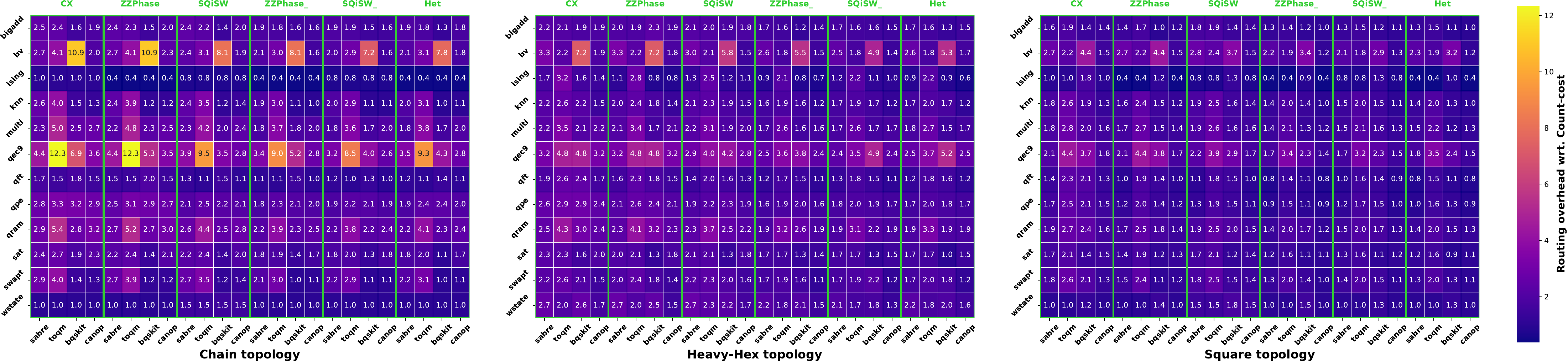}\label{fig:routing_overhead_count}}
    \subfigure[Routing overhead in terms of $ C_{\mathrm{depth}} $.]{\includegraphics[width=\linewidth]{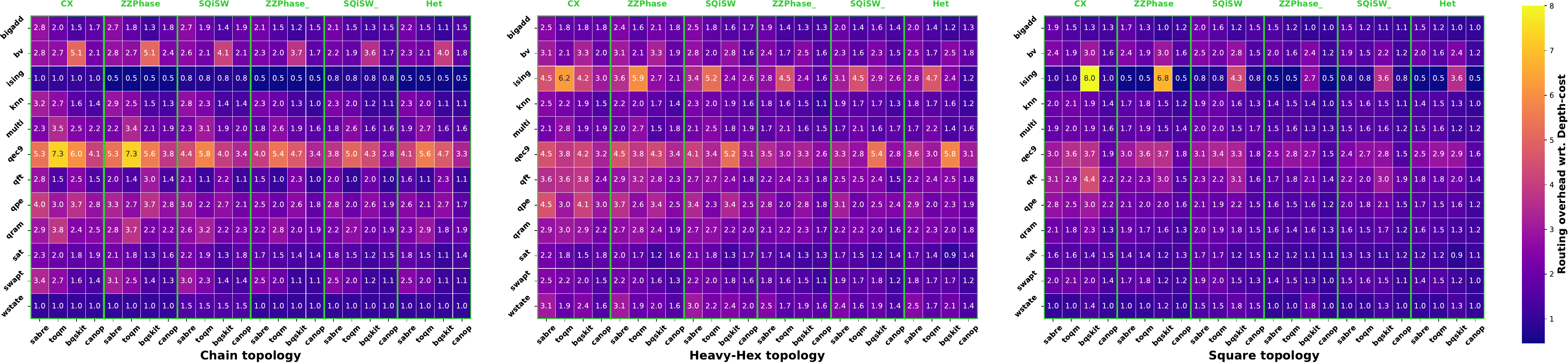}\label{fig:routing_overhead_depth}}
    \caption{Routing overhead in terms of (a) $ C_{\mathrm{count}} $ and (b) $C_{\mathrm{depth}}$ for different compilers across various device topologies and quantum ISAs.}
    \label{fig:routing_overhead}
\end{figure*}

Our evaluation also systematically explores how program patterns, ISA selection, and hardware topologies impact each other. We highlight some co-design guidelines particularly according to results achieved by \canopus\ (\Cref{tab:routing_overhead_avg}, \Cref{fig:routing_overhead}): 
\begin{itemize}[leftmargin=*, topsep=2pt, itemsep=2pt, parsep=2pt]
    \item \note{\emph{Topology-program affinity matters more than raw connectivity:}} Heavy-hex topology consistently incurs higher routing overhead across all ISAs, despite having higher average connectivity. This is because most quantum algorithms are constructed in a subroutine-unrolling approach, naturally more friendly to chain topology. The QFT kernel detailed in \Cref{sec:qft_study} is a thorough good example.
    \item \note{\emph{Heterogeneous ISAs yield disproportionate gains:}} Combining $ \CX $-family and $\iSWAP$-family gates into \HetISA\ provides substantially greater routing overhead reduction than either family alone. \note{On 1D chain under \canopus, \ZZPhaseISA\ reduces count overhead by 9.6\% and \SQiSWISA\ by 7.9\% relative to \CXISA, while \HetISA\ achieves a 23.9\% reduction. The same amplified effect holds across other topologies, indicating that the two gate families address complementary routing scenarios, enabling \canopus\ to select the most efficient decomposition in each $ \SWAP $ insertion context.} This benefit is more pronounced for circuits largely containing $ \CX/\CZ $ as 2Q blocks, such as \code{qec9}.
    % The combinatorial ISA involving both $ \CX $-family and $\iSWAP$-family gates is much superior to either sole one. Specifically, either \ZZPhaseISA\ or \SQiSWISA\ could lead to no more than 11\% routing overhead reduction compared to \CXISA, while \HetISA\ could result in more than 25\% reduction than \CXISA. This benefit is more significant for circuits largely containing $ \CX/\CZ $ as 2Q blocks, such as \code{qec9}.
    \item \note{\emph{Gate mirroring is another approach to designing powerful quantum ISAs:}} Both \ZZPhaseISA\ and \SQiSWISA\ achieve comparable results to \HetISA, since mirror gates naturally enable low-overhead $ \SWAP $ absorption, that is, $ \SWAP $ mirroring.
    \item \note{\emph{ISA selection should be program-aware:}} For Hamiltonian simulation programs like \code{ising}, \ZZPhaseISA\ ISA is essential to improve execution performance. Therein multiple Ising gates (i.e., 2-local Pauli rotations equivalent to $ \XX(\theta) $) are included. As a discrete fractional $ \XX(\theta) $ basis gate set, \ZZPhaseISA\ ISA inherently aligns better with these workloads than other gate families, significantly boosting execution performance. \note{Besides, the commutation patterns (the fourth pattern in \Cref{fig:commutation_b}) occurring in \code{ising} can be effectively identified in the canonical form and the commutativity-optimization mechanism plays a critical role in routing (see \Cref{fig:commutation_stats}, \Cref{tab:ablation} and \Cref{sec:ablation_study} for further discussion).} \note{While circuits dominated by $ \CX/\CZ $ blocks (e.g., \code{qec9}) benefit more from heterogeneous ISAs in which both $ \CX/\CZ $ and $ \iSWAP $ gates are included.}
    % \item \note{\textbf{Near-zero routing overhead is achievable.}} On 2D square topology, \canopus\ with \ZZPhaseWithMirrorISA\ achieves a count overhead of $0.99\times$---meaning the routed circuit contains \emph{fewer} two-qubit gates than the logical circuit. With \HetISA, the count overhead is $1.00\times$ and depth overhead is $1.09\times$. These results demonstrate that when topology connectivity is sufficient and the ISA is expressive enough, \canopus\ can fully compensate for routing cost through synthesis-aware gate absorption.
\end{itemize}

The real-machine experiment in \Cref{sec:qft_study} showcases how our method can help achieve superior compilation results and thus higher program fidelities for QFT kernels using the \CXISA\ and \ZZPhaseISA\ ISAs via IBM Quantum Cloud. However, there are current practical hurdles to extending this real-machine validation to alternative ISAs—ones that arise primarily from the continued scarcity of quantum processors with well-calibrated heterogeneous gate sets. 
% For instance, while IBM has proposed fractional gates, that is, the continuous $ \ZZ(\theta) $ gate set~\cite{ibmFractionalGates}, their implementation details and calibration procedures are not publicly disclosed. To our knowledge, the $ \ZZ(\theta) $ gates have the same duration as $ \CZ $ on IBM's Heron QPUs regardless of the rotation angle, and their error rates are consistently 1x-3x that of $ \CZ $. This performance is far from the ideal assumptions of \ZZPhaseISA.
Fortunately, a path forward is emerging with the recently proposed AshN gate scheme~\cite{chen2024one} and its extended generalization~\cite{yang2026reconfigurable} that enable directly implementing any basis gates with the optimal gate durations. It is also experimentally demonstrated on transmon qubits by Chen et al.~\cite{chen2025efficient}, where multiple basis gates are calibrated with high fidelity, which aligns with our cost model as well. This development may enable comprehensive, real-machine program-ISA-topology co-exploration in the near future.

\subsection{Diverse-ISA compilation paradigms}

Prior to this work, there are two major compilation paradigms targeting diverse ISAs: (1) Use the conventional compiler that operates entirely on the $ \CX $-based circuit representation before ISA rebase. The final-stage rebase pass can usually be completed via optimal synthesis in efficient analytical or numerical computation~\cite{tucci2005introduction,peterson2022optimal,huang2023quantum,mckinney2025two}. (2) Use brute-force approximate synthesis to perform structural search and numerical optimization to determine the synthesized circuit with minimal gate count~\cite{davis2019heuristics,kukliansky2023qfactor,patel2022quest}. \sabre/\toqm\ and \bqskit\ are representative of these two paradigms, respectively. In our evaluation, \bqskit\ even underperforms the industrial-standard \sabre\ in most cases. As an exception, in terms of the circuit depth, \bqskit\ leads to better results than other baselines on sparse topologies (chain, heavy-hex), as its A*-based search for 2Q gate arrangement could exhibit advantages over long-range qubit routing, but this advantage does not hold for more connected topologies. Besides, the second numerical optimization based paradigm is of exponential computational complexity. For benchmarking the 216 medium-size cases, the Rust-backend \bqskit\ requires on average 18 minutes to process each circuit with an Apple M3 Max CPU; in contrast, the Python-implemented \sabre\ requires only 17 seconds. Consequently, this second paradigm is ill-suited for compiling real-world programs, proving both ineffective and inefficient when targeting diverse ISAs (at least for discrete gate sets). Instead, although there is a gap between the conventional routing model and backend ISA properties, by means of the routing-synthesis co-optimization mechanism of \canopus, the first paradigm is enhanced to bridge the gap between the routing model and backend ISA properties and thus provides a more viable path.

% BQSKit: 1 hour for 72*3=216 cases --- 平均每个电路 18 min, Rust backend implementation
% Sabre: 17 sec per case

\iffalse

\subsection{Breakdown analysis}

In this section, we analyze individual factors in the improvement brought by \canopus, mainly about the commutative optimization mechanism. % and the heuristic depth-cost weight factor.

\begin{table}[tbp]
    \centering
    \caption{Routing overhead improvement analysis for \canopus\ relative to the routing process without commutative optimization (\code{no\_comm}). \emph{Avg.} in the table indicates the relative reduction of geometric-mean $ \countCost $ or $ \depthCost $ across all benchmarks; \emph{Max.} indicates the maximum reduction achieved on one of benchmarks.}
    \setlength{\tabcolsep}{3pt}
    \label{tab:breakdown}
    \begin{footnotesize}
        \input{tables/breakdown_table.tex}        
    \end{footnotesize}
    
\end{table}

Note that the ...

\fi

\subsection{Runtime analysis}

% \begin{figure}[tbp]
%     \centering
%     \includegraphics[width=\columnwidth]{figures/runtime.pdf}
%     \caption{Compilation latency comparison. \ZY{Redraw the figure}}
%     \label{fig:runtime}
% \end{figure}

\begin{figure}
    \centering
    % width=0.475\columnwidth
    \includegraphics[width=0.43\columnwidth]{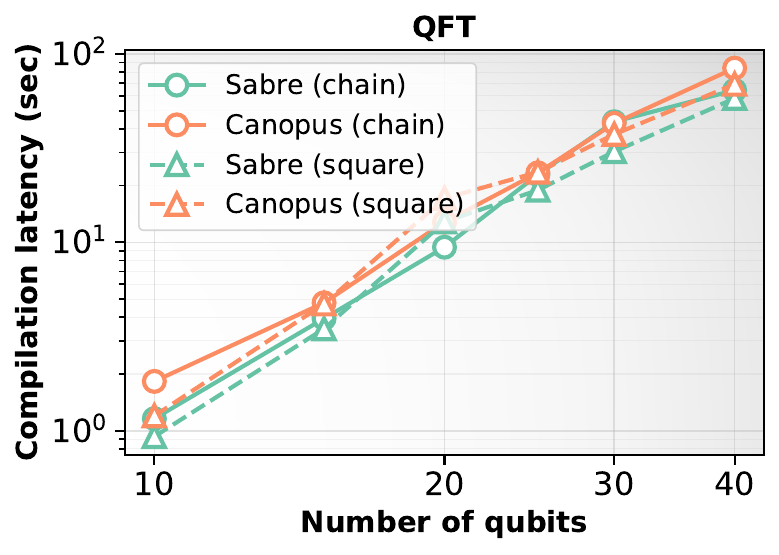}\hfill
    \includegraphics[width=0.43\columnwidth]{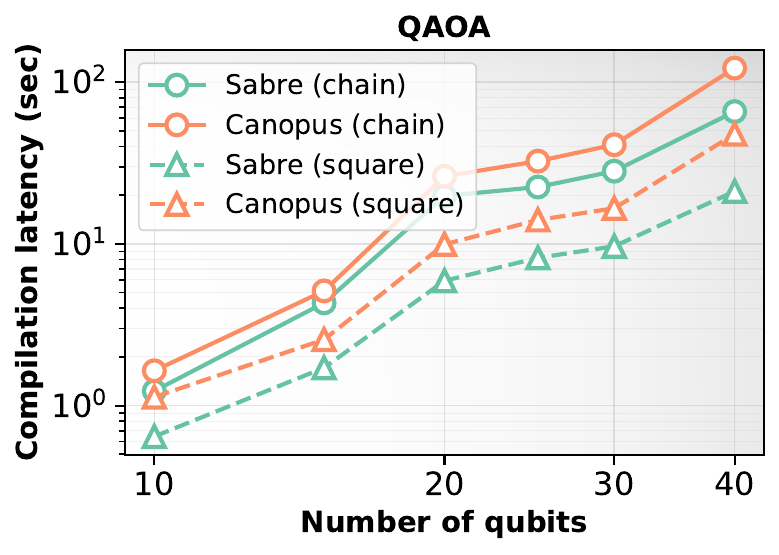}\\\quad

    \includegraphics[width=0.43\columnwidth]{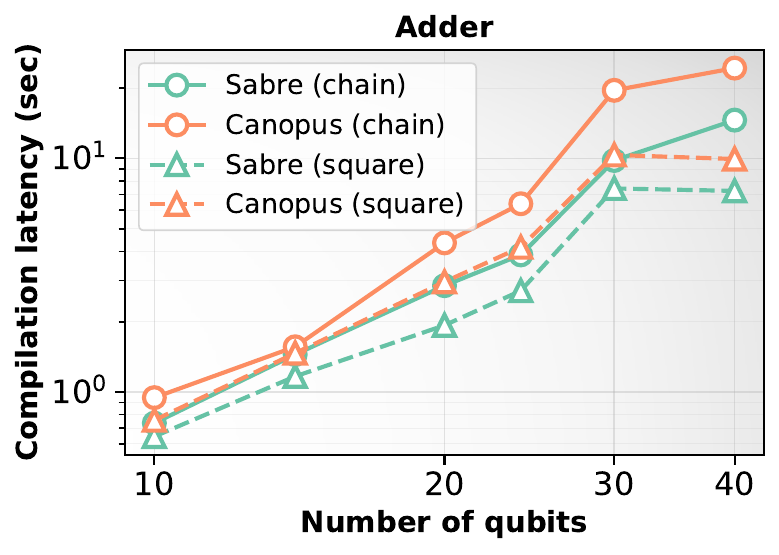}\hfill
    \includegraphics[width=0.43\columnwidth]{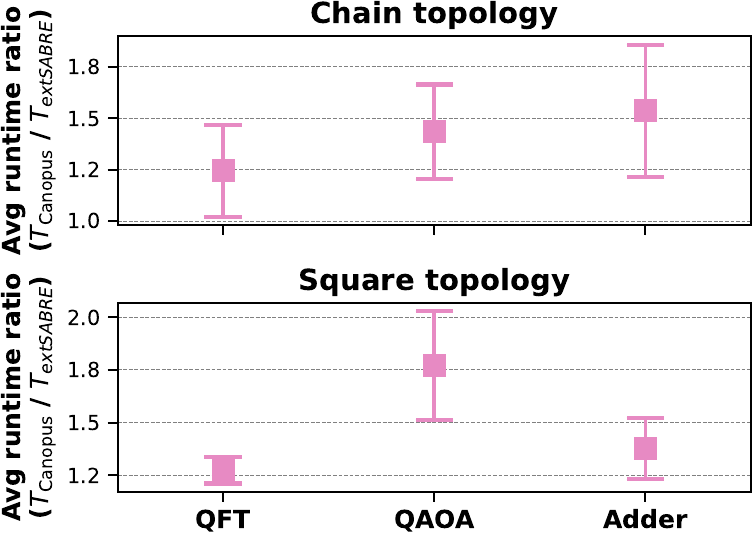}
    \caption{\note{Compilation latency comparison.}}
    \label{fig:runtime}
\end{figure}

\note{
  In our field tests for the 216 cases above, \canopus\ consistently exhibits 1--2$\times$ runtime latency of \sabre. To further evaluate runtime scalability, we benchmark \canopus\ against \sabre\ on three representative quantum algorithms including QFT~\cite{shor1994algorithms}, QAOA (MaxCut on random 3-regular graphs)~\cite{farhi2014quantum}, and CDKM ripple-carry adder~\cite{cuccaro2004new}, across 1D chain and 2D square topologies, with qubit counts ranging from 10 to 40 or 50. As shown in \Cref{fig:runtime}, both methods exhibit polynomial scaling (linear trends on log-log axes), confirming that \canopus\ preserves the asymptotic complexity of the underlying \sabre\ routing procedure. The near-constant-factor overhead arises from (1) evaluating the heuristic cost for each $ \SWAP $ candidate, including computing the cost component $ c_g $ and depth overhead $ \Delta_{\mathrm{depth}} $ and (2) updating the state tracking variables ($ L $, $ D $, and $ C $ in \Cref{alg:update_L_D_C} and \Cref{alg:update_durations_swap}) after inserting the best $ \SWAP $ or mapping an executable gate. All these operations do not involve matrix-level numerical computations and full data structure rebuilds, thus \canopus\ does not change the asymptotic scaling.
  Across all sizes of benchmarks, the average runtime ratio $T_{\canopus}/T_{\sabre}$ remains within 1--2$\times$ (\Cref{fig:runtime}), even for circuits at scale.
%   , with lower ratios observed on the square topology (2.3--3.4$\times$).
%    where higher connectivity requires fewer $\SWAP$ candidate evaluations, and moderately higher on the sparser chain (2.6--4.8$\times$).
   Overall, despite its sophisticated data structures and computation mechanisms, \canopus\ achieves practical compilation scalability comparable to the industry-standard \sabre\ algorithm.

}

\note{

\subsection{Ablation study}\label{sec:ablation_study}

To isolate the contribution of the canonical commutativity optimization (\Cref{sec:gate_commutation_guided_optimization}), we run \canopus\ with and without it across all 216 test cases. A prerequisite for this optimization is the availability of commutative 2Q gate pairs in the circuit DAG. As shown in \Cref{fig:commutation_stats}, canonical-basis circuits exhibit significantly higher commutative pair ratio (always near 100\%) among all successive 2Q gates, meaning that almost all consecutive 2Q canonical gates in these real-world applications are commutative, whose commutation patterns are partly illustrated by \Cref{fig:commutation_b}. This substantially larger pool of reorderable gates translates into concrete routing improvements. 
Table~\ref{tab:ablation} reports the gate count and circuit depth reductions achieved by enabling commutativity optimization. 
Across all ISA-topology combinations in our field tests, enabling commutativity yields an average gate count reduction of 2--11\% and depth reduction of 2--10\%, with peak improvements reaching 37--48\% on individual benchmarks. The gains are most pronounced on the chain topology under \CXISA\ ISA,
% (10.56\% average count reduction, 9.15\% average depth reduction),
where the limited connectivity forces more $ \SWAP $ insertions and thus creates more opportunities for commutation-based reordering to find lower-cost $ \SWAP $ insertions. 

Specifically, circuits with dense, non-local interaction patterns benefit most: \code{knn} and \code{swap\_test} on chain achieve 31--37\% count reduction, as their high density of overlapping two-qubit gates provides abundant commutation opportunities for the router to exploit. Arithmetic circuits such as \code{bigadder} and \code{multiplier} also see consistent improvements across topologies, since their structured but non-trivial qubit interaction graphs produce many reorderable gate pairs. In contrast, circuits with inherently local connectivity (e.g., \code{ising}, \code{wstate}) show negligible change, as their routing overhead is already minimal and leaves little room for commutation-based improvement. On denser topologies such as heavy-hex and square, the absolute reductions are smaller but remain consistent. Notably, ISAs with mirror-enhanced gate sets such as \SQiSWWithMirrorISA\ show smaller marginal gains from commutativity, as their richer native gate repertoire already reduces the baseline routing overhead, leaving less room for further optimization through gate reordering.

\begin{table}[tbp]
    \centering
    \caption{\note{Routing overhead improvement of \canopus\ vs. routing without commutative optimization (\code{no\_comm}).}}
    \label{tab:ablation}
    %  \emph{Avg.} in the table indicates the relative reduction of geometric-mean $ \countCost $ or $ \depthCost $ across all benchmarks; \emph{Max.} indicates the maximum reduction achieved on one of benchmarks.}}
    \setlength{\tabcolsep}{3.5pt}
    \begin{scriptsize}
        \begin{tabular}{|l|r|r|r|r|r|r|}
    \hline
    \multirow{2}{*}{\parbox{1.8cm}{\centering $ \countCost $ improv. \\ vs. \code{no\_comm}}} & \multicolumn{2}{c|}{\textbf{Chain}} & \multicolumn{2}{c|}{\textbf{HHex}} & \multicolumn{2}{c|}{\textbf{Square}} \\
    \cline{2-7}
     & \emph{Avg.} & \emph{Max.} & \emph{Avg.} & \emph{Max.} & \emph{Avg.} & \emph{Max.} \\
    \hline
    \CXISA & -10.56\% & -37.57\% & -0.77\% & -12.35\% & -4.1\% & -20.59\% \\
    \hline
    \ZZPhaseISA & -4.31\% & -34.81\% & -8.44\% & -35.29\% & -2.51\% & -15.62\% \\
    \hline
    \SQiSWISA & -5.81\% & -30.97\% & -6.13\% & -42.86\% & -4.82\% & -20.0\% \\
    \hline
    \ZZPhaseWithMirrorISA & 0.04\% & -5.38\% & -5.44\% & -26.58\% & -2.56\% & -8.0\% \\
    \hline
    \SQiSWWithMirrorISA & -2.88\% & -12.12\% & -5.9\% & -27.14\% & -2.86\% & -11.86\% \\
    \hline
    \HetISA & -3.59\% & -26.67\% & -8.74\% & -47.92\% & -3.59\% & -18.52\% \\
    \hline
    \hline

    \multirow{2}{*}{\parbox{1.8cm}{\centering $ \depthCost $ improv. \\ vs. \code{no\_comm}}} & \multicolumn{2}{c|}{\textbf{Chain}} & \multicolumn{2}{c|}{\textbf{HHex}} & \multicolumn{2}{c|}{\textbf{Square}} \\
    \cline{2-7}
     & \emph{Avg.} & \emph{Max.} & \emph{Avg.} & \emph{Max.} & \emph{Avg.} & \emph{Max.} \\
    \hline
    \CXISA & -9.15\% & -38.57\% & -1.99\% & -20.0\% & -1.88\% & -10.13\% \\
    \hline
    \ZZPhaseISA & -4.76\% & -40.44\% & -10.61\% & -31.08\% & 0.16\% & -12.89\% \\
    \hline
    \SQiSWISA & -3.26\% & -31.71\% & 1.14\% & -29.63\% & -2.16\% & -13.75\% \\
    \hline
    \ZZPhaseWithMirrorISA & 0.94\% & -6.4\% & -4.04\% & -25.96\% & -2.81\% & -27.81\% \\
    \hline
    \SQiSWWithMirrorISA & -1.5\% & -11.43\% & -1.45\% & -17.91\% & -1.82\% & -7.69\% \\
    \hline
    \HetISA & -5.12\% & -32.43\% & -4.84\% & -48.65\% & -1.61\% & -14.87\% \\
    \hline
\end{tabular}
        
    \end{scriptsize}
    
\end{table}

\begin{figure}[tbp]
    \centering
    \includegraphics[width=\columnwidth]{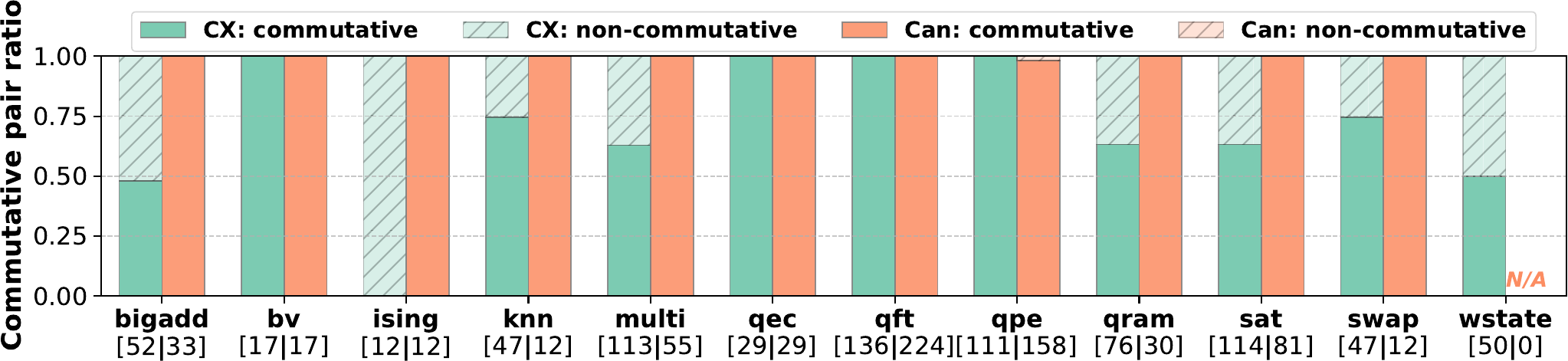}
    \caption{\note{Commutative pairs within successive 2Q Gates. X-axis ticks indicate the benchmark name and the total number of successively occurring 2Q gate pairs, i.e., $ [\text{\#}\code{CX\_pairs}\, |\, \text{\#}\code{Can\_pairs} ]$ shown below the circuit name; Y-axis indicates the ratio of commutative pairs among all successive 2Q gates.}}
    \label{fig:commutation_stats}
\end{figure}

}

\note{
\subsection{Sensitivity and trade-off analysis}\label{sec:sensitivity_analysis}

\begin{figure}[tbp]
    \centering
    \subfigure[\note{Average routing overhead wrt. varying weight on 1D chain.}]{\includegraphics[width=0.95\columnwidth]{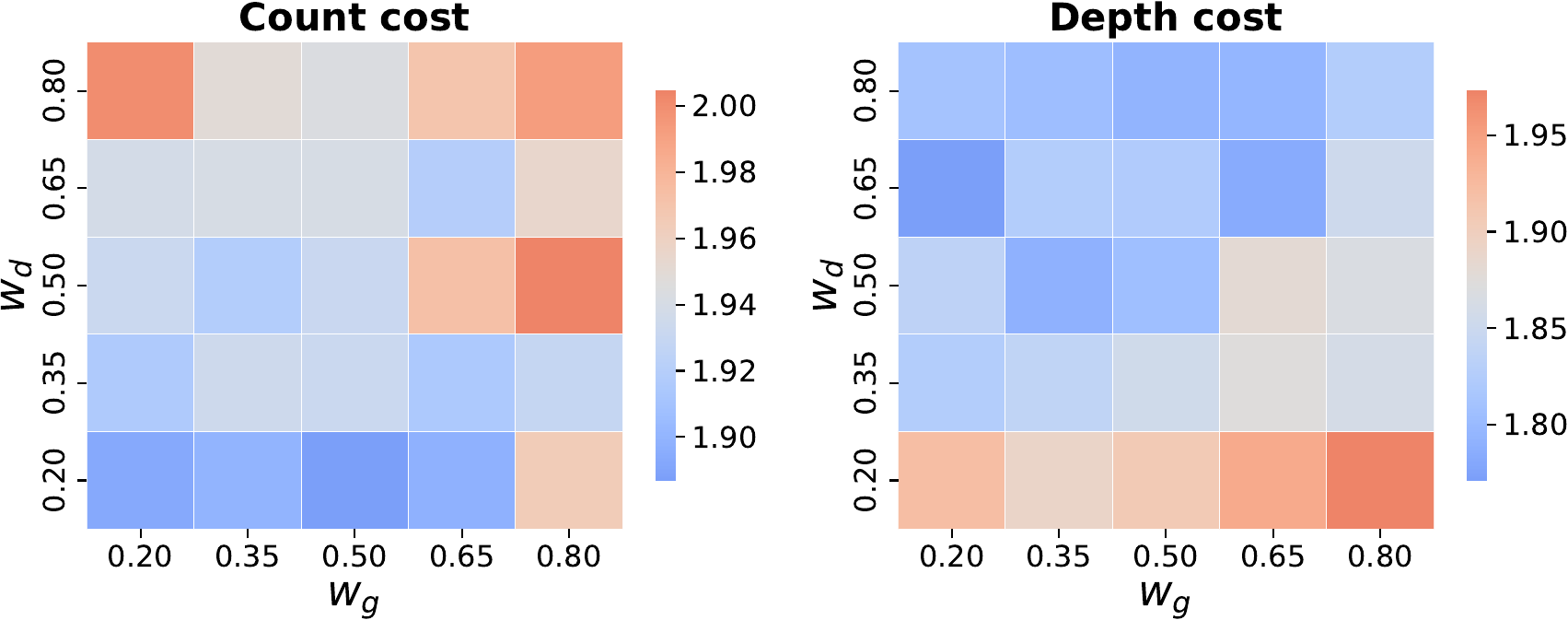}}
    \subfigure[\note{Average routing overhead wrt. varying weight on 2D square.}]{\includegraphics[width=0.95\columnwidth]{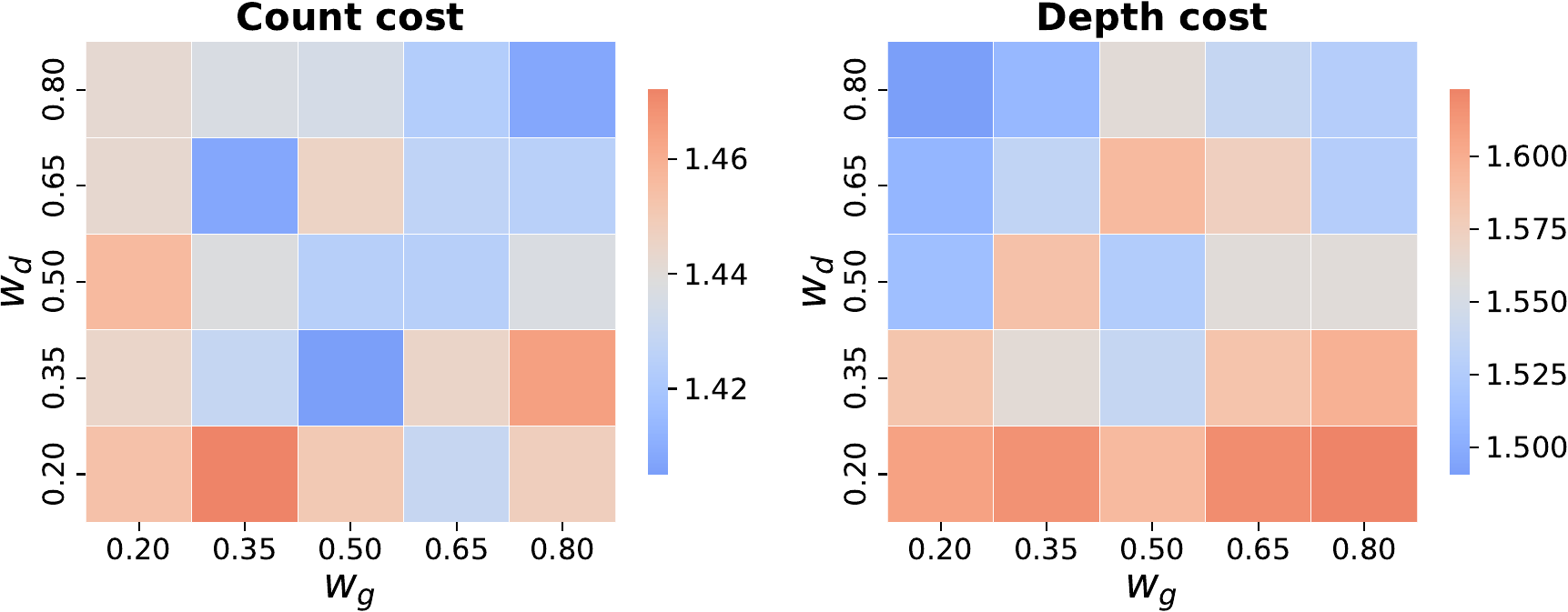}}
    \caption{\note{Sensitivity analysis for the weight factors $w_{g}$ and $w_{d}$ in the heuristic cost function. Routing overhead is geometric-mean average across all 12 benchmarks under \CXISA\ ISA on (a) 1D chain or (b) 2D square topologies.}}
    \label{fig:sensitivity}
\end{figure}

% To assess the sensitivity of \canopus\ to the weight factors $w_{g}$ and $w_{d}$ introduced in \Cref{eq:heuristic_cost}, we sweep them from $ 0.2 $ to $ 0.8 $ with step $ 0.15 $ (totally 25 parameter combinations) and evaluate the average routing overhead across all 12 benchmarks under \CXISA\ ISA on 1D chain. Each data point is obtained by setting fewer  bidirectional routing iterations to reduce the overall evaluation time. The same sensitivity analysis is also performed for 2D square topology.

To assess the sensitivity to the weight factors $w_g$ and $w_d$ (\Cref{eq:heuristic_cost}), we sweep both parameters from $0.2$ to $0.8$ in steps of $0.15$ (totally 25 combinations).
For each configuration, we evaluate the average routing overhead across all 12 benchmarks using the \CXISA\ ISA on both 1D chain and 2D square topologies.  To accelerate this extensive evaluation, all data points are generated using a reduced number of bidirectional routing iterations. \Cref{fig:sensitivity} demonstrates that the heuristic is robust to the choice of weights: across the central region ($w_g, w_d \in [0.35, 0.65]$), the average count overhead varies by less than 3.5\% on both topologies (chain: $1.92$--$1.97$; square: $1.41$--$1.45$), and the default setting ($w_{g} = w_{d} = 0.5$) lies within 2.5\% of the global optimum in all four heatmaps. The worst overhead consistently occurs at low $w_{d}$ values (bottom rows), where the synthesis-aware depth optimization is effectively disabled and the heuristic degrades toward vanilla gate count driven routing; increasing $w_d$ progressively improves depth quality, an effect especially pronounced on 2D square. The default $w_g = w_d = 0.5$ setting therefore provides a well-balanced operating point that sits close to the Pareto front between gate count and circuit depth.

}

\section{Related Work}\label{sec:related}
Qubit mapping/routing is one of the most well-explored topics of quantum compiler research~\cite{zhu2025quantum}, as it shares similar methodologies with instruction scheduling~\cite{codina2001unified,hennessy1983postpass} and register allocation~\cite{chaitin1982register,poletto1999linear} in classical computing. 

To perform scalable qubit routing, Zulehner et al.~\cite{zulehner2018efficient} introduces an A*-based algorithm to minimize $ \SWAP $ gate overhead for concurrent $\CX$ gate layers. The approach partitions the circuit into layers and solves the mapping problem subsequently. Li et al.~\cite{li2019tackling} also utilizes the circuit DAG layering thought and proposes a bidirectional routing procedure \sabre\ to find better initial mappings thus with lower $\SWAP$ insertion count.
% desired to result in \#$ \SWAP $ inserted minimization as expected.
It also briefly discusses the trade-off between the inserted $ \SWAP $ count and the circuit depth but does not prioritize optimizing circuit depth.
Subsequent works have aimed to improve circuit depth and parallelism, either by using \sabre-like heuristics~\cite{lao2021timing,annechini2025ddroute,zou2024lightsabre} or graph matching techniques~\cite{childs2019circuit}. Zhang et al.~\cite{zhang2021time} systematically investigates the depth-optimality of qubit mapping and proposes an A*-based method \toqm\ that reported superior performance over existing solver-based depth-driven approaches~\cite{tan2020optimal}. However, holistic optimality of qubit routing is contingent on the specific ISA, device topology, and circuit cost model, and is rarely guaranteed by theoretical bounds. Indeed, our evaluation reveals that \toqm\ does not always produce depth-optimal results compared to our heuristic, \canopus. For instance, the case study in \Cref{sec:qft_study} demonstrates that the mapping scheme for the QFT kernel, purported to be optimal in their analysis, can be further improved. Besides, while several studies have explored merging $ \SWAP $ gates with preceding operations and reordering commutative gates during routing to enhance performance~\cite{lao20222qan,tan2021optimal,liu2022not,mckinney2024mirage}, these approaches remain largely restricted to specific routing models, program patterns, or basis gate sets.

% \ZY{Cover more about scalable ML/RL-based, noise-aware qubit maping}

% \ZY{
%     Several studies have proposed merging SWAP gates with preceding operations or using "aggregated instructions" for specific gate sets~\cite{shi2019optimized, gokhale2019partial, lao20222qan}. Others have explored instruction set design and its impact on performance~\cite{lao2021designing, huang2023quantum, chen2024one}. In particular, reordering commuting gates has been leveraged for specific algorithms like QAOA~\cite{alam2020circuit}.

%     \canopus\ distinguishes itself from these works in three key ways. First, unlike previous methods that target specific architectures or instruction subsets (e.g., [1, 2, 7] from the reviewer's list), \canopus\ provides a \textbf{unified framework} based on the Monodromy Polytope~\cite{peterson2020fixed, peterson2022optimal} that is applicable to \textit{any} 2Q basis gate set. Second, while ISA design works~\cite{lao2021designing, huang2023quantum} discuss hardware capabilities, \canopus\ is the first to integrate this ISA-awareness directly into a \textbf{scalable routing heuristic}, bridging the gap between routing and synthesis. Third, our \textbf{generalized commutativity} theorem (Theorem 1) provides an ISA-independent condition in the Weyl chamber, extending the algorithmic-specific rules used in prior work~\cite{alam2020circuit, gokhale2019partial}.
% }

With the recent development of advanced quantum ISAs such as superconducting fractional gates~\cite{ibmFractionalGates} and fSim~\cite{foxen2020demonstrating,lao2021designing} or XY~\cite{abrams2020implementation} family gates, ion-trapped partial entangling gates~\cite{ionqPartialGates,yale2025realization}, and the AshN gates~\cite{chen2024one,chen2025efficient,yang2026reconfigurable}, some works have begun exploring how to efficiently utilize these ISAs to make compiler optimizations closer to hardware characteristics. McKinney et al.~\cite{mckinney2024mirage} investigates the practical performance of \SQiSWISA\ ISA proposed by Huang et al.~\cite{huang2023quantum} and the synthesis capability when incorporating the basis gates' mirrors into the ISA. Their modified \sabre\ algorithm offers a preliminary attempt at the collaborative gate decomposition and qubit routing approach, while the optimization opportunities considered therein are limited and the algorithmic techniques are not sophisticated. \bqskit~\cite{bqskit} and the series of works behind it~\cite{davis2019heuristics,wu2020qgo,kukliansky2023qfactor,younis2021qfast} provide a toolkit to rebase arbitrary 2Q unitaries to specific ISAs through approximate synthesis (structural search and numerical optimization) which is not computationally efficient. Approximate synthesis by \bqskit\ does not ensure optimal schemes for two-qubit and multi-qubit circuit synthesis. In addition, due to the lack of native compilation strategies and a rational synthesis cost model, Kalloor et al.~\cite{kalloor2024quantum} claims that alternative ISAs are hardly comparable to \CXISA\ when evaluating quantum hardware roofline by \bqskit. As for the applicability of expanded ISAs to QEC, Google's latest theoretical~\cite{mcewen2023relaxing} and experimental~\cite{eickbusch2024demonstrating} works demonstrate that the \CXISA-\iSWAPISA\ combination ISA could help suppress the fault-tolerant threshold. Zhou et al.~\cite{zhou2024halma} proposes a routing-based method enhanced by \CXISA-\iSWAPISA\ for overcoming ancilla defects among surface code blocks while preserving encoded logical information, but it relies on manual design and experience.

% - expanded ISAs \& systematic utilization
%     - Noise-aware ...
%     - Mirage ... not sophisticated algorithm, ... a subset of our approach
%     - BQSKit ... quantum hardware roofline .. they claim that ... 
%     - The last-step \dquote{synthesizer} .... most based on \dquote{approximate synthesis} (roofline, heterogeneous, ZZ(theta))

% (((((((None of them make deep co-optimization tailored to various quantum ISAs in a systematic and efficient approach)))))))

% With respect to the heuristic cost for $ \SWAP $ search, our routing algorithm involves the duration (generalization metric of circuit depth) driven goal by taking the canonical gate synthesis cost into the circuit duration increment. 

\section{Conclusion}\label{sec:conclusion}

In our work, we introduce \canopus, the first unified, ISA-aware qubit routing framework designed to operate across diverse quantum hardware. By leveraging the canonical two-qubit gate representation and a formal cost model derived from monodromy polytope theory, \canopus\ achieves deep co-optimization of routing and synthesis. It not only demonstrates the practical superiority of emerging quantum ISAs but also enables systematic co-exploration of how different ISAs, program patterns, and hardware topologies interact, providing a powerful new tool for quantum computing system design.

% We introduce \canopus, the first unified, ISA-aware qubit routing framework for diverse quantum hardware. By modeling gate costs with monodromy polytopes, \canopus\ deeply co-optimizes routing and synthesis. This approach demonstrates the superiority of emerging ISAs and provides a systematic tool for co-exploring the interplay between hardware topology, instruction sets, and program structure, paving the way for true quantum co-design.

% It is promising to explore novel Clifford circuit optimization techniques drawing on the canonical gate representation.

\section*{Acknowledgements}
% We thank all anonymous reviewers for their thoughtful feedbacks.
This research was partially conducted by AI Chip Center for Emerging Smart Systems (ACCESS), supported by the InnoHK initiative of the Innovation and Technology Commission of the Hong Kong Special Administrative Region Government. It was also supported partially by Research Grants Council of Hong Kong SAR (\#16213824 \& \#16212825).
This research was also funded by the Shanghai Institute of Mathematics and Interdisciplinary Sciences under grant number \textsc{simis-id-2025-qt}.
Z. Y. would like to thank Xueci Zhang for her valuable suggestions on the paper's visual presentation, especially the figure and color styles.
D. D. would like to thank God for all of His provisions.

%%%%%%% -- PAPER CONTENT ENDS -- %%%%%%%%

%%%%%%%%% -- BIB STYLE AND FILE -- %%%%%%%%
\bibliographystyle{IEEEtranS}
\balance

\bibliography{reference}
%%%%%%%%%%%%%%%%%%%%%%%%%%%%%%%%%%%%

\onecolumn
\appendix
\section{Canonical gate and 2Q circuit synthesis}\label{sec:appendix_A}

% In this section we show the basic mathematical properties of the canonical form of 2Q unitary and then discuss the synthesis capability of some 2Q basis gates.

\subsection{Canonical decomposition}\label{appendix:canonical_form}

$\mathbf{SU}(N)$ is a real manifold with dimension $N^2 - 1$, within which any element is a \emph{special unitary} matrix with determinant equal to 1. Since the global phase does not affect quantum computation processes, it is sufficient to focus on the mathematical properties of special unitaries in the area of circuit synthesis. A generic 2Q gate, despite having 15 real parameters, can have its nonlocal behavior fully characterized by only 3 real parameters. This method, known as \emph{Canonical decomposition} or \emph{KAK decomposition} from Lie algebra theory, is widely adopted in quantum computing~\cite{zhang2003geometric,tucci2005introduction,bullock2003arbitrary,zulehner2019compiling}. Specifically, for any $U \in \mathbf{SU}(4)$, there exists a unique $\vec{\eta} = (x, y, z) \in W \subseteq \mathbb{R}^3$, along with $V_1, V_2, V_3, V_4 \in \mathbf{SU}(2)$ and a global phase, such that
\begin{align}
U = g \cdot (V_1 \otimes V_2) e^{-i\vec{\eta} \cdot \vec{\Sigma}} (V_3 \otimes V_4),\, g \in \{1, i\}
\end{align}
where $\vec{\Sigma} \equiv (XX, YY, ZZ)$~\cite{tucci2005introduction}. The set
\begin{align}
W := \left\{(x, y, z) \in \mathbb{R}^3 \,\vert\, \frac{\pi}{4} \geq x \geq y \geq |z|,\, z \geq 0 \text{ if } x = \frac{\pi}{4}\right\}
\end{align}
is known as the \emph{Weyl chamber}~\cite{zhang2003geometric}, and  $\vec{\eta} \in W$ is known as the \emph{Weyl coordinate} of $ U $. We also refer to a gate of the form 
\begin{align*}
    \Can(a,b,c):= e^{-i\frac{\pi}{2}(a\,XX+b\,YY+c\,ZZ)} = \begin{pmatrix}
        e^{-i \frac{c\pi}{2}} \cos{\frac{(a-b)\pi}{2}} & 0 & 0 & -i e^{-i \frac{c\pi}{2}} \sin{\frac{(a-b)\pi}{2}} \\
        0 & e^{i \frac{c\pi}{2}} \cos{\frac{(a+b)\pi}{2}} & -i e^{i \frac{c\pi}{2}} \sin{\frac{(a+b)\pi}{2}} & 0 \\
        0 & -i e^{i \frac{c\pi}{2}} \sin{\frac{(a+b)\pi}{2}} & e^{i \frac{c\pi}{2}} \cos{\frac{(a+b)\pi}{2}} & 0 \\
        -i e^{-i \frac{c\pi}{2}} \sin{\frac{(a-b)\pi}{2}} & 0 & 0 & e^{-i \frac{c\pi}{2}} \cos{\frac{(a-b)\pi}{2}}
    \end{pmatrix}
\end{align*}
as a \emph{canonical} gate. Two 2Q gates $ U $ and $ V $ are considered \emph{locally equivalent} if they differ only by 1Q gates, meaning their canonical coefficients can be transformed into one another via the equivalence rules~\cite{crooks2020gates}:
\begin{enumerate}
    \item $(a,b,c)\sim (b,a,c)$ or $(a,b,c)\sim (c,b,a)$, i.e., any permutation of the coefficients;
    \item $(a,b,c)\sim (-a, -b, c)$;
    \item $(a,b,c)\sim (a-1, b, c)$;
    \item $(1/2, b, c) \sim (1/2, b, -c)$.
\end{enumerate}
Note that we align the conventional that canonical coefficient $ (a,b,c) $ differs from Weyl coordinate $ (x,y,z) $ by a $ \frac{\pi}{2} $ factor. Unless otherwise specified, the canonical coefficients of gates in quantum ISAs and circuits are confined to $ \frac{1}{2}\geq a \geq b \geq \lvert c\rvert $. While for the Weyl chamber visualization by means of \code{weylchamber}~\cite{weylchamber}, we assume the Weyl coordinates are confined to $\left\{\frac{\pi}{4}\geq x \geq y \geq z\geq 0\right\} \cup \left\{\frac{\pi}{4} \geq \frac{\pi}{2}-x \geq y \geq z \geq 0\right\}$, as illustrated by \Cref{fig:weyl_chamber}. Conversion of Weyl coordinates for different conventions is simple according to the equivalence rules above.

\subsection{Quantum ISA and the synthesis capability}\label{appendix:isa_analysis}

\begin{figure}[bp]
    \centering
    \begin{minipage}[t]{0.48\textwidth}
        \centering
        \includegraphics[width=\textwidth]{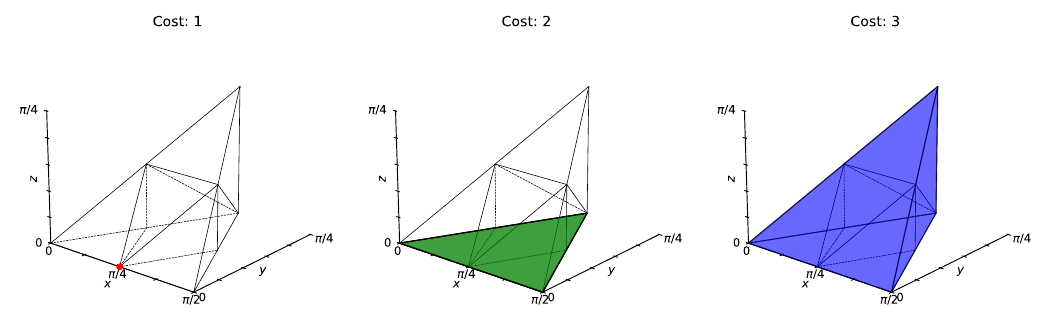}
        \caption{Coverage set for \CXISA\ ISA.}
        \label{fig:coverage_cx}
    \end{minipage}
    \hfill
    \begin{minipage}[t]{0.48\textwidth}
        \centering
        \includegraphics[width=\textwidth]{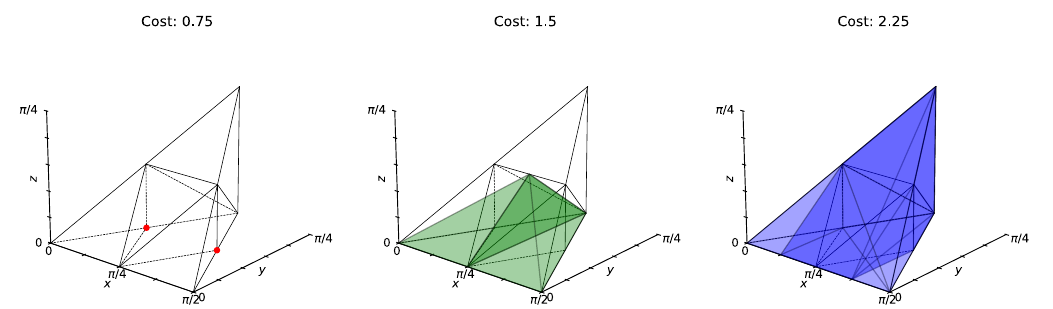}
        \caption{Coverage set for \SQiSWISA\ ISA.}
        \label{fig:coverage_sqisw}
    \end{minipage}
\end{figure}

\begin{figure}[tbp]
    \centering
    \includegraphics[width=\textwidth]{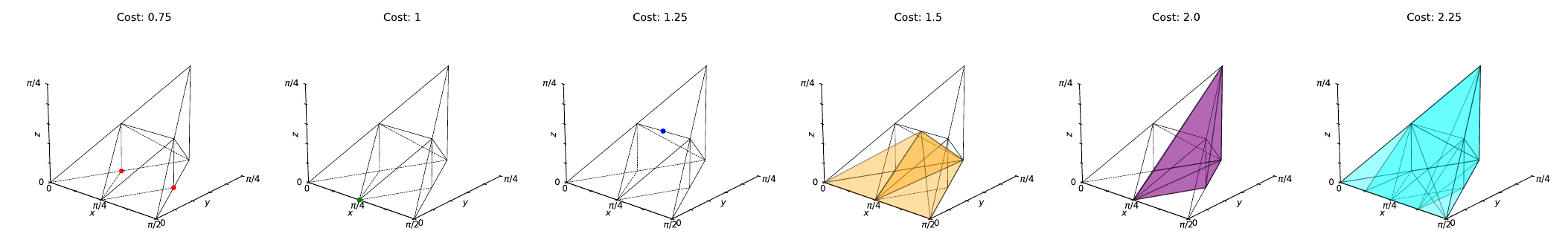}
    \caption{Coverage set for \SQiSWWithMirrorISA\ ISA.}
    \label{fig:coverage_sqisw_with_mirror}
\end{figure}

\begin{figure}[tbp]
    \centering
    \includegraphics[width=\textwidth]{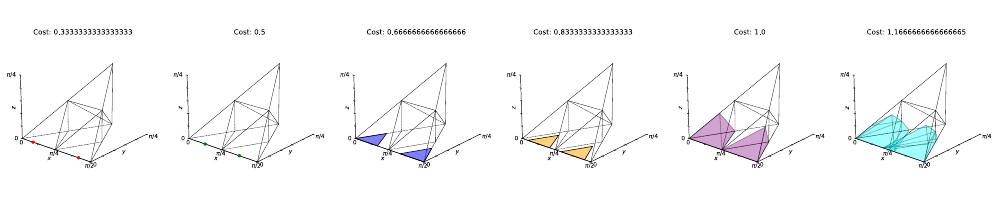}\vspace{-1.5em}
    \includegraphics[width=\textwidth]{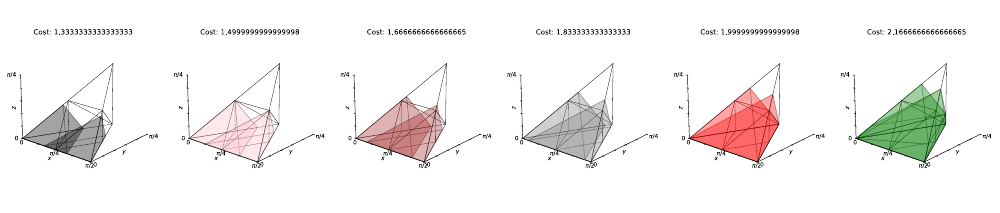}\vspace{-1.5em}
    \includegraphics[width=0.83\textwidth]{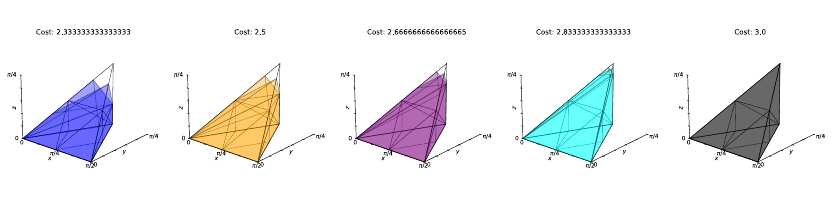}\vspace{-2em}
    \caption{Coverage set for \ZZPhaseISA\ ISA.}
    \label{fig:coverage_zzphase}
\end{figure}

\begin{figure}[tbp]
    \centering
    \includegraphics[width=\textwidth]{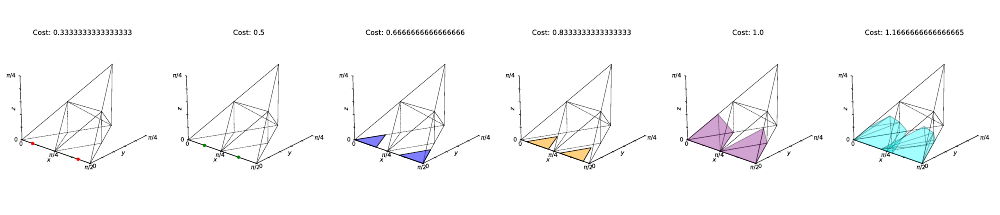}\vspace{-1.5em}
    \includegraphics[width=\textwidth]{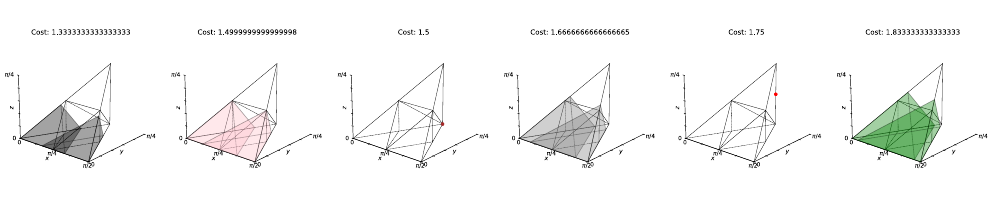}\vspace{-1.5em}
    \includegraphics[width=\textwidth]{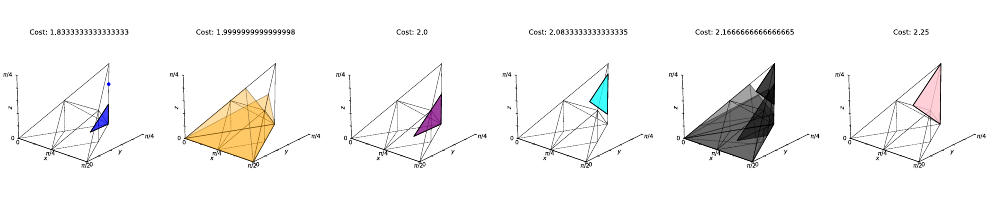}\vspace{-1.5em}
    \includegraphics[width=0.5\textwidth]{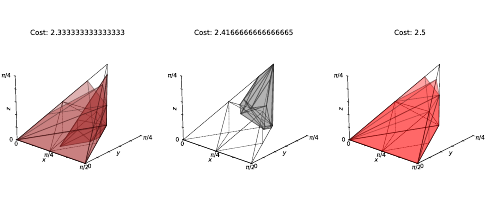}\vspace{-2em}
    \caption{Coverage set for \ZZPhaseWithMirrorISA\ ISA.}
    \label{fig:coverage_zzphase_with_mirror}
\end{figure}

\begin{figure}[tbp]
    \centering
    \includegraphics[width=\textwidth]{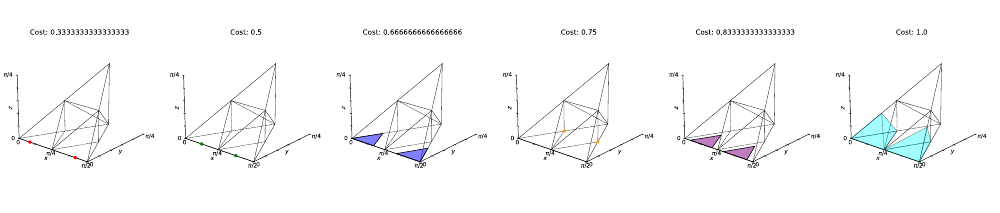}\vspace{-1.5em}
    \includegraphics[width=\textwidth]{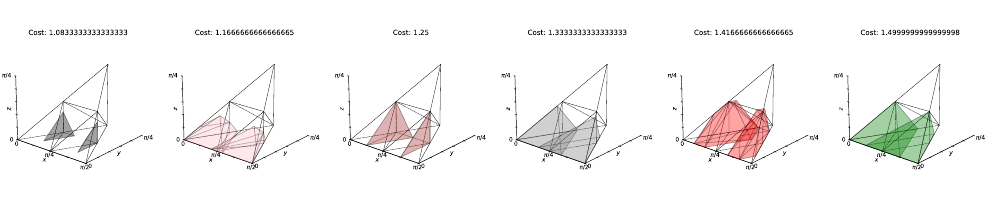}\vspace{-1.5em}
    \includegraphics[width=\textwidth]{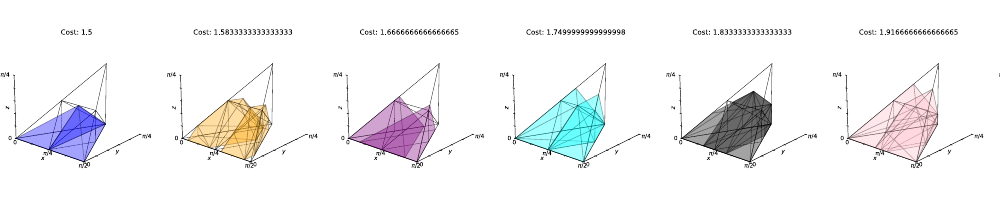}\vspace{-1.5em}
    \includegraphics[width=0.66\textwidth]{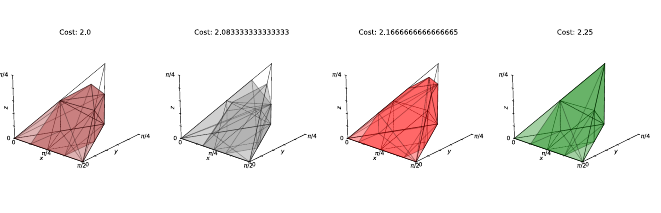}\vspace{-2em}
    \caption{Coverage set for \HetISA\ ISA.}
    \label{fig:coverage_het}
\end{figure}

A quantum ISA typically includes qubit initialization, a universal gate set, and measurement. It serves as an interface between software and hardware by mapping high-level semantics of quantum programs to low-level native quantum operations or pulse sequences on hardware. The universal gate set, especially specified by its 2Q basis gates, is the key component of a quantum ISA that dominates its hardware-implementation accuracy and cost, as well as software-expressivity sufficiency.

$ \mathrm{CX} $ or $ \mathrm{CNOT} $ is the most popular basis gate provides by hardware vendors and considered by various quantum compiler optimization methods. The superconducting Cross-Resonance gate~\cite{rigetti2010fully} and ion-trapped Mølmer-Sørensen gate~\cite{bruzewicz2019trapped} are both $ \CX $-equivalent gates with the same canonical form $ \Can\bigl(\frac{1}{2},0,0\bigr) $. In the superconducting platforms with $ XY $-coupled Hamiltonian like Google's Sycamore~\cite{arute2019quantum}, $ \iSWAP\sim\Can\bigl(\frac{1}{2},\frac{1}{2},0\bigr) $ is another representative native 2Q basis gate and could be less sensitive to leakage error than the native $ \mathrm{CZ} $ gate. Recent experimental advances demonstrate that more basis gates could be implemented natively and calibrated in high precision~\cite{chen2025efficient,wei2024native,yale2025realization}. Particularly, some basis gates like $ \SQiSW\sim\Can\bigl( \frac{1}{4},\frac{1}{4},0 \bigr) $ and fractional $ \ZZ(\theta)\sim\Can\bigl(a,0,0\bigr) $ gates offers more promising ISA selections as they exhibit shorter gate duration, higher gate accuracy, and stronger synthesis capability.

The synthesis capability or computational power of basis gates can be geometrically illustrated by monodromy polytopes within the Weyl chamber. The coverage set for \CXISA\ depicted in \Cref{fig:coverage_cx} implies that
\begin{enumerate}
    \item One $ \CX $ gate is required to synthesize 2Q gates $\sim \Can\bigl(\frac{1}{2},0,0\bigr) $, i.e., $ \CX $-equivalent gates $ (V_1\otimes V_2) \CX (V_3\otimes V_4) $;
    \item Two $ \CX $ gates are required to synthesize 2Q gates $\sim \Can(a, b, 0) $, i.e., $ (V_1\otimes V_2) \CX (V_3\otimes V_4) \CX (V_5\otimes V_6) $;
    \item Three $ \CX $ gates are required to synthesize 2Q gates $ \sim \Can(a,b,c) $, i.e., $ (V_1\otimes V_2) \CX (V_3\otimes V_4) \CX (V_5\otimes V_6) \CX (V_7\otimes V_8) $.
\end{enumerate}
We assume the cost of one $ \CX $ gate is $ 1.0 $. Polytopes in different colors denotes the minimal circuit cost (duration) for the coverage set if synthesized by $ \CX $ and arbitrary 1Q gates. That is, on average, the number of $ \CX $ gates required to synthesize arbitrary 2Q gates is $ 3 $. In contrast, the number for \SQiSWISA\ ISA is $ 2.21 $~\cite{huang2023quantum}.

Monodromy polytope theory~\cite{peterson2020fixed} provides a framework for determining the synthesis coverage set and circuit cost (in 2Q depth) for any set of basis gates with specified costs, while the specific gate decomposition process is left to the synthesizer to complete. For the selected ISAs in \Cref{tab:isa_setting} with the basis gate costs assumed in \Cref{eq:cost}, \Cref{fig:coverage_cx,fig:coverage_sqisw,fig:coverage_sqisw_with_mirror,fig:coverage_zzphase,fig:coverage_zzphase_with_mirror,fig:coverage_het} describes their coverage sets, respectively. With the enrichment of quantum ISA (e.g., combining gate families, involving mirror gates) and heterogeneous basis gate cost settings, the coverage set reveals a richer variety of convex polyhedra. That implies more optimization effects for the ISA-ware routing mechanism in \canopus.

\subsection{2Q gate mirroring}\label{appendix:swap_mirroring}

The mirror symmetry of a 2Q gate $ U $ is defined as the composition of the original gate and a $ \SWAP $ gate~\cite{proctor2022measuring}, i.e., $ \mathrm{SWAP} \cdot U $. For example, $ \CX $ and $ \iSWAP $ is a typical pair of mirror gates as shown below.
\begin{center}
    \begin{quantikz}[row sep=0.2cm, column sep=0.2cm, align equals at=1.5]
        & \ctrl{1} & \swap{1} & \ghost{S^\dagger}\qw \\
        & \targ{} & \targX{} & \ghost{S^\dagger}\qw
    \end{quantikz} = \begin{quantikz}[row sep=0.2cm, column sep=0.2cm, align equals at=1.5]
        & \gate{S^\dagger} & \qw & \gate[2]{\mathrm{iSWAP}} & \gate{H} & \qw \\
        & \gate{H} & \gate{S^\dagger} & & \qw & \qw
    \end{quantikz}
\end{center}

\begin{figure}[h!]
    \centering
    \includegraphics[height=0.3\textwidth]{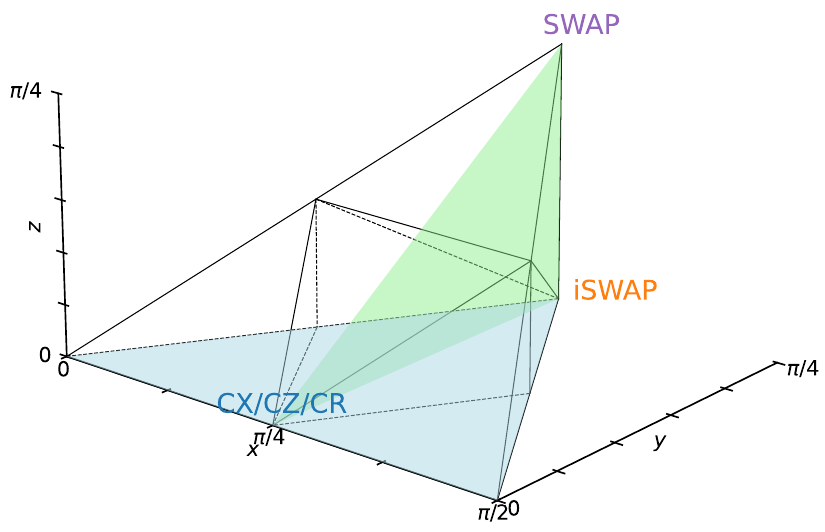}
    \caption{Mirror symmetry for $ \Can(a,b,0) $ and $ \Can(\frac{1}{2},b',c') $ gate families.}
    \label{fig:mirroring}
\end{figure}

In general, the mirroring rule for Canonical coefficients is described as
\begin{align*}
    \mathrm{SWAP} \cdot \mathrm{Can}(a,b,c)
    & \sim \left(a+\frac{1}{2}, b+\frac{1}{2}, c+\frac{1}{2}\right) 
    & \sim \left(a+\frac{1}{2}-1, b+\frac{1}{2}-1, c+\frac{1}{2}-1\right) 
    & \sim
    \begin{cases}
        \left(\frac{1}{2}-c, \frac{1}{2}-b, a - \frac{1}{2}\right), & \text{if } c \geq 0 \\
        \left(\frac{1}{2} + c, \frac{1}{2}-b, \frac{1}{2} - a\right), & \text{if } c < 0
    \end{cases}.
\end{align*}

The mirror pair of $ \CX $ and $ \iSWAP $ is a special case implying that a \CXISA-\iSWAPISA\ combination ISA could result in lower overhead in routing-synthesis collaborative optimization. Yale et al.~\cite{yale2024noise} once considers inserting $ \SWAP $ gates to get mirrored gates with lower synthesis overhead compared to the original gates, given the all-to-all topology and continuous $ \ZZ(\theta) $ gate set on ion-trapped hardware. McKinney et al.~\cite{mckinney2024mirage} discusses that integrating $ \SQiSW $'s mirror gate, i.e., $ \ECP\sim\Can\bigl(\frac{1}{4},\frac{1}{4},0\bigr) $ gate, into the powerful \SQiSWISA\ ISA, could further improve the ISA's synthesis capability and end-to-end routing-synthesis co-optimization on limited topologies.

\subsection{Commutative relation of canonical gates}\label{appendix:commutation_proof}

Herein we present detailed proof for \Cref{thm:commutation}. The \textit{if} direction is trivial, and hence we justify the \textit{only if} direction, relying on the following two lemmas.

\begin{lemma}\label{lemma:hamiltonian_exponential}
Let $A$, $B$ be two Hermitian matrices with eigenvalues in the range $[-2,2)$. If $[e^{-i\frac{\pi}{2}A},e^{-i\frac{\pi}{2}B}]=0$ then $[A,B]=0$.
\end{lemma}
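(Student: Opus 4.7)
My plan is to show that under the eigenvalue restriction, the map $\lambda \mapsto e^{-i\pi\lambda/2}$ is injective, so the matrix $e^{-i\frac{\pi}{2}A}$ retains enough information to recover $A$ via functional calculus; then commutation of the exponentials immediately propagates back to $A$ and $B$.

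More concretely, the first step is to observe that $f(\lambda) := e^{-i\pi\lambda/2}$ is periodic with period $4$, so it is injective on any half-open interval of length $4$, in particular on $[-2,2)$. This is exactly the reason the lemma excludes $\lambda=2$: one would otherwise have $f(2)=f(-2)=-1$ and injectivity would fail. Since $A$ is Hermitian with spectrum in $[-2,2)$, the distinct eigenvalues of $A$ are in bijection (via $f$) with the distinct eigenvalues of $e^{-i\frac{\pi}{2}A}$, and the eigenspaces coincide. The same holds for $B$.

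Next, because $f$ is injective on the (finite) joint spectrum $S := \mathrm{spec}(A)\cup\mathrm{spec}(B)\subset[-2,2)$, I can pick any continuous (or even polynomial) function $g$ on $\mathbb{C}$ with $g(f(\lambda))=\lambda$ for every $\lambda\in S$; Lagrange interpolation on the finite set $f(S)$ suffices. Functional calculus then gives
\begin{equation}
A = g\!\left(e^{-i\frac{\pi}{2}A}\right), \qquad B = g\!\left(e^{-i\frac{\pi}{2}B}\right).
\end{equation}
Since $[e^{-i\frac{\pi}{2}A}, e^{-i\frac{\pi}{2}B}] = 0$ by hypothesis, and any polynomial (or continuous function) applied to each side preserves commutativity, we conclude $[A,B] = g(e^{-i\frac{\pi}{2}A})\,g(e^{-i\frac{\pi}{2}B}) - g(e^{-i\frac{\pi}{2}B})\,g(e^{-i\frac{\pi}{2}A}) = 0$.

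The only subtle point, and the step I would explain most carefully, is the use of the half-open interval to secure injectivity of $f$; everything else is a standard application of the spectral theorem. An alternative phrasing would diagonalize $e^{-i\frac{\pi}{2}A}$ and $e^{-i\frac{\pi}{2}B}$ in a common eigenbasis (possible because they commute and are normal), then use injectivity of $f$ to lift the common eigenbasis to one that simultaneously diagonalizes $A$ and $B$, yielding $[A,B]=0$. Both routes are short; I expect no genuine obstacle beyond emphasizing why the endpoint $\lambda=2$ must be excluded.
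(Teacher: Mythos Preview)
Your proof is correct. The paper's argument is your ``alternative phrasing'': it explicitly constructs a common orthonormal eigenbasis for the two commuting unitaries by restricting $e^{-i\frac{\pi}{2}B}$ to each eigenspace of $e^{-i\frac{\pi}{2}A}$, then invokes injectivity of $f$ on $[-2,2)$ to conclude that the eigenspaces of $e^{-i\frac{\pi}{2}A}$ coincide with those of $A$ (and likewise for $B$), so the same basis diagonalizes $A$ and $B$.

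Your primary route via functional calculus is a genuinely different and somewhat slicker packaging. By interpolating a single polynomial $g$ on the finite set $f(\mathrm{spec}(A)\cup\mathrm{spec}(B))$, you recover $A=g(e^{-i\frac{\pi}{2}A})$ and $B=g(e^{-i\frac{\pi}{2}B})$ simultaneously, and then commutation of the exponentials propagates to $A,B$ in one line. This avoids the explicit invariant-subspace bookkeeping the paper carries out and makes the role of the eigenvalue restriction (injectivity of $f$) more transparent. The paper's version, on the other hand, is more self-contained for readers less comfortable with spectral/functional calculus, since it builds the simultaneous diagonalization by hand. Both arguments hinge on the same key observation you correctly emphasize: excluding $\lambda=2$ is exactly what keeps $f$ injective on the spectrum.
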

\begin{proof}
This follows from the fact that compatible observables (commuting operators) can be simultaneously diagonalized. In this case, the respective unitary matrix $e^{-i\frac{\pi}{2}A}$ commutes with $e^{-i\frac{\pi}{2}B}$. Denote by $A_{\lambda}$ the eigenspace corresponding to the eigenvalue $\lambda$ of $e^{-i\frac{\pi}{2}A}$, i.e. $e^{-i\frac{\pi}{2}A} = \oplus_{\lambda} \lambda A_{\lambda}$. Then we have
\begin{align}
    \forall \vec{v} \in A_\lambda,\, e^{-i\frac{\pi}{2}B}e^{-i\frac{\pi}{2}A}\vec{v}=e^{-i\frac{\pi}{2}B}\lambda \vec{v}=\lambda e^{-i\frac{\pi}{2}B}\vec{v}=e^{-i\frac{\pi}{2}A}e^{-i\frac{\pi}{2}B}\vec{v},
\end{align}

and thus $e^{-i\frac{\pi}{2}B}\vec{v}\in A_\lambda$. Thus $A_\lambda$ is $e^{-i\frac{\pi}{2}B}$-invariant and the restriction $e^{-i\frac{\pi}{2}B}\big\rvert_{A_{\lambda}}$ of $e^{-i\frac{\pi}{2}B}$ to $A_{\lambda}$ is still unitary since it preserves inner products. Hence it is diagonalizable and we can find an orthonormal basis $w_{\lambda_1},w_{\lambda_2},\ldots,w_{\lambda_k}$ consisting of eigenvectors of $e^{-i\frac{\pi}{2}B}\big\rvert_{A_{\lambda}}$. Note that these are also eigenvectors of $e^{-i\frac{\pi}{2}A}$ (with eigenvalue $\lambda$). Following the same token as above, for each eigenspace $E_{\lambda_i}$ of $e^{-i\frac{\pi}{2}A}$, we can construct an orthonormal basis $\beta_i$ for it consisting of eigenvectors of $e^{-i\frac{\pi}{2}B}$. Finally since the eigenspaces of different eigenvalues of $e^{-i\frac{\pi}{2}A}$ are orthogonal to each other, $\beta=\cup_i\beta_i$ forms an orthonormal basis of the entire Hilbert space $\mathcal{H}_n$ consisting of the coeigenvectors of both $e^{-i\frac{\pi}{2}A}$ and $e^{-i\frac{\pi}{2}B}$.

Now let $U$ be a unitary matrix with the vectors in $\beta$ being its columns, then
\begin{align}
    \begin{aligned}
    U^\dagger e^{-i\frac{\pi}{2}A}U&=D_A\\
    U^\dagger e^{-i\frac{\pi}{2}B}U&=D_B
    \end{aligned}
\end{align}

In general, an eigenvector of $e^{-i\frac{\pi}{2}A}$ need \textit{not} be that of $A$. However, since $A$ has its eigenvalues in the range $[-2,2)$, the map
\begin{align}
    f:[-2,2)\rightarrow U(1),a\rightarrow e^{-i\frac{\pi}{2}a}
\end{align}
is injective. Consequently different eigenvalues of $A$ correspond to different eigenvalues of $e^{-i\frac{\pi}{2}A}$, and hence the eigenspaces of $e^{-i\frac{\pi}{2}A}$ and $A$ coincide. Therefore, we have that
\begin{align}
    \begin{aligned}
    U^\dagger AU&=\Sigma_A\\
    U^\dagger BU&=\Sigma_B
    \end{aligned}
\end{align}
and since $[\Sigma_A,\Sigma_B]=0$ as they are diagonal, $[A,B]=0$. We obtain the desired result.

\end{proof}

\begin{lemma}\label{lemma:xx_rotation}
Let $P_1=(a_1X_1X_2+b_1Y_1Y_2+c_1Z_1Z_2)I_3$, $P_2=I_1(a_2X_2X_3+b_2Y_2Y_3+c_2Z_2Z_3)$ with $|c_1|\le b_1\le a_1\le\frac{1}{2}$, $|c_2|\le b_2\le a_2\le\frac{1}{2}$. If $[P_1,P_2]=0$ and $P_1,P_2\ne0$, then $b_1=b_2=c_1=c_2=0$.
\end{lemma}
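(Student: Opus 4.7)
The plan is to establish the lemma by a direct expansion of the commutator $[P_1,P_2]$ combined with the canonical-form constraints. The key structural observation is that $P_1$ acts as identity on qubit~3 while $P_2$ acts as identity on qubit~1, so only the qubit-2 factors generate nontrivial contributions. Expanding $[P_1,P_2]$ as a sum over the nine pairings of the three qubit-$1$--$2$ summands in $P_1$ against the three qubit-$2$--$3$ summands in $P_2$, every term factors as (single Pauli on qubit~1) $\otimes$ (commutator of single-qubit Paulis on qubit~2) $\otimes$ (single Pauli on qubit~3). The three ``diagonal'' pairings (matching Pauli types on qubit~2) vanish immediately because $[P,P]=0$, and the remaining six produce a linear combination of six three-qubit Pauli strings using $[X,Y]=2iZ$ and its cyclic variants.

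The next step is to note that these six three-qubit Pauli strings are linearly independent as $8\times 8$ matrices, so the commutation hypothesis $[P_1,P_2]=0$ forces each of their scalar coefficients to vanish. Up to the common factor $2i$ and signs, the six coefficients are exactly the monomials $a_1b_2$, $a_1c_2$, $b_1a_2$, $b_1c_2$, $c_1a_2$, $c_1b_2$. Hence the problem reduces to a small system of monomial equations in the canonical coefficients, without any further operator-theoretic input.

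To finish, I invoke the canonical ordering $|c_i|\le b_i\le a_i\le \tfrac{1}{2}$ together with $P_i\ne 0$, which forces $a_i>0$ for $i=1,2$: otherwise $a_i=0$ would collapse $b_i=c_i=0$ and make $P_i$ vanish. Cancelling $a_1$ from $a_1b_2=a_1c_2=0$ yields $b_2=c_2=0$, and cancelling $a_2$ from $b_1a_2=c_1a_2=0$ yields $b_1=c_1=0$, which is the desired conclusion. The only real care needed is careful bookkeeping of signs and the index ordering during the commutator expansion; there is no deeper analytic obstacle. Note that Lemma~\ref{lemma:hamiltonian_exponential} is not used in proving this lemma itself---its role is upstream, to reduce the unitary-level commutation hypothesis in Theorem~\ref{thm:commutation} to the Hamiltonian-level hypothesis $[P_1,P_2]=0$ treated here.
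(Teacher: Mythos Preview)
Your proposal is correct and follows essentially the same route as the paper: both expand $[P_1,P_2]$ via the single-qubit Pauli commutators on qubit~2, obtain six linearly independent three-qubit Pauli strings with coefficients $a_1b_2,\,a_1c_2,\,b_1a_2,\,b_1c_2,\,c_1a_2,\,c_1b_2$, and then use $P_i\ne 0$ together with the ordering $|c_i|\le b_i\le a_i$ to force $a_i\ne 0$ and hence $b_1=b_2=c_1=c_2=0$. If anything, your write-up is slightly more explicit than the paper's in justifying why $a_i>0$ from the canonical constraints.
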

\begin{proof}
Consider the product $P_1P_2$. We assume for the sake of contradiction that $b_1\ne0$. Using $[X,Y]=2iZ$, $[Y,Z]=2iX$, $[Z,X]=2iY$, we expand
\begin{align}
[P_1,P_2] &= 2i\bigl(a_1b_2\,X_1Z_2Y_3 - b_1a_2\,Y_1Z_2X_3 + b_1c_2\,Y_1X_2Z_3\bigr) -2i\bigl(a_1c_2\,X_1Y_2Z_3 + c_1a_2\,Z_1Y_2X_3 + c_1b_2\,Z_1X_2Y_3\bigr).
\end{align}
Since the each Pauli string is linearly independent in the $8\times8$ operator basis, e.g. term $Y_1Z_2X_3$ cannot be canceled out by any other terms, contradictory to the fact that $[P_1,P_2]=0$. Hence, vanishing of $[P_1,P_2]$ requires
\begin{align}
a_1b_2 = a_1c_2 = b_1c_2 = b_1a_2 = c_1a_2 = c_1b_2 = 0.
\end{align}
Since $P_1,P_2\neq0$, at least $a_1,a_2$ is nonzero, leading to $b_1 = b_2=c_1=c_2=0$. 
\end{proof}

Using \Cref{lemma:hamiltonian_exponential} and \Cref{lemma:xx_rotation} above, it is straightforward to prove \Cref{thm:commutation}. We see that $\|P_1\|\le\|a_1X_1X_2I_3\|+\|b_1Y_1Y_2I_3\|+\|c_1Z_1Z_2I_3\|\le|a_1|+|b_1|+|c_1|\le\frac{3}{2}$, where $\|\cdot\|$ is the operator norm. Hence, eigenvalues of $P_1$ are in range of $[-2,2)$. Same as the eigenvalues of $P_2$. Now if $[e^{-i\frac{\pi}{2}P_1},e^{-i\frac{\pi}{2}P_2}]=0$, then we have that $[P_1,P_2]=0$ according to \Cref{lemma:hamiltonian_exponential}, and thus $b_1=b_2=c_1=c_2=0$ according to \Cref{lemma:xx_rotation}, which proves the \textit{only if} direction.

\end{document}